\documentclass[12pt, reqno]{amsart}
\usepackage{amssymb}
\usepackage{amsmath, mathrsfs}
\allowdisplaybreaks[1]
\usepackage{graphicx,color}
\usepackage{wrapfig,framed}

\usepackage{extpfeil}

\usepackage{color}

\usepackage[height=22.7cm, width=17cm, hmarginratio={1:1}]{geometry}
\usepackage{hyperref}

\usepackage{mathtools}

\usepackage{mathdots}

\usepackage{lscape}

\usepackage{makecell}

\setlength{\headsep}{1.05cm}
\parskip 5pt

\theoremstyle{plain}
\newtheorem{theorem}{Theorem}
\newtheorem{prop}{Proposition}
\newtheorem{lemma}{Lemma}
\newtheorem{cor}{Corollary}
\newtheorem{cnj}{Conjecture}
\newtheorem*{theorem'}{Theorem\ 1'}
\theoremstyle{definition}

\newtheorem*{remark}{Remark}

\newcommand{\beq}{\begin{equation}}
\newcommand{\eeq}{\end{equation}}
\newcommand{\nn}{\nonumber}

\newcommand{\QQ}{{\mathbb Q}}

\newcommand{\CC}{{\mathbb C}}

\newcommand{\bt}{{\bf t}}
\newcommand{\e}{\epsilon}
\newcommand{\p}{\partial}

\newcommand{\ad}{{\rm ad}}
\newcommand{\Ker}{{\rm Ker}}
\newcommand{\diag}{{\rm diag}}
\newcommand{\TT}{{\mathcal{T}}}
\newcommand{\Tr}{{\rm Tr}}

\newcommand{\A}{\mathscr{A}}
\newcommand{\B}{\mathscr{B}}
\newcommand{\C}{\mathscr{C}}
\newcommand{\D}{\mathscr{D}}
\newcommand{\LLL}{\mathscr{L}}

\title[GW invariants of $\mathbb{P}^1_{m_1,m_2}$ and TDE]
{On Gromov--Witten invariants of $\mathbb{P}^1$-orbifolds and topological difference equations}
\author{Zhengfei Huang, Di Yang}
\address{School of Mathematical Sciences, University of Science and Technology of China, Hefei 230026, P.R.~China}
\email{huangzf063@mail.ustc.edu.cn, diyang@ustc.edu.cn}
\date{}

\begin{document}

\maketitle

\begin{abstract}
Let $(m_1, m_2)$ be a pair of positive integers. Denote by $\mathbb{P}^1$ the complex projective line, and 
 by $\mathbb{P}^1_{m_1,m_2}$ the orbifold complex projective line obtained from $\mathbb{P}^1$ by adding
$\mathbb{Z}_{m_1}$ and $\mathbb{Z}_{m_2}$ orbifold points.
In this paper we introduce a matrix linear difference equation, prove existence and uniqueness of its formal 
Puiseux-series solutions, 
 and use them to give conjectural formulas for $k$-point ($k\ge2$) functions of Gromov--Witten invariants of 
$\mathbb{P}^1_{m_1,m_2}$. Explicit expressions of the unique solutions are also obtained. 
We carry out concrete computations of the first few invariants by using the conjectural formulas.
For the case when one of $m_1,m_2$ equals~1, we prove validity of the conjectural formulas.
\end{abstract}

\tableofcontents

\section{Introduction}\label{intro}
Let $(m_1, m_2)$ be a pair of positive integers, and $\mathbb{P}^1$ the complex projective line. 
Denote by $\mathbb{P}^1_{m_1,m_2}$ the orbifold complex projective line obtained from $\mathbb{P}^1$ by adding
$\mathbb{Z}_{m_1}$ and $\mathbb{Z}_{m_2}$ orbifold points. 
In this paper we will propose a conjectural formula for cetain $k$-point generating series  
of the Gromov--Witten (GW) invariants of $\mathbb{P}^1_{m_1,m_2}$. 

In order to state the conjectural formula we first recall some terminologies about GW invariants of $\mathbb{P}^1_{m_1,m_2}$.
Recall that the orbifold cohomology of $\mathbb{P}^1_{m_1,m_2}$ is given by
$$
H_{\rm orb}(\mathbb{P}^1_{m_1,m_2})=H(I\mathbb{P}^1_{m_1,m_2})=H^0(\mathbb{P}^1_{m_1,m_2})\oplus H^2(\mathbb{P}^1_{m_1,m_2})\oplus\bigoplus_{i=1}^2\bigoplus_{j=1}^{m_1-1}H^0(B\mu_{m_i}(j)),
$$
where $I\mathbb{P}^1_{m_1,m_2}$ is the inertia orbifold of $\mathbb{P}^1_{m_1,m_2}$ and 
 $B\mu_{m_i}(j)\cong B\mu_{m_i}$ is the classifying stack of the group of $m_i$th roots of unity. 
The orbifold cohomology $H_{\rm orb}(\mathbb{P}^1_{m_1,m_2})$ carries the orbifold Poincar\'e paring 
$\langle\,,\rangle^{\mathbb{P}^1_{m_1,m_2}}$, which is non-degenerate. 
Fix a basis $(\phi_{a})_{a=0,\dots,l-1}$ of $H_{\rm orb}(\mathbb{P}^1_{m_1,m_2})$, homogenous with respect to the orbifold degree, 
as follows: $\phi_0=1\in H^0(\mathbb{P}^1_{m_1,m_2})$, $\phi_{m_1}={\rm [pt]} \in H^2(\mathbb{P}^1_{m_1,m_2})$, 
$\phi_a=1 \in H^0(B\mu_{m_1}(a))$ for $a=1,\dots,m_1-1$, and 
$\phi_a=1 \in H^0(B\mu_{m_2}(l-a))$ for $a=m_1+1,\dots,l-1$. 
Here and below,
$ l:=m_1+m_2$. 
The products between elements in this basis under $\langle\,,\rangle^{\mathbb{P}^1_{m_1,m_2}}$ satisfy that 
\begin{align}
&\langle\phi_0,\phi_{m_1}\rangle^{\mathbb{P}^1_{m_1,m_2}}=\langle\phi_{m_1}, \phi_0\rangle^{\mathbb{P}^1_{m_1,m_2}}=1,
\quad \langle\phi_a,\phi_{m_1-a}\rangle^{\mathbb{P}^1_{m_1,m_2}}=\frac1{m_1} \;\;(a=1,\dots,m_1-1),\\ 
&
\langle\phi_a,\phi_{l+m_1-a}\rangle^{\mathbb{P}^1_{m_1,m_2}}=\frac1{m_2}\;\;(a=m_1+1,\dots,l-1),
\end{align}
and vanish otherwise.
The orbifold degree of $\phi_a$, denoted as $2q_a$, is given by
\beq
q_a= \left\{\begin{array}{ll} 
\frac{a}{m_1}, & a=0,\dots,m_1, \\
\\
 \frac{l-a}{m_2}, & a=m_1+1,\dots,l-1.
\end{array}\right. 
\eeq
For more details about the orbifold cohomology of $\mathbb{P}^1_{m_1,m_2}$ see~\cite{AGV, AGV2, CR, CR2, MT}.

Let $\overline{\mathcal{M}}_{g,k}(\mathbb{P}^1_{m_1,m_2},d)$ be the moduli stack of orbifold stable maps of degree $d$ 
 from algebraic curves of genus $g$ with $k$ distinct marked points
 to $\mathbb{P}^1_{m_1,m_2}$.
Let $\mathcal{L}_i$ be the $i$th tautological line bundle on $\overline{\mathcal{M}}_{g,k}(\mathbb{P}^1_{m_1,m_2},d)$, and 
 $\psi_i:= c_1(\mathcal{L}_i),$ $i=1,\dots,k$. 
Denote by ${\rm ev}_i :   \overline{\mathcal{M}}_{g,k}(\mathbb{P}^1_{m_1,m_2},d)\rightarrow I\mathbb{P}^1_{m_1,m_2}$ the $i$th evaluation map.
The genus $g$ and degree $d$ GW invariants
of $\mathbb{P}^1_{m_1,m_2}$ are integrals of the form
\beq\label{corr-gd}
\int_{\bigl[\overline{\mathcal{M}}_{g,k}(\mathbb{P}^1_{m_1,m_2}, \,d)\bigr]^{\rm virt}} 
{\rm ev}_1^*(\phi_{a_1})  \cdots  {\rm ev}_k^*(\phi_{a_k}) \, \psi_1^{i_1} \cdots \psi_k^{i_k} \; =: \;  
\langle \tau_{i_1}(\phi_{a_1}) \cdots \tau_{i_k}(\phi_{a_k})  \rangle_{g,d}  \,.
\eeq
Here, $a_1,\dots,a_k\in\{0,\dots,l-1\},\, i_1,\dots,i_k\geq 0$
and $\bigl[\, \overline{\mathcal{M}}_{g,k}(\mathbb{P}^1_{m_1,m_2},d)\bigr]^{\rm virt}$ denotes the 
virtual fundamental class \cite{AGV2, KM}.  
These integrals vanish unless 
the degree--dimension matching holds:
\beq
2g - 2 + \frac{d}{\rho} + k = \sum_{\ell=1}^k i_\ell + \sum_{\ell=1}^k q_{a_\ell},
\eeq
where $\rho:= \frac{m_1 m_2}{m_1+m_2}$.
Clearly, $l=m_1+m_2$ is the dimension of the corresponding Frobenius manifold~\cite{Du2, DZ, MST, MT, Rossi}, 
and $\frac1\rho=\frac1{m_1}+\frac1{m_2}$ is the orbifold Euler characteristic of $\mathbb{P}^1_{m_1,m_2}$.

For $k\ge1$ and $a_1,\dots,a_k=0,\dots,l-1$, define the {\it $k$-point functions of GW invariants of $\mathbb{P}^1_{m_1,m_2}$} by
\beq
F_{a_1,\dots,a_k}(\lambda_1,\dots,\lambda_k;Q;\epsilon):=
\sum_{i_1,\dots,i_k\ge0}\prod_{j=1}^k\frac{ Q^{(1-q_{a_j})\rho}\e^{q_{a_j}}q_{a_j,i_j}}{\lambda_j^{i_j+q_{a_j}+1}}\sum_{g,\,d\ge0}\e^{2g-2}Q^d\langle \tau_{i_1}(\phi_{a_1}) \cdots \tau_{i_k}(\phi_{a_k})  \rangle_{g,d} ,
\eeq
where 
$q_{a,i}:=(q_a)_{i+1}(m_1\delta_{a<m_1}+\delta_{a,m_1}+m_2\delta_{a>m_1})$,
with $(q_a)_{m}$ being the increasing Pochhammer symbol, i.e., 
$(q_a)_{m}:=q_a(q_a+1)\cdots(q_a+m-1)$.

In studying GW 
invariants of~$\mathbb{P}^1$, the Toda lattice hierarchy 
and the corresponding topological recursion, 
the following linear difference equation was introduced~\cite{DYZ, Marchal} (cf.~\cite{DY1}):
\beq\label{TEp1}
M(z-1, s)\begin{pmatrix}z-\frac12&-s\\s&0\end{pmatrix} = \begin{pmatrix}z-\frac12&-s\\s&0\end{pmatrix} M(z, s),
\eeq
which is called in~\cite{DYZ} the {\it topological difference equation}. It was proved in~\cite{DYZ} (cf.~\cite{DY2, Marchal}) that 
there exists a unique formal solution of equation~\eqref{TEp1} 
satisfying a certain initial condition, and that this unique solution has the following explicit expression:
\beq\label{MApril8}
M(z, s)=\begin{pmatrix}1+\alpha&P_1-P_2\\ P_1+P_2&-\alpha\end{pmatrix},
\eeq
where $\alpha=\alpha(z, s)$, $P_1=P_1(z,s)$, $P_2=P_2(z,s)\in\QQ[s][[z^{-1}]]$ are given by
\begin{align}
&\alpha(z,s)=2\sum_{j=0}^\infty\frac1{z^{2j+2}}\sum_{i=0}^js^{2i+2}\frac{1}{i!(i+1)!}\sum_{\ell=0}^i(-1)^\ell(i-\ell+\frac12)^{2j+1}\binom{2i+1}{\ell},\\
&P_1(z,s)=\sum_{j=0}^\infty\frac1{z^{2j+1}}\sum_{i=0}^js^{2i+1}\frac{1}{i!^2}\sum_{\ell=0}^i(-1)^\ell(i-\ell+\frac12)^{2j}\biggl(\binom{2i}{\ell}-\binom{2i}{\ell-1}\biggr), \label{betaexp}\\
&P_2(z,s)=-\frac12\sum_{j=0}^\infty\frac1{z^{2j+2}}\sum_{i=0}^j s^{2i+1}\frac{2i+1}{i!^2}\sum_{\ell=0}^i(-1)^\ell(i-\ell+\frac12)^{2j}\biggl(\binom{2i}{\ell}-\binom{2i}{\ell-1}\biggr). \label{gammexp}
\end{align}
Moreover, the $k$-point function with $k\ge2$ has the expression:
\begin{align}
&F(\lambda_1,\dots,\lambda_k;Q;\e)=-\frac1 k\sum_{\sigma\in S_k}\frac{\Tr \, M(\frac{\lambda_{\sigma_1}}{\e},\frac{Q^{1/2}}\e)\cdots M(\frac{\lambda_{\sigma_k}}{\e},\frac{Q^{1/2}}\e)}{\prod_{i=1}^k (\lambda_{\sigma(i)}-\lambda_{\sigma(i+1)})} 
- \delta_{k,2}\frac1{(\lambda_1-\lambda_2)^2}. \label{Fkp1}
\end{align}
Here, the short notation $F$ means $F_{1,\dots,1}$, $S_k$ denotes the symmetric group and it is understood that $\sigma(k+1)=\sigma(1)$.
Identity~\eqref{Fkp1} with $M$ given by~\eqref{MApril8}--\eqref{gammexp} was 
 conjectured in~\cite{DY2} and proved in~\cite{DYZ, Marchal}.

It was suggested in~\cite{DY2} that the above formulas~\eqref{MApril8}--\eqref{Fkp1} could be generalized 
to GW invariants of $\mathbb{P}^1$-orbifolds~\cite{IST, MST, Rossi}. 
In this paper we will achieve such a generalization (see Conjecture~\ref{cnj1} and Theorem~\ref{thmexpM} below) for the $A$-series (cf.~\cite{DZ, MST, Rossi}). 

We call the following linear equation
\beq\label{TE}
M(z-1, s)W(z, s)=W(z, s)M(z,s)
\eeq
for an $l\times l$ matrix-valued function $M(z, s)$ 
 the {\it topological difference equation of $(m_1,m_2)$-type}, for short the {\it topological difference equation (TDE)},
 where 
\beq\label{Q}
W(z , s) =  (z-\frac12) e_{1,m_1} - s e_{1,l} + s\sum_{i=2}^le_{i,i-1}.
\eeq
Here $e_{i,j}$ is the matrix (of according size, here $l\times l$) with the $(i,j)$-entry being~$1$ and others $0$.
For the case when $m_1=m_2=1$,  it is easy to see that equation~\eqref{TE} indeeds coincides with~\eqref{TEp1}. 
The motivation of the above definition~\eqref{TE} also comes from the topological differential equations introduced in~\cite{BDY2} and from the 
matrix-resolvents obtained in~\cite{FYZ} for the bigraded Toda hierarchy of $(m_1,1)$-type. 

Introduce some notations: 
\beq 
K_a:=
\left\{\begin{array}{ll}
\sum_{j=1}^{a} e_{j,m_1-a+j}, &a=1,\dots,m_1, \\
\\
-\sum_{j=1}^{l-a} e_{a+j,m_1+j},&a=m_1+1,\dots,l-1,
\end{array}\right.   
\eeq
As a generalization of \cite[Proposition~1]{DYZ} (see also~\cite{BDY2}), we will prove in Section~\ref{proof-thm1} the following
\begin{theorem}\label{prop1}
There exist unique formal solutions $M_a(z,s)$ in $z^{1-q_a}{\rm Mat}(l\times l,\mathbb{C}(s)((z^{-1})))$, $a=1,\dots, l-1$, to the TDE~\eqref{TE} such that  
\beq\label{form}
M_a(z,s) = z^{1-q_a}(K_a + O(z^{-1}) ).
\eeq
\end{theorem}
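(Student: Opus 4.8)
\emph{Sketch of the approach.} The plan is to construct $M_a$ order by order as a descending series in $z$ and to show that this series is uniquely determined by its leading term. Write $M_a(z,s)=z^{1-q_a}\sum_{n\ge0}N_n(s)\,z^{-n}$ with $N_0=K_a$, expand $M_a(z-1,s)=z^{1-q_a}\sum_{n\ge0}\check N_n(s)\,z^{-n}$ by means of the binomial series for $(z-1)^{1-q_a-j}$ (so $\check N_0=N_0$ and $\check N_n=N_n$ plus an explicit linear combination of $N_0,\dots,N_{n-1}$), and substitute into \eqref{TE} after writing $W(z,s)=z\,e_{1,m_1}+B(s)$ with $B(s):=-\tfrac12 e_{1,m_1}-s\,e_{1,l}+s\sum_{i=2}^le_{i,i-1}$. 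Matching the coefficient of $z^{2-q_a}$ yields the leading relation $K_a e_{1,m_1}=e_{1,m_1}K_a$, which one verifies directly from the definition of $K_a$; matching the coefficient of $z^{2-q_a-n}$ for $n\ge1$ and rearranging produces an equation of the form
\[
\ad_{e_{1,m_1}}(N_n)=R_n\bigl(N_0,\dots,N_{n-1}\bigr),
\]
where $\ad_{e_{1,m_1}}(X)=e_{1,m_1}X-Xe_{1,m_1}$ and $R_n$ is an explicit polynomial expression in $B$, in $N_0,\dots,N_{n-1}$ and in the binomial coefficients $\binom{1-q_a-j}{p}$.

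The operator $\ad_{e_{1,m_1}}$ is not invertible. A short computation identifies its image as the matrices supported on the first row and the $m_1$-th column whose $(1,1)$-entry equals minus their $(m_1,m_1)$-entry, a subspace of codimension $(l-1)^2+1$; hence its kernel (the centralizer of $e_{1,m_1}$) has dimension $(l-1)^2+1$. So the $n$-th equation determines the first-row/$m_1$-th-column part of $N_n$ and one linear relation among two of its diagonal entries, subject to the $(l-1)^2+1$ solvability constraints on $R_n$, and it leaves a $\bigl((l-1)^2+1\bigr)$-parameter ambiguity in $N_n$. The mechanism that removes this ambiguity is that it is fixed one step later: with $N_0,\dots,N_{n-1}$ already pinned down and $N_n$ still carrying free parameters, the solvability constraints on $R_{n+1}$ form an inhomogeneous linear system, with as many equations as unknowns, in the free parameters of $N_n$. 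If this system is nonsingular for every $n$, then $N_n$ is uniquely determined and the induction runs, giving simultaneously existence and uniqueness; at the initial step one instead checks, by direct computation with the explicit $K_a$, $B$ and the stated value of $q_a$, that the solvability constraints on $R_1$ hold automatically — this is where the particular shapes of $K_a$ and $q_a$ are used.

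\textbf{The main obstacle} is exactly this nonsingularity, which I would recast as a rigidity statement: any $D(z,s)\in z^{1-q_a}\mathrm{Mat}\bigl(l\times l,\mathbb{C}(s)((z^{-1}))\bigr)$ solving \eqref{TE} whose leading $z$-power is strictly below $z^{1-q_a}$ is zero. (Uniqueness of $M_a$ is this applied to the difference of two solutions; existence then follows because the constraints become consistent.) A natural route to the rigidity is to diagonalize $W(z,s)$ by a formal gauge transformation over the field of Puiseux series in $z^{-1}$. One has $\det W(z,s)=(-s)^l$, a nonzero constant, so $W^{-1}$ is polynomial of degree one in $z$; the characteristic polynomial of $W$ equals $x^l+s^l-(z-\tfrac12)\,g(x,s)$ for an explicit low-degree $g$, whose discriminant in $x$ is not identically zero in $z$, so $W$ is regular semisimple over that field and one checks that no leading-order eigenvalue ratio $\lambda_i(z)/\lambda_j(z)$ with $i\ne j$ equals $1$ — it is either $0$, $\infty$, or a nontrivial root of unity. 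Consequently each off-diagonal entry of the gauge-transformed $D$ satisfies a scalar difference equation $\widetilde D_{ij}(z-1)=\tfrac{\lambda_i(z)}{\lambda_j(z)}\widetilde D_{ij}(z)$ that has only the zero Puiseux solution, while each diagonal entry must be a constant; matching the leading term $z^{1-q_a}K_a$ to the formal monodromy of $W$ then forces that constant to vanish. The delicate part — and essentially the whole content of the theorem — is this bookkeeping of $z$-exponents against the roots of unity entering $W$; the remainder is the routine manipulation of the binomial expansions.
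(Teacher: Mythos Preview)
Your recursion in pure $z$-powers is set up correctly, the dimension count $(l-1)^2+1$ for $\Ker\ad_{e_{1,m_1}}$ is right, and you correctly isolate the nondegeneracy of the step-$(n{+}1)$ solvability system as the whole problem. The gap is in the diagonalization you propose to settle it. The justification you give --- $W(z,s)$ has square-free characteristic polynomial, hence is regular semisimple over the Puiseux field --- only yields the \emph{pointwise} eigenvector matrix $P_0(z)$, and conjugating by $P_0$ does not decouple the difference equation: with $D=P_0\tilde DP_0^{-1}$ and $W(z)P_0(z)=P_0(z)\Lambda(z)$ one finds
\[
\tilde D(z-1)\,\bigl[P_0(z-1)^{-1}P_0(z)\,\Lambda(z)\bigr]=\bigl[P_0(z-1)^{-1}P_0(z)\,\Lambda(z)\bigr]\,\tilde D(z),
\]
and $P_0(z-1)^{-1}P_0(z)\ne I$, so the off-diagonal entries of $\tilde D$ remain coupled at every order and the scalar equations $\tilde D_{ij}(z-1)=(\lambda_i/\lambda_j)\tilde D_{ij}(z)$ are not what results. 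What you actually need is a gauge $P$ with $P(z-1)^{-1}W(z)P(z)$ exactly diagonal; this is a Birkhoff-type formal reduction for difference equations, a theorem in its own right and not implied by regular semisimplicity of $W$. (The paper's genuine diagonalizer is the wave-function matrix $\Psi$ with $\Psi(z-1)=s^{-1}W(z)\Psi(z)$, whose columns carry factors such as $(es/z)^{z/m_1}$ and $\Gamma$-quotients and are \emph{not} Puiseux; the paper invokes $\Psi$ only \emph{after} Theorem~\ref{prop1}, using the uniqueness already proved to identify $M_a=\Psi P_a\Psi^{-1}$.) Even granting a Puiseux reducing gauge, your final clause --- ``matching the leading term to the formal monodromy forces the constant to vanish'' --- still requires knowing the leading $z$-orders of all $l$ basic solutions $Pe_{ii}P^{-1}$ and ruling out cancellations below $z^{1-q_a}$, which you do not carry out.

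The paper avoids all of this by changing the grading. Instead of expanding in powers of $z$ (which forces $\ad_{e_{1,m_1}}$, with its $(l-1)^2+1$-dimensional centralizer, as the recursion operator), it blocks $Y_a=\bigl(\begin{smallmatrix}A&B\\C&D\end{smallmatrix}\bigr)$ into $m_1{\times}m_1$, $m_1{\times}m_2$, $m_2{\times}m_1$, $m_2{\times}m_2$ pieces and uses the principal grading $\deg_{11}e_{i,j}=i-j$, $\deg_{11}z=m_1$ (with analogues on the other blocks). In this grading the leading piece of $G(z,s)=W(z+\tfrac12,s)$ acting on the $A$-block is $G_1=z\,e_{1,m_1}+s\sum_{i=2}^{m_1}e_{i,i-1}$, whose centralizer is only $m_1$-dimensional (spanned by $I,G_1,\dots,G_1^{m_1-1}$); likewise $G_4$ on the $D$-block has $m_2$-dimensional centralizer. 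The recursion becomes $[G_1,A^{[-i-1]}]=\cdots$, $[G_4,D^{[-i]}]=\cdots$, and the next-step solvability constraints reduce to explicit linear systems whose coefficients are $i_0+m_1q_a-m_1$ (for the $A$-block) and $i_0/m_1-1+q_a+\ell/m_2$ (for the $D$-block), manifestly nonzero in the inductive range $i_0\ge m_1$. So the mechanism you describe --- kernel parameters pinned down one step later --- is exactly what happens, but only after replacing $\ad_{e_{1,m_1}}$ by $\ad_{G_1}$; that replacement is the idea your sketch is missing.
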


Let $M_a(z,s), a=1,\dots,l-1$, be the solutions to~\eqref{TE} given in Theorem~\ref{prop1}.
We propose in this paper the following conjecture. 
\begin{cnj}\label{cnj1}
For $k\ge2$ and $a_1,\dots,a_k=1,\dots,l-1$, the  
$k$-point functions of GW invariants of $\mathbb{P}^1_{m_1,m_2}$
have the following expressions:
\begin{align}\label{npoint}
&F_{a_1,\dots,a_k}(\lambda_1,\dots,\lambda_k;Q;\epsilon)
=-\sum_{\sigma\in S_k/C_k}
\frac{\Tr \, M_{a_{\sigma(1)}} \bigl(\frac{\lambda_{\sigma(1)}}{\e},\frac { Q^\rho}\e\bigr)\cdots 
M_{a_{\sigma(k)}}\bigl(\frac{\lambda_{\sigma(k)}}{\e},\frac{ Q^\rho}\e\bigr)}{\prod_{i=1}^k (\lambda_{\sigma(i)}-\lambda_{\sigma(i+1)})}\notag
\\
&-\delta_{k,2}
\frac{\delta_{a_1+a_2, m_1}
(a_1\lambda_1+a_2\lambda_2)+\delta_{a_1,m_1} \delta_{a_2,m_1}m_1\lambda_1\lambda_2 \e
+\delta_{a_1+a_2, m_1+l}((l-a_1)\lambda_1+(l-a_2)\lambda_2)}
{ \lambda_1^{q_{a_1}}\lambda_2^{q_{a_2}} (\lambda_1-\lambda_2)^2 \e },
\end{align}
where we recall that $\rho=m_1m_2/(m_1+m_2)$.
\end{cnj}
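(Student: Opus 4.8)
The plan is to prove \eqref{npoint} in the case that one of $m_1,m_2$ equals $1$ (say $m_2=1$, the $A$-series orbifold $\mathbb{P}^1_{m_1,1}$) by realizing both sides as $k$-point correlation functions of a single tau-function of an integrable hierarchy, and then applying the matrix-resolvent method. The first step is to recall, following \cite{MT, MST}, that the GW invariants of $\mathbb{P}^1_{m_1,1}$ are governed by the bigraded Toda hierarchy of $(m_1,1)$-type (equivalently, by its $A_{m_1}$-type Frobenius manifold of dimension $l=m_1+1$), and that the series assembled in $F_{a_1,\dots,a_k}$ is, after the prescribed rescaling by $Q$, $\e$ and the Pochhammer factors $q_{a_j,i_j}$, built from the two-point correlators of the logarithm of the \emph{topological} tau-function of this hierarchy, specialized at the string point. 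I would make this identification precise so that the left-hand side of \eqref{npoint} becomes the normalized $k$-point function of that tau-function.

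The second step is to invoke the matrix-resolvent formula for the $(m_1,1)$ bigraded Toda hierarchy, whose matrix resolvent was constructed in \cite{FYZ}. For such a hierarchy there is an $l\times l$ matrix resolvent $R(z)$ of the difference Lax operator with the property that the generating series of two-point correlators is a trace involving $R$, and that the $k$-point series for $k\ge 3$ equals the cyclic sum $-\sum_{\sigma\in S_k/C_k}\Tr\bigl(R_{a_{\sigma(1)}}\cdots R_{a_{\sigma(k)}}\bigr)/\prod_i(\lambda_{\sigma(i)}-\lambda_{\sigma(i+1)})$, with explicit rational corrections at $k=2$. This is exactly the structural shape of the right-hand side of \eqref{npoint}, as in \cite{BDY2, DYZ}.

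The crux is the third step: identifying the relevant columns/blocks of the matrix resolvent at the topological point with the matrices $M_a(z,s)$ of Theorem \ref{prop1}. The strategy is to (i) write down the difference Lax matrix of the $(m_1,1)$ hierarchy at the topological (string) solution and verify that, after the substitution $z=\lambda/\e$ and $s=Q^\rho/\e$, it coincides with the matrix $W(z,s)$ of \eqref{Q}; (ii) conclude that the matrix resolvent then satisfies the TDE \eqref{TE} by its defining commutation relation; and (iii) check that its leading asymptotics agree with $z^{1-q_a}(K_a+O(z^{-1}))$ in \eqref{form}, so that the uniqueness in Theorem \ref{prop1} forces the resolvent to be exactly $M_a$. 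Feeding this identification into the matrix-resolvent formula of the second step then yields \eqref{npoint}.

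I expect the main obstacle to be precisely this third step, namely controlling the topological solution explicitly enough to recognize $W(z,s)$ and to pin down the normalization $z^{1-q_a}K_a$; this forces careful bookkeeping of the orbifold degree grading $q_a$, the powers of $Q^{(1-q_a)\rho}$ and $\e^{q_a}$, and the Pochhammer normalizations $q_{a,i}$, all of which must conspire to produce the genus-zero, degree-zero $\delta_{k,2}$ terms in \eqref{npoint} --- these rational corrections encode the Poincar\'e pairing and the primary genus-zero three-point data of $\mathbb{P}^1_{m_1,1}$. For general $(m_1,m_2)$ with both exceeding $1$ the obstruction is more fundamental: a matrix-resolvent theory for the corresponding bigraded Toda-type hierarchy is not yet available, which is why \eqref{npoint} can here only be verified through the explicit low-order computations rather than proved, and thus remains a conjecture as stated in Conjecture \ref{cnj1}.
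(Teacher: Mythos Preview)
Your proposal is correct and follows essentially the same approach as the paper's proof of Theorem~\ref{thm-GW}: identify the GW partition function with a tau-function of the $(m_1,1)$ bigraded Toda hierarchy via~\cite{CL, MT}, invoke the matrix-resolvent $k$-point formula of~\cite{FYZ}, and then match the resolvent at the topological initial data with~$M_a$ using the uniqueness in Theorem~\ref{prop1}. Two small points where the paper differs in detail: (1) the identification $\mathcal{R}^{\rm top}_a \leftrightarrow M_a$ is carried out in the opposite direction (Lemma~\ref{bispRM}): rather than checking that the resolvent has the asymptotics~\eqref{form}, the paper verifies that the rescaled~$M_a$ satisfies the resolvent's defining relations~\eqref{def-MR}--\eqref{degree-MR} together with the power relation~\eqref{power-MR} (Corollary~\ref{property}), and appeals to a separate uniqueness lemma (Lemma~\ref{lemma-power}) on the resolvent side; (2) the $k=2$ correction is not just bookkeeping but is settled by a homogeneity comparison of the two tau-structures, and the passage from $m_2=1$ to $m_1=1$ is done via the symmetry of Corollary~\ref{cor-symm2} rather than by repeating the argument.
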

We note that Conjecture~\ref{cnj1} was proved in~\cite{DYZ} for the case when $m_1=m_2=1$.

It is not difficult to deduce from Conjecture~\ref{cnj1} the following corollary (cf.~\cite{DYZ}).
\begin{cor}[*]
The 1-point function satisfies
\beq\label{1point}
\e\p_\lambda (F_{a_1}(\lambda;Q;\e))
\,=\, \frac{\delta_{a, \,m_1}}{\lambda} \,-\, \frac1{ Q^\rho } \, \frac{\e\p_\lambda}{1-e^{-\e\partial_\lambda}}\Bigl(M_a\Bigl(\frac{\lambda}{\e}-1,\frac{ Q^\rho }\e\Bigr)_{m_1+1,1}\Bigr).
\eeq
\end{cor}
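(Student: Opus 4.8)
The plan is to deduce \eqref{1point} from the $k=2$ case of Conjecture~\ref{cnj1}, taking the second insertion to be the point class $\phi_{m_1}$, which is a divisor on the orbifold curve. Specializing \eqref{npoint} to $k=2$ and $a_2=m_1$: the quotient $S_2/C_2$ is trivial, so the trace sum is the single term $\Tr\big[M_a(\tfrac{\lambda_1}{\epsilon},s)M_{m_1}(\tfrac{\lambda_2}{\epsilon},s)\big]/(\lambda_1-\lambda_2)^2$ with $s=Q^\rho/\epsilon$, while in the $\delta_{k,2}$-correction the deltas $\delta_{a+m_1,m_1}=\delta_{a,0}$ and $\delta_{a+m_1,m_1+l}=\delta_{a,l}$ vanish because $1\le a\le l-1$; one is left with $(\lambda_1-\lambda_2)^2F_{a,m_1}=\Tr\big[M_a(\tfrac{\lambda_1}{\epsilon},s)M_{m_1}(\tfrac{\lambda_2}{\epsilon},s)\big]-m_1\delta_{a,m_1}$ (using $q_a=q_{m_1}=1$ when $a=m_1$). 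I would then compare the $\lambda_2^0$-coefficients of both sides. Since $M_{m_1}(w,s)=K_{m_1}+O(w^{-1})$ with $K_{m_1}=\diag(1,\dots,1,0,\dots,0)$, cf.~\eqref{form}, the right-hand side contributes $\Tr\big[M_a(\tfrac{\lambda_1}{\epsilon},s)K_{m_1}\big]-m_1\delta_{a,m_1}$; the left-hand side contributes the coefficient of $\lambda_2^{-2}$ in $F_{a,m_1}$, i.e. (as $q_{m_1}=1$ forces that expansion to start at $\lambda_2^{-2}$) the $i_2=0$ part $\epsilon\sum_{i_1}\tfrac{Q^{(1-q_a)\rho}\epsilon^{q_a}q_{a,i_1}}{\lambda_1^{\,i_1+q_a+1}}\sum_{g,d}\epsilon^{2g-2}Q^d\langle\tau_{i_1}(\phi_a)\tau_0(\phi_{m_1})\rangle_{g,d}$.

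The next ingredient is the divisor equation. Because $\mathbb{P}^1_{m_1,m_2}$ has orbifold dimension one, $H^{>2}_{\rm orb}=0$, and the Chen--Ruan product of $\phi_{m_1}\in H^2$ with a twisted class $\phi_a$ lands in $H^2$ of a zero-dimensional twisted sector; hence $\phi_{m_1}\cup\phi_a=0$ for all $a=1,\dots,l-1$, and the divisor equation gives $\langle\tau_{i_1}(\phi_a)\tau_0(\phi_{m_1})\rangle_{g,d}=d\,\langle\tau_{i_1}(\phi_a)\rangle_{g,d}$ on the stable range (the handful of unstable/degree-zero cases to be checked by hand). Substituting, the $i_2=0$ part becomes $\epsilon\big(Q\partial_Q-(1-q_a)\rho\big)F_a$, so that $\epsilon\big(Q\partial_Q-(1-q_a)\rho\big)F_a=\Tr\big[M_a(\tfrac{\lambda_1}{\epsilon},s)K_{m_1}\big]-m_1\delta_{a,m_1}$. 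Passing from the $Q\partial_Q$-derivative to the $\partial_\lambda$-derivative in \eqref{1point} is then done via the degree--dimension matching, equivalently the quasi-homogeneity $F_a(c\lambda;c^{1/\rho}Q;c\epsilon)=c^{-1}F_a(\lambda;Q;\epsilon)$, together with the dilaton equation to absorb the resulting $\epsilon\partial_\epsilon$-term, while the constant $m_1\delta_{a,m_1}$ combined with the unstable/low-degree contributions produces the term $\delta_{a,m_1}/\lambda$.

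It remains to recast $\Tr\big[M_a(z,s)K_{m_1}\big]$ into the form appearing in \eqref{1point}, and this is where the topological difference equation is used and, I expect, where the bulk of the work lies. Writing \eqref{TE} as $M_a(z-1,s)=W(z,s)M_a(z,s)W(z,s)^{-1}$ (note $\det W=(-1)^l s^l$) and using trace cyclicity, one finds $\Tr\big[M_a(z-1,s)K_{m_1}\big]-\Tr\big[M_a(z,s)K_{m_1}\big]=\Tr\big[M_a(z,s)\,(W^{-1}K_{m_1}W-K_{m_1})\big]$; a direct computation with the explicit inverse $W^{-1}=\tfrac{1}{s}\sum_{i=1}^{l-1}e_{i,i+1}-\tfrac{1}{s}e_{l,1}+\tfrac{z-1/2}{s^2}e_{l,m_1+1}$ gives $W^{-1}K_{m_1}W-K_{m_1}=e_{l,l}-e_{m_1,m_1}-\tfrac{z-1/2}{s}e_{l,m_1}$, and the same computation yields entrywise shift-identities such as $M_a(z,s)_{m_1,l}=-M_a(z-1,s)_{m_1+1,1}$ and, for $2\le i\le m_1$, $M_a(z-1,s)_{i,i}=M_a(z,s)_{i-1,i-1}$. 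Using these to iterate the difference relation and to re-express the entries $(l,l),(m_1,m_1),(m_1,l)$ of $M_a$ in terms of the single entry $M_a(\,\cdot\,-1,s)_{m_1+1,1}$, the right-hand side is turned into a multiple of $\tfrac{1}{1-e^{-\epsilon\partial_\lambda}}\big(M_a(\tfrac{\lambda}{\epsilon}-1,s)_{m_1+1,1}\big)$; since $\epsilon s=Q^\rho$ the normalization is $-1/Q^\rho$, and applying $\epsilon\partial_\lambda$ and combining with the identity of the previous paragraph yields the right-hand side of \eqref{1point}. The main obstacle is exactly this last reorganization --- controlling the telescoping of the $e_{l,l}-e_{m_1,m_1}$ part, whose governing shift-identities change form at the boundary indices $1,\,m_1{+}1,\,l$, and checking that the $z-1/2$ factor is absorbed into the claimed form --- together with the routine but delicate bookkeeping of the prefactors $Q^{(1-q_a)\rho},\epsilon^{q_a}$, the Pochhammer weights $q_{a,i}$, and the unstable exceptions to the divisor equation.
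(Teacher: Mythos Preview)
The paper gives no detailed argument here—only the remark ``not difficult to deduce from Conjecture~\ref{cnj1}, cf.~\cite{DYZ}''—so there is nothing to compare your route against directly; I can only assess your plan on its own merits.

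Your overall strategy is natural: specialize the $k=2$ case of Conjecture~\ref{cnj1} to $a_2=m_1$, read off the leading $\lambda_2^{-2}$ coefficient, invoke the divisor axiom to recognise this as $\epsilon\big(Q\partial_Q-(1-q_a)\rho\big)F_a$, and then use the TDE to rewrite the resulting trace in terms of the single entry $(M_a)_{m_1+1,1}$. Your computation of $W^{-1}K_{m_1}W-K_{m_1}=e_{l,l}-e_{m_1,m_1}-\tfrac{z-1/2}{s}e_{l,m_1}$ and of the induced entrywise shift identities is correct, and you are right that this reorganisation is where the real bookkeeping lies. One small slip: the trace term in Conjecture~\ref{cnj1} carries a minus sign, so after clearing $(\lambda_1-\lambda_2)^2$ the identity should read $-\Tr[M_aM_{m_1}]-m_1\delta_{a,m_1}$, not $+\Tr$.

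The genuine gap is the passage from $Q\partial_Q F_a$ to $\partial_\lambda F_a$. Quasi-homogeneity gives
\[
\lambda\partial_\lambda F_a+\tfrac{1}{\rho}Q\partial_Q F_a+\epsilon\partial_\epsilon F_a+F_a=0,
\]
so after the divisor step you still need both $F_a$ and $\epsilon\partial_\epsilon F_a$. Your proposal to ``use the dilaton equation to absorb the $\epsilon\partial_\epsilon$-term'' does not close the argument: the dilaton axiom equates $(2g-2+n)\langle\cdots\rangle$ with a correlator carrying an extra $\tau_1(\phi_0)$, i.e.\ a two-point function with an $a=0$ insertion, and Conjecture~\ref{cnj1} covers only $a_1,\dots,a_k\in\{1,\dots,l-1\}$. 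So dilaton does not supply the missing piece from within the conjecture.

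A repair that stays inside your framework: the divisor identity already determines $F_a$ itself, not merely $Q\partial_Q F_a$. Integrating in $Q$ recovers $F_a$ up to its $Q$-independent part, and the exponent $(1-q_a)\rho+d$ vanishes only for $a=m_1$, $d=0$; that degree-zero piece is the classically known series responsible for the $\delta_{a,m_1}/\lambda$ term. With $F_a$ thus in hand you may apply $\epsilon\partial_\lambda$ directly—no appeal to dilaton is needed. Alternatively, in the $m_1=m_2=1$ case treated in~\cite{DYZ} the one-point relation is obtained directly from the tau-function identity analogous to~\eqref{70} rather than via the two-point formula; that is cleaner, though strictly it uses input beyond Conjecture~\ref{cnj1} alone.
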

\noindent Here and below a statement marked with “$\,^*\,$” means that it is a consequence of Conjecture~\ref{cnj1}.

Denote by $B_m(\ell,x)$ the generalized Bernoulli polynomials, which are defined by 
$$
\bigl(\frac{t}{e^t-1}\bigr)^\ell e^{xt}=:\sum_{m\ge0}B_m(\ell,x)\frac{t^m}{m!},
$$
with $B_{m}(1,0)=B_{m}$ being the Bernoulli numbers.
We will prove in Section~\ref{explict} the following theorem. 
\begin{theorem}\label{thmexpM}
The entries of the matrix $M_{a}(z,s)$ have the explicit expressions:
\beq\label{expression}
(M_{a}(z,s))_{ij}=\left\{\begin{array}{ll}
g_a(z,i,j),&j\le m_1,\\
-g_a(z,i,j-m_1-m_2),&j> m_1,\\
\end{array}\right.
\eeq 
where, for $a=1,\dots,m_1$, 
\begin{align}
&g_a(z,i,j)= 
z^{-\frac{a}{m_1}}
\sum_{\ell_1\ge-1}\frac{m_1^{\ell_1}}{z^{\ell_1}}
\sum_{\ell_2=-1}^{\ell_1}
\delta_{m_2 | (m_1\ell_2+a+j-i)} 
\frac{s^{\frac{l\ell_2+a+j-i}{m_2}+1}}{m_1^{\ell_2}m_2^{\frac{m_1\ell_2+a+j-i}{m_2}}} \nn\\
&\times\sum_{\ell_3=0}^{\frac{m_1\ell_2+a+j-i}{m_2}}
\frac{(-1)^{\ell_3} \binom{-\ell_2-\frac{a}{m_1}}{\ell_1-\ell_2}}{\ell_3!(\frac{m_1\ell_2+a+j-i}{m_2}-\ell_3)!}
B_{\ell_1-\ell_2}\Bigl(1-\ell_2-\frac{a}{m_1},\frac{i-\frac12-a+m_2\ell_3}{m_1}-\ell_2\Bigr), \label{expression1}
\end{align}
and, for $a=m_1+1,\dots,l-1$,
\begin{align}
& g_a(z,i,j)= 
z^{-\frac{l-a}{m_2}}\sum_{\ell_1\ge-1}\frac{m_2^{\ell_1}}{z^{\ell_1}}\sum_{\ell_2=-1}^{\ell_1}
\delta_{m_1| (m_2\ell_2+i-j-a)}
\frac{s^{\frac{l\ell_2+i-j-a}{m_1}+1}}{m_1^{\frac{m_2\ell_2+i-j-a}{m_1}}m_2^{\ell_2}} \nn\\
&\times\sum_{\ell_3=0}^{\frac{m_2\ell_2+i-j-a}{m_1}}
\frac{(-1)^{\ell_3}\binom{-\ell_2-\frac{l-a}{m_2}}{\ell_1-\ell_2}}{\ell_3!(\frac{m_2\ell_2+i-j-a}{m_1}-\ell_3)!}
B_{\ell_1-\ell_2}\Bigl(1-\ell_2-\frac{l-a}{m_2},\frac{j-\frac12+a-m_1+m_1\ell_3}{m_2}-\ell_2\Bigr).\label{expression2}
\end{align} 
\end{theorem}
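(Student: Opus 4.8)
The plan is to determine the entries of $M_a(z,s)$ directly from the recursion encoded in the TDE~\eqref{TE}, which by Theorem~\ref{prop1} has a unique solution in $z^{1-q_a}{\rm Mat}(l\times l,\mathbb{C}(s)((z^{-1})))$ with leading term $z^{1-q_a}K_a$. First I would write $M_a(z,s)=z^{1-q_a}\sum_{r\ge0}M_a^{(r)}(s)z^{-r}$ and substitute into $M_a(z-1,s)W(z,s)=W(z,s)M_a(z,s)$, expanding $(z-1)^{1-q_a-r}$ by the binomial series. Using the explicit sparse form of $W$ in~\eqref{Q} — multiplication by $W$ on the right cyclically shifts columns (with the weights $s$, and the special entry $z-\tfrac12$ in column $m_1$ coming from row $1$), and multiplication by $W$ on the left shifts rows — one obtains a triangular recursion that expresses each coefficient $M_a^{(r)}$ in terms of the previous ones. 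The point is that this recursion is \emph{exactly solvable}: peeling off one row/column at a time reduces it to a scalar recursion whose solution is a single Pochhammer/Bernoulli-type sum.

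The cleanest route is then to \textbf{guess and verify}: take the candidate expressions~\eqref{expression}–\eqref{expression2}, plug them into~\eqref{TE}, and check the identity coefficient-by-coefficient in $z$. Concretely, I would fix $a$ (say $1\le a\le m_1$, the other range being symmetric under $m_1\leftrightarrow m_2$, $a\leftrightarrow l-a$, up to the sign in~\eqref{expression}), and verify that for each pair $(i,j)$ the difference $\bigl(M_a(z-1,s)W(z,s)-W(z,s)M_a(z,s)\bigr)_{ij}=0$. Because $W$ is so sparse, each such entry equation involves only two or three of the $g_a(z,\cdot,\cdot)$ together with the shift $z\mapsto z-1$; after substituting the triple sum over $\ell_1,\ell_2,\ell_3$, the $z$-dependence is governed entirely by the generalized Bernoulli polynomials, and the needed cancellation is precisely the defining functional equation of $B_m(\ell,x)$, namely the relation coming from $\bigl(\tfrac{t}{e^t-1}\bigr)^\ell e^{xt}$ — in finite-difference form, $\sum_{p}\binom{\ell}{p}(-1)^p$-type identities and $B_m(\ell,x+1)-B_m(\ell,x)$ reductions. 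Matching the leading term $z^{1-q_a}K_a$ in~\eqref{form} fixes the normalization: one checks that the $\ell_1=\ell_2=-1$, $\ell_3=0$ term of~\eqref{expression1} reproduces $\delta_{i-1,\,m_1-a+j-m_1}$ (i.e.\ the entry of $K_a$) with coefficient $1$, using $B_0(\ell,x)=1$. Uniqueness from Theorem~\ref{prop1} then upgrades the verified identity to a proof.

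I expect the \textbf{main obstacle} to be bookkeeping rather than conceptual: organizing the three nested summation indices $\ell_1,\ell_2,\ell_3$ and the divisibility constraints $\delta_{m_2\mid(m_1\ell_2+a+j-i)}$ so that the row-shift and column-shift actions of $W$ line up the same monomials in $s$ and $z$, and then recognizing the resulting coefficient identity as a known relation for $B_m(\ell,x)$. A secondary subtlety is the interface between the $j\le m_1$ and $j>m_1$ blocks in~\eqref{expression}: the term $-se_{1,l}$ in $W$ couples column $l$ back to row $1$, so the sign flip in~\eqref{expression} must be exactly the one that makes this wrap-around consistent, and the special column $m_1$ (carrying the $z-\tfrac12$ entry) must be checked separately since there the $z$-shift interacts with the Bernoulli argument $\frac{i-\frac12-a+m_2\ell_3}{m_1}$. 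Once these alignments are set up, the verification is a finite check for each residue class of $(i,j)$ modulo the relevant $m_i$, and the $\ell_1\leftrightarrow\ell_1+1$ telescoping handles the infinite tail in $z^{-1}$.
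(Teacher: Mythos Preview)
Your plan is sound but takes a genuinely different route from the paper. Instead of plugging the Bernoulli formulas directly into the TDE and verifying, the paper first constructs explicit quasi-wave functions $\psi_A,\psi_B$ solving the scalar equation~\eqref{wave-eq} (Proposition~\ref{proppsiAB}), assembles them into a fundamental matrix $\Psi(z,s)$ with $\Psi(z-1,s)=s^{-1}W(z,s)\Psi(z,s)$ (Lemma~\ref{Psiequation}), computes $\Psi^{-1}$ explicitly, and then shows in Proposition~\ref{thm-form} that $M_a(z,s)=\Psi(z,s)P_a(s)\Psi(z,s)^{-1}$ for a diagonal $P_a(s)$. This dressing formula yields the entries of $M_a$ as finite sums of ratios $\Gamma(\cdot)/\Gamma(\cdot)$ (formulas~\eqref{expression-gamma1}--\eqref{expression-gamma2}); Theorem~\ref{thmexpM} then follows in one line from the classical asymptotic $\frac{\Gamma(z+a)}{\Gamma(z+b)}\sim z^{a-b}\sum_{\ell\ge0}\binom{a-b}{\ell}B_\ell(a-b+1,a)\,z^{-\ell}$. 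Your direct verification avoids the wave-function machinery, which is appealing, but the Bernoulli identities you need to push the shift $z\mapsto z-1$ through the triple sum and to handle the special $(z-\tfrac12)$-column are precisely the coefficient form of that same $\Gamma$-ratio expansion, so the combinatorial load is comparable rather than lighter. The paper's approach has the bonus of producing the $\Gamma$-form~\eqref{expression-gamma1}--\eqref{expression-gamma2} along the way, which feeds directly into the one-point computations (Proposition~\ref{gf-1point}) and into Corollary~\ref{property}. One small slip in your leading-term check: the $\ell_1=\ell_2=-1$, $\ell_3=0$ term of~\eqref{expression1} gives $\delta_{j,\,m_1-a+i}$ with $1\le i\le a$, matching $(K_a)_{ij}$; your expression $\delta_{i-1,\,m_1-a+j-m_1}$ does not.
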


For the case when $m_1=m_2=1$, 
Theorem~\ref{thmexpM} was proved in~\cite{DYZ} (our proof here will be slightly different from~\cite{DYZ}), and 
according to~\cite{DY1, DYZ} or~\cite{Marchal} this theorem 
 leads to a proof of the conjecture in~\cite{DY2}.

Consider the following generating series~$\mathcal{F}$ of GW invariants of $\mathbb{P}^1_{m_1,m_2}$:
\beq\label{22}
\mathcal{F}=\mathcal{F}(\mathbf{T};Q;\e):=
\sum_{k\ge0}\frac1{k!}
\sum_{\substack{0\le a_1,\dots,a_k\le l-1\\i_1,\dots,i_k\ge0}}
T^{a_1}_{i_1}\dots T^{a_k}_{i_k}
\sum_{g\ge0}\sum_{d\ge0}
\e^{2g-2}
\langle\tau_{i_1}(\phi_{\alpha_1})\dots\tau_{i_k}(\phi_{a_k})\rangle_{g,d},
\eeq
where $\mathbf{T}=(T^a_j)_{0\le a\le l-1,\,j\ge0}$. 
This generating series is often called the free energy, which satisfies the following string equation 
\beq\label{string}
\sum_{a=0}^{m_1}\sum_{j\ge1}T^a_j\frac{\partial \mathcal{F}}{\partial T^a_{j-1}}+\frac{1}{2\e^2}\biggl(\sum_{a=0}^{m_1}T^a_0T^{m_1-a}_0+\sum_{a=m_1+1}^{l-1}T^a_0T^{l+m_1-a}_0\biggr)=\frac{\partial \mathcal{F}}{\partial T^{0}_0}.
\eeq
The exponential $e^{\mathcal{F}}=:Z$ is called the partition function of GW invariants of $\mathbb{P}^1_{m_1,m_2}$.

Let us say more about the motivation of Conjecture~\ref{cnj1}, and give a proof of some part of it. 
In~\cite{MT} Milanov--Tseng constructed certain integrable systems written as Hirota type bilinear equations, 
and proved that the partition function~$Z$ 
satisfies these equations. Milanov--Tseng also conjectured~\cite{MT} that $Z$ is a particular tau-function 
for the extended bigraded Toda hierarchy~\cite{Carlet}.
In \cite{CL} Carlet--van de Leur proved the conjecture of Milanov--Tseng
(see e.g. \cite{Du, Du2, Du3, DZ2, EHY, Getzler, OP, OP2, Zhang} for the case when $m_1=m_2=1$). 

Let 
\beq\label{definitionL}
L:= \TT^{m_1} + u_{m_1-1} \TT^{m_1-1} + \dots +u_1 \TT +u_0 + u_{-1} \TT^{-1} + \dots 
+u_{-(m_2-1)} \TT^{-(m_2-1)} + u_{-m_2} \TT^{-m_2}
\eeq
be the Lax operator for the extended bigraded Toda hierarchy, where $\TT=e^{\e \p_x}$ with $x=T^0_0$.
Similar to \cite{DY1, DY2, DYZ}, using equation~\eqref{string} and the definition of tau-function~\cite{Carlet}, 
we find that the initial data of the solution corresponding to GW invariants of $\mathbb{P}^1_{m_1,m_2}$  is given by 
\begin{align}
&u_0(x,\mathbf{0};\e)=x+\frac{\e}{2},\quad u_{-m_2}(x,\mathbf{0};\e)=1,
\label{BTH-initial1}\\
&u_a(x,\mathbf{0};\e)=0, \quad a\in\{1,\dots,m_1-1\}\cup\{-m_2+1,\dots,-1\}.\label{BTH-initial2}
\end{align}
In \cite{BDY1} Bertola, Dubrovin and the second author of the present paper 
introduced the matrix-resolvent method of calculating logarithmic derivatives of tau-functions for the KdV hierarchy; this method was extended to the Toda lattice hierarchy in \cite{DY1}.  
For the case when $m_1=m_2=1$,  Conjecture~\ref{cnj1} was proved in \cite{DYZ} using this method.
By extending the matrix-resolvent method to the bigraded Toda hierarchy one should be able to prove Conjecture~\ref{cnj1}. 
For the case when $m_2=1$, the extension has been achieved in~\cite{FYZ}. 
This together with a certain symmetry structure given in Corollary~\ref{cor-symm2} (see below in Section~\ref{proof-thm1}) allows us to prove the following theorem, which gives main 
evidence for the validity of Conjecture~\ref{cnj1}.
\begin{theorem}\label{thm-GW}
When one of $m_1, m_2$ is $1$, Conjecture~\ref{cnj1} holds.
\end{theorem}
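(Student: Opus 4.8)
The plan is to prove the case $m_2=1$ directly by the matrix-resolvent method for the extended bigraded Toda hierarchy, and then to deduce the case $m_1=1$ from it using the orbifold isomorphism $\mathbb{P}^1_{1,m_2}\cong\mathbb{P}^1_{m_2,1}$ together with the symmetry of the TDE solutions recorded in Corollary~\ref{cor-symm2}. For the reduction, exchanging the two special points yields an isomorphism of orbifolds $\mathbb{P}^1_{1,m_2}\cong\mathbb{P}^1_{m_2,1}$, under which the orbifold cohomology and its Poincar\'e pairing are identified through an explicit relabeling of the twisted sectors, and all GW invariants of the two sides coincide; hence the $k$-point functions $F_{a_1,\dots,a_k}$ of $(1,m_2)$-type are the relabeled $k$-point functions of $(m_2,1)$-type. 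On the TDE side, Corollary~\ref{cor-symm2} gives the corresponding transformation between the solutions $M_a$ for the two types (a conjugation combined with a transposition). Since the right-hand side of \eqref{npoint} is invariant under simultaneous conjugation of the $M_a$'s and behaves consistently under transposition inside the cyclic trace, and since the $\delta_{k,2}$ correction transforms in the matching way, the validity of \eqref{npoint} for $(m_2,1)$-type implies it for $(1,m_2)$-type. It thus suffices to treat $m_2=1$.

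For the case $m_2=1$: by the theorem of Carlet--van de Leur~\cite{CL} confirming the Milanov--Tseng conjecture~\cite{MT}, the partition function $Z=e^{\mathcal F}$ is a tau-function of the extended bigraded Toda hierarchy of $(m_1,1)$-type with Lax operator $L$ as in \eqref{definitionL}, and the string equation \eqref{string} together with the dilaton-shift conventions singles out the solution with initial data \eqref{BTH-initial1}--\eqref{BTH-initial2}. I would then invoke the matrix-resolvent construction of \cite{FYZ} for this hierarchy: for each $a=1,\dots,l-1$ it produces an $l\times l$ matrix resolvent $R_a(z)$, built from $L$, which obeys a linear difference equation $R_a(z-1)\,\widehat W(z)=\widehat W(z)\,R_a(z)$ with $\widehat W$ the companion matrix of $L-z$, and which has the property that all logarithmic derivatives of $\tau$ — equivalently the $F_{a_1,\dots,a_k}$ — are given by the cyclic-sum-of-traces formula of the shape appearing in \eqref{npoint}, the $\delta_{k,2}$ part coming from the bilinear term in the definition of the tau-function. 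Evaluating at the initial data \eqref{BTH-initial1}--\eqref{BTH-initial2} and performing the substitutions $z=\lambda/\e$, $s=Q^\rho/\e$, the companion matrix $\widehat W$ collapses to exactly the matrix $W(z,s)$ of \eqref{Q}, so the topological resolvent $R_a$ solves the TDE \eqref{TE}; a degree count (using $q_a=a/m_1$) shows it has the normalization $z^{1-q_a}(K_a+O(z^{-1}))$ of \eqref{form}, so by the uniqueness in Theorem~\ref{prop1} we get $R_a=M_a$. Substituting this into the matrix-resolvent $k$-point formula and identifying the unstable terms — the three pieces of the $\delta_{k,2}$ correction arising from the string/divisor equations and from the explicit genus-zero, degree-zero three-point invariants — yields \eqref{npoint}.

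The step I expect to be the main obstacle is the precise identification in the last part of the argument of the matrix resolvent of \cite{FYZ}, once specialized to the GW initial data, with the TDE solution $M_a$: one must reconcile the normalization of the matrix resolvent and the companion form of $L-z$ with the particular matrix $W(z,s)$ in \eqref{Q} and the normalization \eqref{form}, carefully tracking the powers of $Q$ and $\e$ as well as the extended flows of the hierarchy, which are the ones carrying the gravitational descendants. A secondary, bookkeeping-heavy point is the derivation of the $\delta_{k,2}$ correction, which is invisible to the bare matrix-resolvent formalism and must be pinned down from low-degree, low-genus invariants via the string and divisor equations.
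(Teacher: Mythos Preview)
Your overall architecture is exactly the paper's: prove the $m_2=1$ case by combining Carlet--van de Leur~\cite{CL} with the matrix-resolvent formalism of~\cite{FYZ}, identify the (specialized) basic matrix resolvents with the TDE solutions $M_a$, and then deduce $m_1=1$ from Corollary~\ref{cor-symm2}. So the proposal is on the right track and would succeed.

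Two points where the paper's execution differs from what you sketch. First, the identification $R_a\leftrightarrow M_a$ is not made by checking that $R_a$ satisfies~\eqref{TE} plus the leading term~\eqref{form} and then invoking Theorem~\ref{prop1}; instead the paper goes the other way, showing that the scaled $M_a$ (namely $\e^{1-a/m_1}\lambda^{a/m_1}M_a((\lambda-x)/\e,1/\e)$) satisfies the full resolvent characterization~\eqref{def-MR}--\eqref{degree-MR} together with the power relation $R_a=R_1^{\,a}$, and then uses a separate uniqueness lemma for resolvents (Lemma~\ref{lemma-power}). The extra algebraic input here is Corollary~\ref{property}, which supplies the trace, determinant and power identities for the $M_a$; these are what let you verify~\eqref{trace-MR} and~\eqref{power-MR}. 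Your ``degree count'' is not by itself enough to match normalizations, because the resolvent is pinned down by more than its leading term.

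Second, the $\delta_{k,2}$ issue is not handled by string/divisor bookkeeping. The point is that~\eqref{69} already contains exactly the correction term in~\eqref{npoint}; the only ambiguity is that the tau-function of~\cite{FYZ} and the one of~\cite{Carlet} (hence $Z$ via~\cite{CL}) could a priori differ by the exponential of a quadratic in $x,\bt$, which would shift the two-point function by constants. The paper kills this ambiguity by a homogeneity argument: both tau-structures are homogeneous of the same (nonzero) weights with respect to $\overline{\rm deg}^e$ and of degree~$0$ with respect to $\deg$, which forces the quadratic ambiguity to vanish. This is cleaner than pinning down constants case by case, and it is where your worry about ``tracking the powers of $Q$ and $\e$'' gets resolved.
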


The rest of the paper is organized as follows. 
In Section~\ref{proof-thm1}, we prove Theorem~\ref{prop1} and give more properties of the unique solutions given in the 
theorem. In Section~\ref{explict}, we give the explicit expressions for the unique solutions. 
In Section~\ref{polygon}, based on the Conjecture \ref{cnj1}, 
we employ an algorithm designed in~\cite{DY1, DY2} to give concrete computations for some of the GW invariants.
In Section~\ref{conclusion}, we prove Theorem \ref{thm-GW}.

\medskip

\noindent {\bf Acknowledgements}  We thank Alexander Alexandrov and Hua-Zhong Ke for helpful discussions. D.Y. is partially supported by NSFC No.~12371254 and 
CAS No.~YSBR-032.

\section{Particular formal solutions to the TDE}\label{proof-thm1}
The goal of this section is to prove Theorem~\ref{prop1}.

We first introduce some notations. Denote 
\beq
G(z,s)=W(z+\frac12,s).
\eeq
Denote $\LLL={\rm Mat}(l\times l,\CC(s)((z^{-1})))$, 
$\A={\rm Mat}(m_1\times m_1,\CC(s)((z^{-1})))$,
$\B={\rm Mat}(m_1\times m_2,\CC(s)((z^{-1})))$,
$\C={\rm Mat}(m_2\times m_1,\CC(s)((z^{-1})))$,
and 
$\D={\rm Mat}(m_2\times m_2,\CC(s)((z^{-1})))$, where we recall that $l=m_1+m_2$. An element in $\LLL$ will often be written as 
$\begin{pmatrix}
A&B\\C&D
\end{pmatrix}$, where $A\in \A$, $B\in\B$, $C\in\C$, $D\in\D$.
In particular, we write the matrix $G(z,s)$ as 
\beq
G(z,s)=\begin{pmatrix} G_1(z,s) & G_2(z,s) \\ G_3(z,s) & G_4(z,s)\end{pmatrix},
\eeq
where $G_1(z,s)=z e_{1,m_1}+s \sum_{i=2}^{m_1} e_{i,i-1} \in \A$, $G_2(z,s)=-se_{1,m_2} \in \B$, $G_3(z,s)=s_{1,m_1} \in \C$, 
and $G_4(z,s)=s\sum_{i=2}^{m_2}e_{i,i-1} \in \D$.

To prove Theorem~\ref{prop1}, we will actually prove the following equivalent version. 

\noindent {\bf Theorem~$1'$.} {\it 
There exist unique formal solutions $Y_a(z,s)$ in $z^{1-q_a}\cdot \LLL$, $a=1,\dots,l-1$, to the equation
\beq\label{TE'}
Y(z-1,s)G(z,s)=G(z,s)Y(z,s)
\eeq
such that}
\beq\label{form'}
Y_a(z,s)=z^{1-q_a}(K_a+O(z^{-1})).
\eeq

Before proving Theorem~$1'$, we do some preparations.

For any $m\ge1$, define 
an inner product $\langle \,, \rangle_m$ on 
${\rm Mat}(m\times m,\CC(s)((z^{-1})))$
 by 
$$
\langle M_1, M_2\rangle_m:=\Tr \, M_1M_2.
$$
For simplifying the notations, we denote $G_i=G_i(z,s)$, $i=1,\dots,4$.

The following lemma can be found for example in~\cite{DS}.
\begin{lemma}[\cite{DS}]\label{orth-A}
We have
\begin{align}
&{\rm Im}\ \ad_{G_{1}}=(\Ker\ \ad_{G_{1}})^\perp,\\
&\Ker\,\ad_{G_1}={\rm Span}_{\CC(s)((z^{-1}))} \bigl\{G_1^{j} \,| \, j=0,\dots,m_1-1\bigr\},\\
&\A=\Ker\ \ad_{G_1}\oplus{\rm Im}\ \ad_{G_1},\label{directsum}
\end{align}
where the orthogonality is with respect to $\langle\,,\rangle_{m_1}$.
\end{lemma}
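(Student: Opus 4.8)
The three assertions are the standard ``regular semisimple'' package for the element $G_1$, and the plan is to establish that structural fact first and then read off each statement. First I would analyze $G_1=z\,e_{1,m_1}+s\sum_{i=2}^{m_1}e_{i,i-1}$ directly on the standard column-vector basis $v_1,\dots,v_{m_1}$: one computes $G_1 v_j = s\,v_{j+1}$ for $1\le j\le m_1-1$ and $G_1 v_{m_1}=z\,v_1$, so that $v_1,G_1v_1,\dots,G_1^{m_1-1}v_1$ equal $v_1,s\,v_2,\dots,s^{m_1-1}v_{m_1}$, which are linearly independent over $\CC(s)((z^{-1}))$ since $s\ne0$. Hence $v_1$ is a cyclic vector, $G_1$ is non-derogatory, and its minimal polynomial coincides with its characteristic polynomial; from $G_1^{m_1}v_1=s^{m_1-1}z\,v_1$ this polynomial is $\lambda^{m_1}-s^{m_1-1}z$. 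Because $s^{m_1-1}z$ is a nonzero element of the field $\CC(s)((z^{-1}))$ (both $s$ and $z$ are invertible there), it has $m_1$ distinct $m_1$th roots over the algebraic closure, so $G_1$ is \emph{regular semisimple}.

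Granting this, the second assertion is immediate: for a cyclic (non-derogatory) matrix the centralizer is exactly the algebra of polynomials in it, so $\Ker\,\ad_{G_1}=\{M:[G_1,M]=0\}=\CC(s)((z^{-1}))[G_1]$, and $\{G_1^{j}\}_{j=0}^{m_1-1}$ is a basis because the minimal polynomial has degree $m_1$. This gives both the stated spanning set and $\dim\Ker\,\ad_{G_1}=m_1$.

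For the orthogonality ${\rm Im}\,\ad_{G_1}=(\Ker\,\ad_{G_1})^\perp$ I would use that $\ad_{G_1}$ is skew-adjoint for the trace form: a one-line computation with the cyclicity of the trace gives $\langle[G_1,M_1],M_2\rangle_{m_1}=-\langle M_1,[G_1,M_2]\rangle_{m_1}$. Since $\langle\,,\rangle_{m_1}$ is nondegenerate on $\A$ (in the matrix-unit basis its Gram matrix is the involution $e_{ij}\mapsto e_{ji}$, hence invertible), the general identity $({\rm Im}\,T)^\perp=\Ker\,T^{*}$ for an operator $T$ on a finite-dimensional space with a nondegenerate symmetric form, applied to $T=\ad_{G_1}$ with $T^{*}=-\ad_{G_1}$, yields $({\rm Im}\,\ad_{G_1})^\perp=\Ker\,\ad_{G_1}$; taking perps and using $W^{\perp\perp}=W$ for a nondegenerate form gives the first assertion.

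Finally, for the direct sum $\A=\Ker\,\ad_{G_1}\oplus{\rm Im}\,\ad_{G_1}$ I would invoke semisimplicity: since $G_1$ is semisimple, $\ad_{G_1}$ is a semisimple operator, so after extending scalars to $\overline{\CC(s)((z^{-1}))}$ it is diagonalizable and the ambient space splits as the $0$-eigenspace plus the sum of the nonzero eigenspaces, i.e.\ $\Ker\oplus{\rm Im}$ with trivial intersection. As forming kernel and image commutes with this (flat) field extension, $\Ker\,\ad_{G_1}\cap{\rm Im}\,\ad_{G_1}=0$ already over $\CC(s)((z^{-1}))$, and rank--nullity then upgrades this to the direct sum. \emph{The point requiring care is precisely this last step}: over a non-algebraically-closed, unordered field the relation ${\rm Im}=(\Ker)^\perp$ from the previous paragraph does not by itself force a direct-sum decomposition, because $\Ker\,\ad_{G_1}$ could a priori be isotropic. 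What rules this out is exactly the regular \emph{semisimplicity} established at the outset (equivalently, the nonvanishing of the discriminant of $\lambda^{m_1}-s^{m_1-1}z$, which makes the trace form nondegenerate on the centralizer); everything else is routine.
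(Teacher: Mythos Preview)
Your proof is correct. The paper does not actually prove this lemma: it is stated with a citation to Drinfeld--Sokolov~\cite{DS} and no argument is given, so there is no ``paper's own proof'' to compare against directly.

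That said, it is worth comparing your approach to the paper's proof of the companion Lemma~\ref{orth-D} for~$G_4$, which the authors describe as ``similar to the above lemma.'' There the argument is entirely hands-on: the kernel is computed by writing out the entry-wise equations $d_{i-1,j}-d_{i,j+1}=0$ and solving them, and the identity ${\rm Im}=(\Ker)^\perp$ is obtained by the one-line skew-adjointness inclusion together with a dimension count. Your route is more structural: you establish that $G_1$ is regular semisimple (cyclic vector plus separable characteristic polynomial $\lambda^{m_1}-s^{m_1-1}z$), invoke the standard centralizer description for non-derogatory matrices, and then use semisimplicity of $\ad_{G_1}$ for the direct sum. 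The payoff of your approach is that it explains \emph{why} the direct sum~\eqref{directsum} holds for~$G_1$ but has no analogue for~$G_4$ (which is nilpotent, so $\Ker\,\ad_{G_4}\cap{\rm Im}\,\ad_{G_4}\neq0$); the paper's elementary method, by contrast, transfers verbatim to~$G_4$ for the two statements that do survive. Your remark that the nondegeneracy of the trace form on $\Ker\,\ad_{G_1}$ (equivalently, the nonvanishing discriminant of $\lambda^{m_1}-s^{m_1-1}z$) is what genuinely drives the direct sum is exactly the right point.
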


Similar to the above lemma we will prove the following 
\begin{lemma}\label{orth-D}
We have
\begin{align}
&\Ker\,\ad_{G_4}={\rm Span}_{\CC(s)((z^{-1}))}\{G_4^{j} \,|\, j=0,\dots,m_2-1\}, \label{G4Kernel}\\
&{\rm Im} \, \ad_{G_4}=(\Ker\, \ad_{G_4})^\perp, \label{G4Imortho}
\end{align}
where the orthogonality is with respect to $\langle \,,\rangle_{m_2}$.
\end{lemma}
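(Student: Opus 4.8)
The plan is to mimic the structure of Lemma~\ref{orth-A}, since $G_4 = s\sum_{i=2}^{m_2} e_{i,i-1}$ is the $m_2\times m_2$ nilpotent Jordan block (scaled by $s$), exactly analogous to the nilpotent part $s\sum_{i=2}^{m_1}e_{i,i-1}$ sitting inside $G_1$. First I would observe that $G_4$ is a single regular nilpotent Jordan block of size $m_2$ (up to the nonzero scalar $s\in\CC(s)((z^{-1}))^\times$), so $G_4^{m_2}=0$ while $G_4^{m_2-1}\ne0$, and the powers $G_4^0=I, G_4^1,\dots,G_4^{m_2-1}$ are linearly independent over $\CC(s)((z^{-1}))$. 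This gives the inclusion ``$\supseteq$'' in~\eqref{G4Kernel} immediately, since each $G_4^j$ commutes with $G_4$. For the reverse inclusion, the standard fact about a regular (non-derogatory) matrix is that its centralizer in the full matrix algebra has dimension equal to its size: $\dim_{\CC(s)((z^{-1}))}\Ker\,\ad_{G_4}=m_2$. Hence the $m_2$-dimensional span of $\{G_4^j\}_{j=0}^{m_2-1}$ exhausts the kernel, proving~\eqref{G4Kernel}.

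For~\eqref{G4Imortho}, the key point is the compatibility of $\ad_{G_4}$ with the trace form $\langle M_1,M_2\rangle_{m_2}=\Tr\,M_1M_2$. I would first record the adjointness identity
\beq
\langle \ad_{G_4}(X), Y\rangle_{m_2} = -\langle X, \ad_{G_4}(Y)\rangle_{m_2},
\eeq
which follows from cyclicity of the trace: $\Tr\bigl((G_4X-XG_4)Y\bigr)=\Tr\bigl(X(YG_4-G_4Y)\bigr)$. Since $\langle\,,\rangle_{m_2}$ is a non-degenerate symmetric bilinear form on the finite-dimensional vector space $\D$ over the field $\CC(s)((z^{-1}))$, this skew-adjointness gives $\bigl({\rm Im}\,\ad_{G_4}\bigr)^\perp=\Ker\,\ad_{G_4}$; taking perpendiculars once more (non-degeneracy again, so $(\cdot)^{\perp\perp}$ is the identity on subspaces) yields ${\rm Im}\,\ad_{G_4}=(\Ker\,\ad_{G_4})^\perp$, which is~\eqref{G4Imortho}. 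I should be slightly careful that non-degeneracy of $\langle\,,\rangle_{m_2}$ on $\D$ does hold: the trace form on a full matrix algebra over any field is non-degenerate (pair $e_{i,j}$ with $e_{j,i}$), and this is unaffected by base change to $\CC(s)((z^{-1}))$.

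The only genuine subtlety — and what I expect to be the main (if modest) obstacle — is justifying the centralizer dimension count over the coefficient field $\CC(s)((z^{-1}))$ rather than over $\CC$ directly, and making sure the orthogonal-complement bookkeeping is valid in this setting rather than over $\RR$ or $\CC$. Both are fine: the dimension of the centralizer of a non-derogatory matrix is a statement of linear algebra valid over any field (it depends only on the rational canonical form, here a single companion block), and orthogonality/perp arguments for a non-degenerate bilinear form work verbatim over any field. Once these points are in place the lemma follows; I would not expect to need the direct-sum decomposition analogous to~\eqref{directsum} for the statement as given, though it would follow the same way from $\D=\Ker\,\ad_{G_4}\oplus{\rm Im}\,\ad_{G_4}$ once~\eqref{G4Imortho} and the dimension count are known. \epf
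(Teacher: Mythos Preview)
Your proof is correct. The overall strategy matches the paper's, but the two arguments are packaged differently in both parts. For~\eqref{G4Kernel}, the paper carries out the centralizer computation explicitly: writing $M_0=(d_{i,j})$ and solving $d_{i-1,j}=d_{i,j+1}$ to see that $\Ker\,\ad_{G_4}$ consists precisely of the lower-triangular Toeplitz matrices, hence is spanned by $G_4^0,\dots,G_4^{m_2-1}$. You instead invoke the general fact that a non-derogatory matrix over any field has centralizer of dimension equal to its size; this is cleaner but imports a standard result rather than exhibiting the kernel concretely. For~\eqref{G4Imortho}, the paper first shows ${\rm Im}\,\ad_{G_4}\subset(\Ker\,\ad_{G_4})^\perp$ via the same trace-cyclicity identity you use, and then closes the gap by a dimension count ($\dim{\rm Im}=m_2^2-\dim\Ker=\dim\Ker^\perp$); you instead phrase the identity as skew-adjointness, deduce $({\rm Im}\,\ad_{G_4})^\perp=\Ker\,\ad_{G_4}$, and apply $(\cdot)^{\perp\perp}={\rm id}$. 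These are equivalent manoeuvres (both ultimately rest on non-degeneracy of the trace form and rank--nullity), so the difference is one of presentation rather than substance.
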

\begin{proof}
For $M_0(z,s)=(d_{i,j}(z,s))_{i,j=1,\dots,m_2}\in\Ker\,\ad_{G_4}$, we have
\beq\label{ker4}
[G_4,M_0(z,s)]_{i,j}=d_{i-1,j}(z,s)-d_{i,j+1}(z,s)=0,\quad i,j=1,\dots,m_2.
\eeq
Here it is understood that $d_{i,m_2+1}(z,s)$ and $d_{0,j}(z,s)$ are $0$. By solving equation \eqref{ker4} we get
\begin{align*}
&d_{i,j}(z,s)=0,\quad 1\le i<j\le m_2,\\
&d_{i,j}(z,s)=d_{i-j+1, \,1}(z,s),\quad 1\le j\le i\le m_2,
\end{align*}
where $d_{i,1}(z,s)\in\CC(s)((z^{-1}))$, $i=1,\dots,m_2$, are free.
From this it can follow that $(G_4^j)_{j=0,\dots,m_2-1}$ form a basis of $\Ker\,\ad_{G_4}$, namely, equation~\eqref{G4Kernel} is proved.

For each element $M_1(z,s)\in {\rm Im}\, \ad_{G_4(z,s)}$, there exists $M_2(z,s)\in\D$ 
such that $M_1(z,s)=[G_4(z,s),M_2(z,s)]$. Then for any $M_3(z,s)\in\Ker\, \ad_{G_4(z,s)}$, 
$$
\langle M_3(z,s),M_1(z,s)\rangle=\Tr \,\bigl([M_3(z,s),G_4(z,s)]M_2(z,s)\bigr)=0.
$$ 
So ${\rm Im}\, \ad_{G_4(z,s)}\subset(\Ker\, \ad_{G_4(z,s)})^\perp$.
On another hand,  
$$
\dim_{\CC(s)((z^{-1}))} {\rm Im}\, \ad_{G_4(z,s)}=m_2^2-\dim_{\CC(s)((z^{-1}))} \Ker\, \ad_{G_4(z,s)}
=\dim_{\CC(s)((z^{-1}))}(\Ker\, \ad_{G_4(z,s)})^{\perp}.
$$
The equality~\eqref{G4Imortho} is proved.
\end{proof}

For each $A(z,s)\in\A$, write $A(z,s)=A(z,s)_{\Ker_1}+A(z,s)_{{\rm Im}_1}$ with $A(z,s)_{\Ker_1}\in\Ker\,\ad_{G_1}$, $A(z,s)_{{\rm Im}_1}\in{\rm Im}\,\ad_{G_1}$. We fix an $S\subset\D$ such that $\D=\Ker\,\ad_{G_4}\oplus S$. For each $D(z,s)\in\D$, write $D(z,s)=D(z,s)_{\Ker_2}+D(z,s)_S$, with $D(z,s)_{\Ker_2}\in\Ker\,\ad_{G_4}$ and $D(z,s)_S\in S$.

We continue to do some more preparations. 

For a block-matrix
$$
\begin{pmatrix}
A&B\\C&D
\end{pmatrix} \in \LLL,
$$
where $A\in\A$, $B\in\B$, $C\in\C$, $D\in\D$, we introduce degree assignments $\deg_{11}$ on~$\A$, $\deg_{12}$ on~$\B$, $\deg_{21}$ on~$\C$, $\deg_{22}$ on~$\D$ by
$$
\deg_{11}e_{i_1, j_1}=i_1-j_1,\quad\deg_{12}e_{i_2, j_2}=i_2-m_1,\quad\deg_{21}e_{i_3, j_3}=m_1-j_3,\quad\deg_{22}e_{i_4, j_4}=0,
$$
$$
\deg_{11}z=
\deg_{12}z=
\deg_{21}z=
\deg_{22}z=m_1.
$$
Following~\cite{BDY2}, introduce the notations
$$
{\rm gr}_{11}=m_1z\partial_z+\ad_{\rho^\vee},\quad {\rm gr}_{22}= m_1 z \p_z.
$$
Here $\rho^\vee=\diag(\frac{1-m_1}{2},\frac{3-m_1}{2},\dots,\frac{m_1-3}{2},\frac{m_1-1}{2})$.
It is known from e.g.~\cite{BDY2, Kostant} that $A$ is homogenous of degree $d$ with respect to $\deg_{11}$ if and only if 
\beq\label{grA}
{\rm gr}_{11} \, A=dA.
\eeq
Obviously, 
 $D$ is homogenous of degree $d$ with respect to $\deg_{22}$ if and only if 
\beq\label{grD}
{\rm gr}_{22} \, D=dD.
\eeq
We also denote by $\A^{\le d}$ the subspace of~$\A$ whose elements have degrees less than or equal to~$d$ with respect to $\deg_{11}$, by 
 $\B^{\le d}$ the subspace of~$\B$ whose elements have degrees less than or equal to $d$ with respect to $\deg_{12}$, and 
 notations  $\C^{\le d}$ and $\D^{\le d}$ are similarly introduced. 
 
We are ready to state and prove the following lemma.
\begin{lemma}\label{thm1''}
There exist unique formal solution $Y_a(z,s)=\begin{pmatrix}A(z,s)&B(z,s)\\C(z,s)&D(z,s)\end{pmatrix}$ in $z^{1-q_a} \cdot \LLL$, $a=1,\dots,l-1$, 
to equation~\eqref{TE'} such that
\begin{align}
z^{q_a-1}A(z,s)&-s^{1-a}z^{-1}G_1^a \; \in\; \A^{\le -m_1},\label{form''1}\\
z^{q_a-1}B(z,s)&-s^{1-a}z^{-1}\sum_{i=2m_1-a}^{2m_1-2}G_1^{2m_1-1-i}G_2G_4^{i-2m_1+a} \;\in\; \B^{\le 1-2m_1},\\
z^{q_a-1}C(z,s)&-s^{1-a}z^{-1}\sum_{i=m_1-a+1}^{m_1-1}G_4^{i-m_1+a-1}G_3G_1^{m_1-i} \;\in\; \C^{\le -m_1},\\
z^{q_a-1}D(z,s)& \;\in\; \D^{\le -m_1}
\end{align}
for $a=1,\dots,m_1$, and that 
\begin{align}
z^{q_a-1}A(z,s)&\in\A^{\le -m_1},\\
z^{q_a-1}B(z,s)&+s^{m_1-a}\sum_{i=m_1}^{2m_1-2}G_1^{m_1-i-1}G_2G_4^{a+i-2m_1}\in\B^{\le 1-2m_1},\\
z^{q_a-1}C(z,s)&+s^{m_1-a}\sum_{i=1}^{m_1-1}G_4^{a+i-m_1-1}G_3G_1^{-i}\in\C^{\le -m_1},\\
z^{q_a-1}D(z,s)&+s^{m_1-a}G_4^{a-m_1}\in\D^{\le -m_1}\label{form''2}
\end{align}
for $a=m_1+1,\dots,l-1$. 
\end{lemma}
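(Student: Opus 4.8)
The plan is to adapt the proof of \cite[Proposition~1]{DYZ} (cf.\ \cite{BDY2}) to the present block setting. Write the unknown as $Y=\begin{pmatrix}A&B\\C&D\end{pmatrix}$ with $A\in\A$, $B\in\B$, $C\in\C$, $D\in\D$. Then equation~\eqref{TE'} is equivalent to the system
\begin{align*}
&A(z-1)G_1+B(z-1)G_3=G_1A(z)+G_2C(z),\\
&A(z-1)G_2+B(z-1)G_4=G_1B(z)+G_2D(z),\\
&C(z-1)G_1+D(z-1)G_3=G_3A(z)+G_4C(z),\\
&C(z-1)G_2+D(z-1)G_4=G_3B(z)+G_4D(z).
\end{align*}
Substituting $Y=z^{1-q_a}\widetilde Y$ with $\widetilde Y\in\LLL$, and expanding $(z-1)^{1-q_a}=z^{1-q_a}(1-z^{-1})^{1-q_a}$ and $\widetilde Y(z-1,s)$ in powers of $z^{-1}$, I would convert the system into a recursion for the homogeneous components of the four blocks of $\widetilde Y$ with respect to $\deg_{11},\deg_{12},\deg_{21},\deg_{22}$. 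The mechanism, as in \cite{BDY2}, is that $G_1,G_4$ are homogeneous of degree~$1$ for $\deg_{11},\deg_{22}$ and $G_2,G_3$ are homogeneous for $\deg_{12},\deg_{21}$, whereas the shift $z\mapsto z-1$ (and the $z$-derivatives it produces) strictly lowers degree; hence, read off degree by degree, the two diagonal block equations become $\ad_{G_1}(A\text{-component})=(\text{known})$ and $\ad_{G_4}(D\text{-component})=(\text{known})$, and the two off-diagonal ones become $G_1\,(B\text{-component})-(B\text{-component})\,G_4=(\text{known})$ and $G_4\,(C\text{-component})-(C\text{-component})\,G_1=(\text{known})$, where in each case ``(known)'' involves only components already produced at strictly higher degree in the respective grading. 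One then solves this recursion by descending induction on the degree.

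For the off-diagonal blocks there is no freedom: since $\det G_1=(-1)^{m_1-1}z\,s^{m_1-1}\ne0$, the matrix $G_1$ is invertible over $\CC(s)((z^{-1}))$, and because $G_4$ is nilpotent the operators $X\mapsto G_1X-XG_4=G_1(X-G_1^{-1}XG_4)$ and $X\mapsto G_4X-XG_1$ are invertible by a terminating Neumann series; this is why the normalizations for $B,C$ in~\eqref{form''1}--\eqref{form''2} record only the forced closed forms $s^{1-a}z^{-1}\sum G_1^iG_2G_4^j$, $s^{1-a}z^{-1}\sum G_4^iG_3G_1^j$ and their counterparts for $a>m_1$. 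For the diagonal blocks, Lemma~\ref{orth-A} with~\eqref{directsum} and Lemma~\ref{orth-D} give $\A=\Ker\,\ad_{G_1}\oplus{\rm Im}\,\ad_{G_1}$ and $\D=\Ker\,\ad_{G_4}\oplus S$, so at each degree the $A$- and $D$-components are determined modulo $\Ker\,\ad_{G_1}={\rm Span}\{G_1^j\}$ and $\Ker\,\ad_{G_4}={\rm Span}\{G_4^j\}$. The conditions~\eqref{form''1}--\eqref{form''2} fix the top part of this ambiguity (that $z^{q_a-1}A$ coincides with $s^{1-a}z^{-1}G_1^a$ up to terms of degree $\le-m_1$, and that $z^{q_a-1}D$ has no part of degree $>-m_1$, together with the mirror statements for $a>m_1$); the remaining $\Ker\,\ad$ data at lower degrees is then not free but forced, since for the next equation in the descending recursion to be solvable the $\Ker\,\ad_{G_1}$- and $\Ker\,\ad_{G_4}$-parts of its right-hand side must vanish, and these vanishing conditions pin down the earlier $\Ker\,\ad$ choices. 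This propagating solvability constraint, together with the normalization, gives uniqueness; existence then requires checking that the cascade is consistent, i.e.\ that after the forced choices each right-hand side indeed lies in ${\rm Im}\,\ad_{G_1}$ (resp.\ ${\rm Im}\,\ad_{G_4}$), and that the explicit terms in~\eqref{form''1}--\eqref{form''2} agree with the first few steps of the recursion.

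I expect the consistency of this cascade --- the vanishing of the solvability obstructions at every step of the descending recursion --- to be the main obstacle; following \cite{BDY2} it should be provable by a homogeneity argument exploiting that the whole coupled system is graded (an obstruction is itself governed by a homogeneous recursion with vanishing top part). Finally, Lemma~\ref{thm1''} implies Theorem~$1'$: using $G_1^{m_1}=z\,s^{m_1-1}I$ and the top-degree part of $z^{-1}G_1^a$ (respectively of $G_4^{a-m_1}$), the normalized solution has leading term $z^{1-q_a}K_a$, so it is a solution of~\eqref{TE} of the form~\eqref{form}, and a grading argument shows conversely that any solution of that form in $z^{1-q_a}\LLL$ obeys the filtration conditions of Lemma~\ref{thm1''}, thereby yielding both existence and uniqueness in Theorem~$1'$.
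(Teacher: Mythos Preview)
Your outline follows the paper's strategy closely: block decomposition, expansion into homogeneous components with respect to the four gradings, solving the off-diagonal blocks by inverting $G_1$ (the paper phrases~\eqref{rec2}--\eqref{rec3} as $G_1B^{[-i-1]}=(\text{known})$ and $C^{[-i-1]}G_1=(\text{known})$ rather than via a Sylvester operator, but your Neumann-series argument is equivalent), and fixing the $\Ker\,\ad$ ambiguity in the diagonal blocks by the solvability constraint at the subsequent step. The gap is that you treat the $A$- and $D$-blocks as parallel, and they are not.

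For the $A$-block one has $\A=\Ker\,\ad_{G_1}\oplus{\rm Im}\,\ad_{G_1}$ and $\deg_{11}G_1=1$, so the one-step solvability constraint ``right-hand side of~\eqref{rec1} lies in ${\rm Im}\,\ad_{G_1}=(\Ker\,\ad_{G_1})^\perp$'' gives $m_1$ equations that pin down $(A^{[-i_0]})_{\Ker_1}$, with the nonzero scalar $i_0+m_1q_a-m_1$ appearing explicitly. For the $D$-block this mechanism fails as stated: $G_4$ is nilpotent, so $G_4^j\in{\rm Im}\,\ad_{G_4}$ for $j\ge1$ and hence $\D\ne\Ker\,\ad_{G_4}\oplus{\rm Im}\,\ad_{G_4}$; moreover $\deg_{22}G_4=0$, so $[G_4,D^{[-i]}]$ does not shift the grading and the naive one-step recursion does not isolate a new unknown. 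The paper handles this by substituting \eqref{rec2}--\eqref{rec4} into one another to derive~\eqref{rec4''}, and then running a \emph{two}-step lookahead inside the induction: with $(D^{[-i_0]})_{\Ker_2}=\sum_j\beta_j\,s^jG_4^j$ still unknown, one first uses~\eqref{rec4''} at $i=i_0+m_1$ to express $(D^{[-i_0-m_1]})_S$ linearly in the $\beta_j$, and only then does the orthogonality requirement $\Tr(f_2\,G_4^\ell)=0$ coming from~\eqref{rec4''} at $i=i_0+2m_1$ produce a linear system in the $\beta_j$. That this system is nondegenerate is not automatic and is checked by an explicit computation, yielding the anti-diagonal coefficients $\bigl(\tfrac{i_0}{m_1}-1+q_a+\tfrac{\ell}{m_2}\bigr)m_2$, nonzero since $i_0\ge m_1$. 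Your appeal to a ``homogeneity argument'' from~\cite{BDY2} does not supply this step: in~\cite{BDY2} the principal element is regular, which is exactly what fails for $G_4$. Correspondingly, existence in the paper is not a separate consistency check but is obtained simultaneously with uniqueness---the inductive hypothesis carries the statement that the right-hand side of~\eqref{rec4''} already lies in ${\rm Im}\,\ad_{G_4}$ one step ahead, and the computation above both fixes the $\beta_j$ and propagates that hypothesis.
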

\begin{proof}
Let us fix an $a\in\{1,\dots,l\}$. Write
\begin{align}\label{2.4}
&Y(z,s)=z^{1-q_a}\sum_{i\ge0} \begin{pmatrix}A^{[-i]}&B^{[-i]}\\C^{[-i]}&D^{[-i]}\end{pmatrix},
\end{align}
with the first few terms be determined by \eqref{form''1}--\eqref{form''2}. 
Here, for $i\ge0$, $A^{[-i]}=A^{[-i]}(z,s)\in \A$, $B^{[-i]}=B^{[-i]}(z,s)\in \B$, $C^{[-i]}=C^{[-i]}(z,s)\in \C$, $D^{[-i]}=D^{[-i]}(z,s)\in \D$ are homogeneous 
of degrees $-i$ with respective to $\deg_{11}$, $\deg_{12}$, $\deg_{21}$, $\deg_{22}$, respectively.
Obviously, $D^{[-i]}$ vanish unless $m_1|i$.

Substituting \eqref{2.4} in~\eqref{TE'} and comparing terms with equal degrees, we find that~\eqref{TE'} is 
equivalent to the following equations:
\begin{align}
&\Bigl[G_1,A^{[-i-1]}\Bigr]=0,\quad i=-1,\dots,m_1-2,\label{rec1-0}\\
&\Bigl[G_4,D^{[-i]}\Bigr]=0,\quad i=0,\dots,m_1-1,\label{rec4-0}\\
&\Bigl[G_1,A^{[-i-1]}\Bigr]=
\bigl(\widetilde A^{[-i-1]}-A^{[-i-1]}\bigr)G_1+\widetilde B^{[-i]}G_3-G_2C^{[m_1-1-i]},\quad i\ge m_1-1,\label{rec1}\\
&\Bigl[G_4,D^{[-i]}\Bigr]=\widetilde C^{[m_1-1-i]}G_2+\bigl(\widetilde D^{[-i]}-D^{[-i]}\bigr)G_4-G_3B^{[-i]},\quad i\ge m_1,\label{rec4}\\
&G_1B^{[-i-1]}=\widetilde A^{[m_1-1-i]}G_2+\widetilde B^{[-i]}G_4-G_2D^{[m_1-1-i]},\quad i\ge-1,\label{rec2}\\
&C^{[-i-1]}G_1=G_3A^{[-i]}+G_4C^{[-i]}-\widetilde D^{[-i]}G_3-(\widetilde C^{[-i-1]}-C^{[-i-1]})G_1,\quad i\ge-1,\label{rec3}
\end{align}
where 
$$
\widetilde X^{[-i]}=\widetilde X^{[-i]}(z,s)
:=\sum_{j\ge0}\frac{(-1)^j}{j!}\sum_{\ell=0}^j\binom{1-q_a}{\ell} z^{-\ell}\partial_z^{j-\ell} \bigl(X^{[m_1j-i]}\bigr), \quad i\ge0,
$$
with $X=A,B,C$, or $D$. 
Here and below, 
it is understood that $A^{[i]}=A^{[i]}(z,s), B^{[i]}=B^{[i]}(z,s), C^{[i]}=C^{[i]}(z,s), D^{[i]}=D^{[i]}(z,s)$ are $0$ if $i>0$.
Obviously, $\widetilde A^{[-i]}\in \A$, $\widetilde B^{[-i]}\in \B$, $\widetilde C^{[-i]}\in \C$, $\widetilde D^{[-i]}\in \D$ are homogeneous 
of degrees $-i$ with respective to $\deg_{11}$, $\deg_{12}$, $\deg_{21}$, $\deg_{22}$, respectively.
It follows that $\widetilde X^{[-i]}-X^{[-i]}$ is determined by 
$X^{[m_1-i]},X^{[2m_1-i]},\dots$, namely, it does not contain explicitly the $X^{[-i]}$-term, where $X=A,B,C$, or $D$.
Using \eqref{rec2}, \eqref{rec3} and~\eqref{rec4}, we obtain
\begin{align}
&\Bigl[G_4,D^{[-i]}\Bigr]\nn\\
&=\sum_{j=0}^{m_1}\bigl(G_4^jG_3A^{[m_1+j-i]}G_1^{-1-j}G_2
-G_3 G_1^{-1-j}\widetilde A^{[m_1+j-i]} G_2 G_4^j\bigr)\notag\\
&+\sum_{j=0}^{m_1}\bigl(G_3 G_1^{-1-j}G_2\widetilde D^{[m_1+j-i]}G_4^{j}-G_4^jD^{[m_1+j-i]}G_3G_1^{-1-j}G_2\bigr)\notag\\
&+\sum_{j=0}^{m_1}G_3G_1^{-1-j} \bigl(B^{[1+j-i]}-\widetilde B^{[i+j-i]}\bigr) G_4^{j+1}
+\sum_{j=0}^{m_1}G_4^{j+1}\bigl(C^{[m_1-i]}-\widetilde C^{[m_1-i]}\bigr) G_1^{-1-j}G_2\notag\\
&+G_4^{m_1+1}\widetilde C^{[2m_1-i]}G_1^{-1-m_1}G_2-G_1^{-1-m_1}B^{[m_1+1-i]}G_4^{m_1+1}
+ \bigl(\widetilde D^{[-i]}-D^{[-i]}\bigr)G_4,\quad i\ge m_1.\label{rec4''}
\end{align}
It is not difficult to show that the set of equations \eqref{rec1-0}--\eqref{rec3} are actually 
equivalent to equations \eqref{rec1-0}--\eqref{rec1}, \eqref{rec2}, \eqref{rec3} and~\eqref{rec4''}. 
Thus to prove the statement of the lemma it remains to show 
the existence and uniqueness for $A^{[-i]}, B^{[-i]},C^{[-i]},D^{[-i]}$ with the conditions~\eqref{form''1}--\eqref{form''2}.

To this end, we will use the mathematical induction to show the following statement: 
for all $j_0\ge m_1$, 
we have that 
$(A^{[-j]})_{\Ker_1}, (A^{[-j-m_1]})_{{\rm Im}_1}, B^{[1-m_1-j]}, C^{[-j]}, (D^{[-j]})_{\Ker_2},(D^{[-j-m_1]})_S$ for $-m_1\le j<j_0$ can be 
uniquely determined by \eqref{rec1-0}, \eqref{rec4-0}, \eqref{rec1}, \eqref{rec2}, \eqref{rec3}, \eqref{rec4''} under \eqref{form''1}--\eqref{form''2}, 
and that equation \eqref{rec1-0}, equation \eqref{rec4-0}, equation \eqref{rec1} with $i< j_0+m_1-1$, 
equation \eqref{rec2} with $i< j_0+m_1-2$, \eqref{rec3} with $i< j_0-1$ and \eqref{rec4''} with $i<j_0+m_1$ all hold, as well as that the right-hand side of \eqref{rec4''} with $i<j_0+2m_1$ belongs to ${\rm Im}\,\ad_{G_4}$.

For $j_0=m_1$, by a direct computation we find that the statement is indeed true. 

Now, assuming that the statement is true for $j_0=i_0$, we will prove it for $j_0=i_0+1$. 

First, we can solve \eqref{rec2} with $i=i_0+m_1-2$ and \eqref{rec3} with $i=i_0-1$, and from this we uniquely determine $B^{[-i_0-m_1+1]}$ and $C^{[-i_0]}$.

Second, we determine $(A^{[-i_0}])_{\Ker_1}$ and $(A^{[-i_0-m_1]})_{{\rm Im}_1}$. Equation \eqref{rec1} with $i=i_0+m_1-1$ can be written as
\beq\label{Apr3}
\Bigl[G_1,A^{[-i_0-m_1]}\Bigr]=z^{-1}\Bigl(\bigl(1-q_a-\frac{i_0}{m_1}\bigr)(A^{[-i_0]})_{\Ker_1}+\frac{1}{m_1}\bigl[\rho^\vee,(A^{[-i_0]})_{\Ker_1}\bigr]\Bigr)G_1+f_1(-i_0-m_1+1,z,s),
\eeq
where 
\begin{align}
&f_1(-i_0-m_1+1,z,s)
=\Bigl(z^{-1}(1-q_a-\frac{i_0}{m_1})(A^{[-i_0]})_{{\rm Im}_1}+\frac{1}{m_1}\bigl[\rho^\vee,(A^{[-i_0]})_{{\rm Im}_1}\bigr]\Bigr)G_1\notag\\
&+\sum_{j\ge0}\frac{(-1)^j}{j!}\sum_{\ell=0}^j\binom{1-q_a}{\ell} z^{-\ell}\partial_z^{j-\ell}\bigl(A^{[m_1(j-1)-i_0]}\bigr)G_1+\widetilde B^{[-i_0-m_1+1]}G_3-G_2C^{[-i_0]}.\label{rec1''}
\end{align}
Here we have used~\eqref{grA}. 
The requirement that the right-hand side of~\eqref{Apr3} belongs to ${\rm Im}\, \ad_{G_1}$ gives
$$
(A^{[-i_0]})_{\Ker_1}=\frac{s^{m_1-1}}{i_0+m_1q_a-m_1} \, \sum_{j=0}^{m_1-1}\Tr \bigl(f_1(-i_0-m_1+1,z,s)G_1^{m_1-1-j}\bigr)G_1^j.
$$
And $(A^{[-i_0-m_1]})_{{\rm Im}_1}$ is uniquely determined from~\eqref{Apr3}.

Finally, we will determine $(D^{[-i_0]})_{\Ker_2}$ and $(D^{[-i_0-m_1]})_{S}$. If $m_1\nmid i_0$, for the degree reason, 
we have $(D^{[-i_0]})_{\Ker_2}=(D^{[-i_0-m_1]})_S=0$. Then equation~\eqref{rec4''} with $i=i_0+m_1$ is satisfied and the right-hand side of \eqref{rec4''} with $i=i_0+2m_1$ belongs to ${\rm Im}\,\ad_{G_4}$. 
If $m_1|i_0$,
write
$$
(D^{[-i_0]}(z,s))_{\Ker_2}=\sum_{j=0}^{m_2-1}\beta_j(z,s)s^{j}G_4(z,s)^j
$$
for the coefficients $\beta_j(z,s)\in\CC(s)((z^{-1}))$, $j=0,\dots,m_2-1$, to be determined. 
By assumption, the right-hand side of \eqref{rec4''} with $i=i_0+m_1$ belongs to ${\rm Im}\,\ad_{G_4}$, from which 
 we find that $(D^{[-i_0-m_1]}(z,s))_{S}$ has the form
$$
(D^{[-i_0-m_1]}(z,s))_{S}=\sum_{j=0}^{m_2-1} \beta_j(z,s) U_j(z,s) + U(z,s),
$$
with specific elements $U_j(z,s), U(z,s)\in \D$.

Denote the right-hand side of~\eqref{rec4''} with $i=i_0+2m_1$ as $f_2(-i_0-2m_1,z,s)$. 
We now require that $f_2(-i_0-2m_1,z,s)$ belongs to ${\rm Im}\,\ad_{G_4(z,s)}$. This is equivalent to requiring
\beq\label{orD}
\Tr \,\bigl(f_2(-i_0-2m_1,z,s)G_4(z,s)^\ell\bigr)=0,\quad \forall\,\ell=0,\dots,m_2-1.
\eeq
By a direct computation we find that~\eqref{orD} are equivalent to
\beq
- \sum_{j=0}^{m_2-1}\delta_{\ell+j,m_2-1} \Bigl(\frac{i_0}{m_1}-1+q_a+\frac{\ell}{m_2}\Bigr)m_2 s^{\ell}\beta_j(z,s) + c_\ell(z,s)=0,
\eeq 
where $c_\ell(z,s) \in \CC(s)((z^{-1}))$ had been determined.  
It follows that there exist unique $\beta_0(z,s)$, \dots, $\beta_{m_2-1}(z,s)$ satisfying equations \eqref{orD}. 
Therefore, $(D^{[-i_0]})_{\Ker_2}$ and $(D^{[-i_0-m_1]})_S$ are uniquely determined, 
equation~\eqref{rec4''} with $i=i_0+m_1$ holds, and the right-hand side~\eqref{rec4''} with $i=i_0+2m_1$ belongs to ${\rm Im}\,\ad_{G_4}$.

This concludes the inductive step and thus completes the proof of existence and uniqueness of solutions \eqref{form'} to \eqref{TE'}.
\end{proof}

We now prove Theorem~$\ref{prop1}'$.
\begin{proof}[Proof of Theorem~$\ref{prop1}'$]
For each $a=1,\dots,l-1$, it is easy to check that the series $Y_a(z,s)$ given in Lemma~\ref{thm1''} satisfies~\eqref{form'}. 
This proves the existence part of Theorem~$\ref{prop1}'$. 

For each $a=1,\dots,l-1$, starting from the initial condition~\eqref{form'}, we can check that $Y_a(z,s)$ determined by equation~\eqref{TE'} 
satisfies the initial conditions \eqref{form''1}--\eqref{form''2} in Lemma~\ref{thm1''}.
Then the uniqueness part of Theorem~$\ref{prop1}'$ follows from that of Lemma~\ref{thm1''}. 
\end{proof}

\begin{proof}[Proof of Theorem~$\ref{prop1}$] Follows from Theorem~$\ref{prop1}'$.
\end{proof}

Define two $l\times l$ constant matrices $\eta_1(l)$ and $\eta_2(m_1,m_2)$ by 
\beq
\eta_1(l)=\sum_{i=1}^l e_{i,l+1-i},\qquad \eta_2(m_1,m_2)=\sum_{i=1}^{m_1}e_{i,m_2+i}-\sum_{i=m_1+1}^l e_{i,i-m_2}.
\eeq

\begin{prop}\label{prop-symm1}
If $M(z,s)$ is a solution to the TDE of $(m_1,m_2)$-type~\eqref{TE}, then $\widetilde M(z,s)$ defined by 
$$
\widetilde M(z,s):=\eta_1(l)^{-1}M(-z,-s)\eta_1(l)
$$
is a solution to the TDE of $(m_2,m_1)$-type.
\end{prop}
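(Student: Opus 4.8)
The plan is to verify directly that substituting $\widetilde M(z,s)=\eta_1(l)^{-1}M(-z,-s)\eta_1(l)$ into the TDE of $(m_2,m_1)$-type reduces, via the conjugation by $\eta_1(l)$ and the change of variables $(z,s)\mapsto(-z,-s)$, to the original TDE~\eqref{TE} that $M$ is assumed to satisfy. Writing $W^{(m_1,m_2)}(z,s)$ for the matrix~\eqref{Q} and $W^{(m_2,m_1)}(z,s)$ for the same expression with the roles of $m_1$ and $m_2$ interchanged, the statement to be proved is
\beq
\widetilde M(z-1,s)\,W^{(m_2,m_1)}(z,s)=W^{(m_2,m_1)}(z,s)\,\widetilde M(z,s).
\eeq

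First I would establish the key algebraic identity relating the two structure matrices:
\beq\label{eta-conj}
\eta_1(l)\,W^{(m_2,m_1)}(z,s)\,\eta_1(l)^{-1}= - \,W^{(m_1,m_2)}(-z,-s)
\eeq
(possibly up to an overall harmless scalar that cancels on both sides of the difference equation). This is a finite check on the three types of terms making up $W$: the rank-one term $(z-\tfrac12)e_{1,m_2}$ of $W^{(m_2,m_1)}$, the term $-s\,e_{1,l}$, and the subdiagonal $s\sum_{i=2}^{l}e_{i,i-1}$. Conjugation by $\eta_1(l)$ sends $e_{i,j}$ to $e_{l+1-i,\,l+1-j}$, so the subdiagonal $\sum_{i=2}^{l}e_{i,i-1}$ maps to the superdiagonal $\sum_{i=1}^{l-1}e_{i,i+1}$; one has to see how, together with the $e_{1,l}$ and $e_{1,m_2}$ terms, this matches (minus) the corresponding pieces of $W^{(m_1,m_2)}$, using that $l+1-m_2=m_1+1$ and tracking the sign changes produced by $z\mapsto -z$, $s\mapsto -s$. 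It is cleanest to use the shifted matrix $G(z,s)=W(z+\tfrac12,s)$ from the previous subsection and its block decomposition, since then the $(z-\tfrac12)$ becomes simply $z$ and the bookkeeping of the half-integer shift disappears.

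Once~\eqref{eta-conj} is in hand, the proof is a one-line substitution: from $M(\zeta-1,\sigma)W^{(m_1,m_2)}(\zeta,\sigma)=W^{(m_1,m_2)}(\zeta,\sigma)M(\zeta,\sigma)$, set $\zeta=-z$, $\sigma=-s$, conjugate throughout by $\eta_1(l)^{-1}(\cdot)\eta_1(l)$, and use~\eqref{eta-conj} to replace $W^{(m_1,m_2)}(-z,-s)$ by $-\eta_1(l)W^{(m_2,m_1)}(z,s)\eta_1(l)^{-1}$; the factors of $\eta_1(l)^{\pm1}$ combine with the $\widetilde M$'s and the sign cancels, leaving exactly the TDE of $(m_2,m_1)$-type. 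One should also note $M(-z-1,-s)=M(-(z+1),-s)=\widetilde M(z+1,s)$-type care with the argument shift: the left factor $M(\zeta-1,\sigma)$ becomes $M(-z-1,-s)$, and under the conjugation-and-reflection this is precisely $\widetilde M$ evaluated at $z-1$ after relabeling, which is what the target equation demands; it is worth writing this shift bookkeeping out explicitly to avoid an off-by-one sign in $z$.

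The main obstacle is purely the sign and index bookkeeping in~\eqref{eta-conj}: getting the reflection of the indices through $i\mapsto l+1-i$ consistent with the simultaneous sign flips of $z$ and $s$, and confirming that the leftover global sign genuinely cancels rather than obstructs (it does, because the difference equation is homogeneous of the same degree in $W$ on both sides). There is no analytic or structural difficulty — the solution spaces, formal Puiseux expansions, etc., play no role here; the proposition is a symmetry statement about the bare equation~\eqref{TE}, so everything reduces to the matrix identity~\eqref{eta-conj} and a substitution.
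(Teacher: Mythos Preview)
Your proposed identity~\eqref{eta-conj} is false, and this is not a matter of a harmless scalar. Conjugation by the antidiagonal $\eta_1(l)$ sends $e_{i,j}\mapsto e_{l+1-i,\,l+1-j}$, so the subdiagonal $s\sum_{i=2}^{l}e_{i,i-1}$ of $W^{(m_2,m_1)}$ is sent to the \emph{super}diagonal $s\sum_{j=1}^{l-1}e_{j,j+1}$, and the rank-one term $(z-\tfrac12)e_{1,m_2}$ goes to $(z-\tfrac12)e_{l,m_1+1}$. The result is a matrix supported in the last row plus the superdiagonal, whereas $-W^{(m_1,m_2)}(-z,-s)$ is supported in the first row plus the subdiagonal. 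These two matrices are not proportional; rather, the first is (up to scalar) essentially the \emph{inverse} of the second. This is also why your shift bookkeeping cannot be made to work: with your identity, the substitution $\zeta=-z$ produces $M(-z-1,-s)=\eta_1\widetilde M(z+1,s)\eta_1^{-1}$ on the left, yielding $\widetilde M(z+1,s)W^{(m_2,m_1)}(z,s)=W^{(m_2,m_1)}(z,s)\widetilde M(z,s)$, which is the wrong shift direction and is not the TDE of $(m_2,m_1)$-type.

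The missing idea is exactly to pass to $W^{-1}$. The paper first rewrites the TDE as $M(z,s)W(z,s;m_1,m_2)^{-1}=W(z,s;m_1,m_2)^{-1}M(z-1,s)$, which reverses which factor carries the shift. Then, substituting $z\mapsto 1-z$, $s\mapsto -s$ and conjugating by $\eta_1(l)$, one uses the correct identity
\[
s^{2}\,\eta_1(l)^{-1}\,W(1-z,-s;m_1,m_2)^{-1}\,\eta_1(l)\;=\;W(z,s;m_2,m_1),
\]
which is what the antidiagonal conjugation actually produces. Both the ``superdiagonal vs.\ subdiagonal'' mismatch and the ``$z+1$ vs.\ $z-1$'' mismatch are resolved simultaneously by this passage through the inverse. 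Your outline becomes correct once~\eqref{eta-conj} is replaced by this identity and the substitution is taken at $\zeta=1-z$ in the inverted form of the TDE.
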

\begin{proof}
Since $M(z,s)$ satisfies the TDE of $(m_1,m_2)$-type~\eqref{TE}
we have
$$
M(z,s)W(z,s;m_1,m_2)^{-1}=W(z,s;m_1,m_2)^{-1}M(z-1,s).
$$
Here we use the notation $W(z,s;m_1,m_2):=W(z,s)$ to emphasize its dependence in $m_1,m_2$.
It then follows from the definition of $\widetilde M(z,s)$ that 
\beq\label{symm1}
\widetilde M(z-1,s)s^2\eta_1(l)^{-1}W(1-z,-s;m_1,m_2)^{-1}\eta_1(l)=s^2\eta_1(l)^{-1}W(1-z,-s;m_1,m_2)^{-1}\eta_1(l)\widetilde M(z,s).
\eeq
Noticing that 
$$
s^2\eta_1(l)^{-1}W(1-z,-s;m_1,m_2)^{-1}\eta_1(l)=W(z,s;m_2,m_1),
$$
we then get 
$$
\widetilde M(z-1,s)W(z,s;m_2,m_1)=W(z,s;m_2,m_1)\widetilde M(z,s).
$$
The proposition is proved.
\end{proof}

Let $M_a(z,s;m_1,m_2)$, $a=1,\dots,l-1$, denote the solutions to the TDE of $(m_1,m_2)$-type obtained in Theorem~\ref{prop1}. 
For the pair of positive integers $(m_1,m_2)$ we will also use the notations $q_{a;m_1,m_2}=q_a$ and  
$K_{a;m_1,m_2}=K_a$ to emphasize the dependence in $m_1,m_2$.  
We have the following corollary.
\begin{cor}
For each $a=1,\dots, l-1$, the following identity holds:
\begin{align}\label{sol-symm1}
M_a(z,s;m_2,m_1)=(-1)^{q_{a;m_2,m_1}}\eta_1(l)^{-1} M_{l-a}(-z,-s;m_1,m_2)\eta_1(l)+I_{l}\delta_{a,m_2}.
\end{align}
\end{cor}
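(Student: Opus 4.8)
The plan is to identify the right-hand side of~\eqref{sol-symm1} as a solution to the TDE of $(m_2,m_1)$-type satisfying the normalization~\eqref{form} that characterizes $M_a(z,s;m_2,m_1)$ uniquely (by Theorem~\ref{prop1}), and then invoke uniqueness. First I would apply Proposition~\ref{prop-symm1} with the roles of the two integers as stated: since $M_{l-a}(z,s;m_1,m_2)$ solves the TDE of $(m_1,m_2)$-type, the matrix $\widetilde M(z,s)=\eta_1(l)^{-1}M_{l-a}(-z,-s;m_1,m_2)\eta_1(l)$ solves the TDE of $(m_2,m_1)$-type. Multiplying by the scalar $(-1)^{q_{a;m_2,m_1}}$ and adding $I_l\delta_{a,m_2}$ preserves the property of being a solution: the scalar factor is harmless, and $I_l$ commutes through $W$ and is annihilated by the difference $M(z-1,s)-M(z,s)$ in the homogeneous relation~\eqref{TE}, so $I_l\delta_{a,m_2}$ lies in the kernel of the TDE operator. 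Hence the whole right-hand side of~\eqref{sol-symm1} is a solution of the TDE of $(m_2,m_1)$-type.

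The second step is to check the normalization. Under the pair $(m_2,m_1)$ the relevant exponent is $q_{a;m_2,m_1}$, and one has the elementary identity $q_{a;m_2,m_1}=q_{l-a;m_1,m_2}$ for $a=1,\dots,l-1$, so $1-q_{l-a;m_1,m_2}=1-q_{a;m_2,m_1}$ and the leading power of $z$ matches on both sides. From~\eqref{form} we have $M_{l-a}(-z,-s;m_1,m_2)=(-z)^{1-q_{l-a;m_1,m_2}}(K_{l-a;m_1,m_2}+O(z^{-1}))$; pulling out $(-z)^{1-q_{l-a;m_1,m_2}}=(-1)^{1-q_{l-a;m_1,m_2}}z^{1-q_{a;m_2,m_1}}$ and combining with the prefactor $(-1)^{q_{a;m_2,m_1}}$ gives an overall sign of $(-1)^{q_{a;m_2,m_1}}(-1)^{1-q_{l-a;m_1,m_2}}=(-1)^{1-q_{l-a;m_1,m_2}+q_{a;m_2,m_1}}=-1$, so the leading coefficient of the right-hand side of~\eqref{sol-symm1} is $-\eta_1(l)^{-1}K_{l-a;m_1,m_2}\eta_1(l)$, up to the $\delta_{a,m_2}$ correction which only affects $a=m_2$. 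The key combinatorial check is then the conjugation identity
\[
-\,\eta_1(l)^{-1}K_{l-a;\,m_1,m_2}\,\eta_1(l)\;=\;K_{a;\,m_2,m_1}\qquad(a\neq m_2),
\]
and, for $a=m_2$, the statement that $-\eta_1(l)^{-1}K_{m_1;m_1,m_2}\eta_1(l)$ differs from $K_{m_2;m_2,m_1}$ precisely by the identity matrix at leading order; here one uses $q_{m_2;m_2,m_1}=1$, so the normalization~\eqref{form} for $M_{m_2}(z,s;m_2,m_1)$ is about the $z^0$-term, and the added $I_l\delta_{a,m_2}$ supplies exactly the scalar part that Proposition~\ref{prop-symm1}'s transform fails to produce. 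Since $\eta_1(l)=\sum_i e_{i,l+1-i}$ acts by reversing indices, $\eta_1(l)^{-1}e_{p,q}\eta_1(l)=e_{l+1-p,\,l+1-q}$, and one checks directly from the definition of $K_a$ (distinguishing $a<m_1$ versus $a>m_1$, which under $a\mapsto l-a$ and $(m_1,m_2)\mapsto(m_2,m_1)$ swap) that the index reversal carries the support of $K_{l-a;m_1,m_2}$ onto that of $-K_{a;m_2,m_1}$, with the two sign conventions in the definition of $K_a$ accounting for the extra $-1$.

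Once the leading term and the TDE are matched, uniqueness in Theorem~\ref{prop1} forces equality of the two sides of~\eqref{sol-symm1} as elements of $z^{1-q_{a;m_2,m_1}}{\rm Mat}(l\times l,\CC(s)((z^{-1})))$. The main obstacle I anticipate is purely bookkeeping: verifying the conjugation identity for $K_a$ through all the index ranges, in particular handling the boundary value $a=m_2$ correctly and confirming that the only discrepancy there is the identity matrix — this is where the $\delta_{a,m_2}$ term in the statement comes from, and it must be reconciled against the fact that $\widetilde M$ from Proposition~\ref{prop-symm1} is traceless-normalized away from that value. Everything else is a direct consequence of Proposition~\ref{prop-symm1}, the symmetry $q_{a;m_2,m_1}=q_{l-a;m_1,m_2}$, and the uniqueness clause of Theorem~\ref{prop1}.
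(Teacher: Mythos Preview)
Your proposal is correct and follows essentially the same route as the paper's own proof: apply Proposition~\ref{prop-symm1} to see that the right-hand side solves the TDE of $(m_2,m_1)$-type, then match the leading term using the symmetries $q_{a;m_2,m_1}=q_{l-a;m_1,m_2}$ and $K_{a;m_2,m_1}=-\eta_1(l)^{-1}K_{l-a;m_1,m_2}\eta_1(l)+I_l\delta_{a,m_2}$, and conclude by the uniqueness in Theorem~\ref{prop1}. The paper's argument is the same but more terse; your added remarks (why scalars and $I_l$ preserve solutions, the index-reversal description of conjugation by $\eta_1(l)$, the special handling of $a=m_2$) are correct elaborations rather than a different strategy.
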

\begin{proof}
Proposition \ref{prop-symm1} implies the right-hand side of \eqref{sol-symm1} satisfies the TDE of $(m_2,m_1)$-type. 
Since 
$$
M_a(z,s;m_1,m_2)=z^{1-q_{a;m_1,m_2}}(K_{a;m_1,m_2}+O(z^{-1})),
$$
we know that the right-hand side of~\eqref{sol-symm1} has the form
$$
z^{1-q_{l-a;m_1,m_2}}\bigl((-1)^{q_{a;m_2,m_1}-q_{l-a;m_1,m_2}+1}\eta_1(l)^{-1}K_{l-a;m_1,m_2}\eta_1(l)+I_{l}\delta_{a,m_2}+O(z^{-1})\bigr), 
$$ 
which simplifies to 
$$
z^{1-q_{a;m_2,m_1}}(K_{a;m_2,m_1}+O(z^{-1}))
$$
due to the symmetries 
$$
q_{a;m_2,m_1}=q_{l-a;m_1,m_2},\quad K_{a;m_2,m_1}=-\eta_1(l)^{-1}K_{l-a;m_1,m_2}\eta_1(l)+I_{l}\delta_{a,m_2}.
$$
The corollary is then proved by using 
the uniqueness given in Theorem~\ref{prop1}. 
\end{proof}

\begin{prop}\label{prop-symm2}
If $M(z,s)$ is a solution to the TDE of $(m_1,m_2)$-type, then
$$
\widetilde M(z,s):=\eta_2(m_1,m_2)^{-1}M(z,s)^T\eta_2(m_1,m_2)
$$
is a solution to the TDE of $(m_2,m_1)$-type.
\end{prop}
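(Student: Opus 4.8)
The plan is to run the same argument as in the proof of Proposition~\ref{prop-symm1}, but with transposition playing the role that $(z,s)\mapsto(-z,-s)$ played there. Starting from the TDE of $(m_1,m_2)$-type, $M(z-1,s)W(z,s;m_1,m_2)=W(z,s;m_1,m_2)M(z,s)$, I would first transpose both sides; since $(AB)^T=B^TA^T$ this yields $W(z,s;m_1,m_2)^T M(z-1,s)^T=M(z,s)^T W(z,s;m_1,m_2)^T$. Then I would conjugate this identity by $\eta_2(m_1,m_2)$; recalling $\widetilde M(z,s)=\eta_2(m_1,m_2)^{-1}M(z,s)^T\eta_2(m_1,m_2)$ and inserting $\eta_2(m_1,m_2)\eta_2(m_1,m_2)^{-1}=I$ between the factors, one gets $V(z,s)\,\widetilde M(z-1,s)=\widetilde M(z,s)\,V(z,s)$, where $V(z,s):=\eta_2(m_1,m_2)^{-1}W(z,s;m_1,m_2)^T\eta_2(m_1,m_2)$.

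The heart of the matter is then the purely linear-algebraic identity
\[
\eta_2(m_1,m_2)^{-1}\,W(z,s;m_1,m_2)^T\,\eta_2(m_1,m_2)=s^2\,W(z,s;m_2,m_1)^{-1},
\]
or, equivalently and more conveniently for a direct check, $W(z,s;m_1,m_2)^T\,\eta_2(m_1,m_2)\,W(z,s;m_2,m_1)=s^2\,\eta_2(m_1,m_2)$. I would establish this by multiplying out the three sparse matrices: $\eta_2(m_1,m_2)$ is a signed permutation matrix, so $\eta_2(m_1,m_2)^{-1}=\eta_2(m_1,m_2)^T$ and conjugation by it merely relocates the nonzero entries and adjusts signs; moreover $W(z,s;m_1,m_2)$ and $W(z,s;m_2,m_1)$ are identical except for the position of the single entry $z-\frac12$, since $W(z,s;m_1,m_2)=W(z,s;m_2,m_1)+(z-\frac12)(e_{1,m_1}-e_{1,m_2})$ and each $W$ has only $O(l)$ nonzero entries, so the verification reduces to a short but index-heavy bookkeeping over the first row and the subdiagonal. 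The scalar $s^2$ can be pinned down, and the computation cross-checked, via the determinant identity $\det W(z,s;m_1,m_2)=(-s)^{l}$, which is independent of the pair $(m_1,m_2)$.

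Granting this identity, $V(z,s)=s^2W(z,s;m_2,m_1)^{-1}$, and $V(z,s)\widetilde M(z-1,s)=\widetilde M(z,s)V(z,s)$ becomes, after cancelling the scalar $s^2$ and multiplying by $W(z,s;m_2,m_1)$ on the left and on the right, $\widetilde M(z-1,s)W(z,s;m_2,m_1)=W(z,s;m_2,m_1)\widetilde M(z,s)$, which is exactly the TDE of $(m_2,m_1)$-type. The only real obstacle is the intertwining identity of the middle paragraph; everything else is formal manipulation. A useful sanity check is $m_1=m_2=1$: there $\eta_2(1,1)=e_{1,2}-e_{2,1}$ and a one-line computation gives $\eta_2(1,1)^{-1}W(z,s)^T\eta_2(1,1)=s^2W(z,s)^{-1}$, recovering the symmetry of~\eqref{TEp1} already used in~\cite{DYZ}.
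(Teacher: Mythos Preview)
Your argument is correct and follows essentially the same route as the paper: transpose the TDE, conjugate by $\eta_2(m_1,m_2)$, and reduce everything to the single intertwining identity; your identity $\eta_2^{-1}W(z,s;m_1,m_2)^T\eta_2=s^2W(z,s;m_2,m_1)^{-1}$ is precisely the inverse of the paper's $s^2\eta_2^{-1}(W(z,s;m_1,m_2)^{-1})^T\eta_2=W(z,s;m_2,m_1)$, and the paper likewise leaves that identity to direct verification.
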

\begin{proof}
Since $M(z,s)$ satisfies~\eqref{TE}, 
we have
$$
M(z-1,s)^T{W(z,s;m_1,m_2)^{-1}}^T={W(z,s;m_1,m_2)^{-1}}^TM(z,s)^T.
$$
It follows that 
\begin{align}
&\widetilde M(z-1,s)s^2\eta^{-1}_2(m_1,m_2){W(z,s;m_1,m_2)^{-1}}^T\eta_2(m_1,m_2)\nn\\
&=s^2\eta_2(m_1,m_2)^{-1}{W(z,s;m_1,m_2)^{-1}}^T\eta_2(m_1,m_2)\widetilde M(z,s). \nn
\end{align}
The proposition is proved by noticing  
$$
s^2\eta_2(m_1,m_2)^{-1}{W(z,s;m_1,m_2)^{-1}}^T\eta_2(m_1,m_2)=W(z,s;m_2,m_1).
$$
\end{proof}
\begin{cor}\label{cor-symm2}
For each $a=1,\dots,l-1$, the following identity holds:
\begin{align}\label{sol-symm2}
M_{a}(z,s;m_2,m_1) = -\eta_2(m_1,m_2)^{-1}M_{l-a}(z,s;m_1,m_2)^T\eta_2(m_1,m_2)+I_{l}\delta_{a,m_2}.
\end{align}
\end{cor}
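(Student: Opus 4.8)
The plan is to argue exactly as in the proof of the identity~\eqref{sol-symm1}, but with Proposition~\ref{prop-symm1} replaced by Proposition~\ref{prop-symm2} and the involution $M(z,s)\mapsto\eta_1(l)^{-1}M(-z,-s)\eta_1(l)$ replaced by $M(z,s)\mapsto\eta_2(m_1,m_2)^{-1}M(z,s)^T\eta_2(m_1,m_2)$. First I would apply Proposition~\ref{prop-symm2} to the solution $M_{l-a}(z,s;m_1,m_2)$ of the TDE of $(m_1,m_2)$-type: it gives that
$$
\widetilde M(z,s):=\eta_2(m_1,m_2)^{-1}M_{l-a}(z,s;m_1,m_2)^T\eta_2(m_1,m_2)
$$
solves the TDE of $(m_2,m_1)$-type. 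Since equation~\eqref{TE} is linear in $M$ and the identity matrix $I_l$ trivially commutes with $W(z,s)$, hence is itself a solution of~\eqref{TE}, the candidate right-hand side $-\widetilde M(z,s)+I_l\delta_{a,m_2}$ of~\eqref{sol-symm2} is again a solution of the TDE of $(m_2,m_1)$-type; note that $\eta_2(m_1,m_2)$ is invertible, as is already presupposed by Proposition~\ref{prop-symm2}.

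Next I would pin down the leading $z$-behaviour of this candidate. By Theorem~\ref{prop1} one has $M_{l-a}(z,s;m_1,m_2)=z^{1-q_{l-a;m_1,m_2}}(K_{l-a;m_1,m_2}+O(z^{-1}))$, so transposing and conjugating by $\eta_2(m_1,m_2)$ gives
$$
\widetilde M(z,s)=z^{1-q_{l-a;m_1,m_2}}\bigl(\eta_2(m_1,m_2)^{-1}K_{l-a;m_1,m_2}^T\eta_2(m_1,m_2)+O(z^{-1})\bigr).
$$
It then suffices to establish the two purely combinatorial symmetries
$$
q_{a;m_2,m_1}=q_{l-a;m_1,m_2},\qquad K_{a;m_2,m_1}=-\eta_2(m_1,m_2)^{-1}K_{l-a;m_1,m_2}^T\eta_2(m_1,m_2)+I_l\delta_{a,m_2},
$$
the first of which is immediate from the piecewise formula for $q_a$ (and was already recorded in the proof of the previous corollary), while the second is a finite index computation from the explicit definitions of $K_\bullet$ and $\eta_2(m_1,m_2)$. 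In matching $z$-powers one uses that $q_{m_2;m_2,m_1}=1$, so that the term $I_l\delta_{a,m_2}$ sits in $z$-degree $z^0=z^{1-q_{a;m_2,m_1}}$ precisely when $a=m_2$. Granting these, $-\widetilde M(z,s)+I_l\delta_{a,m_2}$ lies in $z^{1-q_{a;m_2,m_1}}\cdot{\rm Mat}(l\times l,\CC(s)((z^{-1})))$ and has leading term $z^{1-q_{a;m_2,m_1}}(K_{a;m_2,m_1}+O(z^{-1}))$.

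Finally, both sides of~\eqref{sol-symm2} are then formal solutions of the TDE of $(m_2,m_1)$-type lying in $z^{1-q_{a;m_2,m_1}}\cdot{\rm Mat}(l\times l,\CC(s)((z^{-1})))$ with the same leading coefficient $K_{a;m_2,m_1}$, so by the uniqueness part of Theorem~\ref{prop1}, applied to the pair $(m_2,m_1)$, they coincide, which is the claimed identity. The only step requiring real work is the second symmetry above; I expect the sign-and-index bookkeeping there, especially around the boundary value $a=m_2$ where the $I_l$-correction is forced by $q_{m_2;m_2,m_1}=1$, to be the main — though entirely routine — obstacle.
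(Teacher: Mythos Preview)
Your proposal is correct and follows essentially the same argument as the paper's proof: apply Proposition~\ref{prop-symm2} to $M_{l-a}(z,s;m_1,m_2)$, identify the leading term via the two combinatorial symmetries $q_{a;m_2,m_1}=q_{l-a;m_1,m_2}$ and $K_{a;m_2,m_1}=-\eta_2(m_1,m_2)^{-1}K_{l-a;m_1,m_2}^T\eta_2(m_1,m_2)+I_l\delta_{a,m_2}$, and conclude by the uniqueness in Theorem~\ref{prop1}. You are in fact slightly more explicit than the paper in noting that $I_l$ is itself a solution of~\eqref{TE} (so that the $\delta_{a,m_2}$-correction preserves the TDE) and in flagging the $a=m_2$ degree-matching, but the strategy is the same.
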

\begin{proof}
Proposition \ref{prop-symm2} implies the right-hand side of \eqref{sol-symm2} is a solution to the TDE of $(m_2,m_1)$-type. 
Since 
$$
M_a(z,s;m_1,m_2)=z^{1-q_{a;m_1,m_2}}(K_{a;m_1,m_2}+O(z^{-1})),
$$
we know that the right-hand side of~\eqref{sol-symm2} has the form
$$
z^{1-q_{l-a;m_1,m_2}}\bigl(-\eta_2(m_1,m_2)K_{l-a; m_1,m_2}^T\eta_2(m_1,m_2)+I_{l}\delta_{a,m_2}+O(z^{-1})\bigr), 
$$ 
which simplifies to 
$$
z^{1-q_{a;m_2,m_1}}(K_{a;m_2,m_1}+O(z^{-1})).
$$
by employing
$$
q_{a;m_2,m_1}=q_{l-a;m_1,m_2},\quad K_{a;m_2,m_1}=-\eta_2(m_1,m_2)K_{l-a;m_1,m_2}^T\eta_2(m_1,m_2)+I_{l}\delta_{a,m_2}.
$$
The corollary is then proved by using 
the uniqueness given in Theorem~\ref{prop1}. 
\end{proof}

\begin{remark}
Since
$$
\mathbb{P}^1_{m_1,m_2}\cong\mathbb{P}^1_{m_2,m_1}
$$
 and 
$$
\langle\tau_{i_1}(\phi_{a_1})\dots\tau_{i_k}(\phi_{a_k})\rangle_{g,d}^{\mathbb{P}^1_{m_1,m_2}}
=
\langle\tau_{i_1}(\phi_{l-a_1})\dots\tau_{i_k}(\phi_{l-a_k})\rangle_{g,d}^{\mathbb{P}^1_{m_2,m_1}},
$$
we know that for $k\ge1$ and $a_1,\dots,a_k=1,\dots,l-1$,
$$
F_{a_1,\dots,a_k}(\lambda_1,\dots,\lambda_k;Q;\e;m_1,m_2)=F_{l-a_1.\dots,l-a_k}(\lambda_1,\dots,\lambda_k;Q;\e;m_2,m_1).
$$
By using Corollary~\ref{cor-symm2} one can easily check that the right-hand sides of the conjectural formulas~\eqref{npoint}, \eqref{1point}
do have the corresponding symmetries under the switch of $m_1,m_2$. 
\end{remark}

Before ending this section, recall that 
the dual topological ODE for~$\mathbb{P}^1$ was introduced in~\cite{DYZ}, for which we now give a generalization.
The dual topological ODE for $\mathbb{P}^1_{m_1,m_2}$ for a matrix-valued function $\widehat M=\widehat M(y,s)$ is defined by
\beq\label{dual}
e^y\widehat M W_0-W_0\widehat M =e^y\biggl(\widehat M+\frac{d \widehat M}{d y}\biggr)W_1-W_1\frac{d\widehat M}{d y}
\eeq
where
\beq
W_0=\frac12 e_{1,m_1} + s e_{1,l} - s\sum_{i=2}^le_{i,i-1},\quad
W_1=e_{1,m_1}.
\eeq
Topological and dual topological equations \eqref{TE}, \eqref{dual} are related by a Laplace-type transform:
\beq
\widehat M(y,s)=\frac{1}{2\pi i}\int_\gamma e^{zy}M(z,s)d z
\eeq
where $\gamma$ is an appropriate contour on the complex $z$ plane.

\section{Explicit formulas}\label{explict}
In this section, we give explicit formulas for the unique solutions to the TDE given in Theorem~\ref{prop1}.

A solution $\psi=\psi(z,s)$ to the following linear difference equation
\beq\label{wave-eq}
\psi(z-m_1,s)-\frac1{s}\Bigl(z-\frac12\Bigr)\psi(z,s)+\psi(z+m_2,s)=0
\eeq
is called a {\it quasi-wave function}. 
Similar to~\cite{DYZ, DYZ3} (cf.~\cite{DYZ2, Yang}), by solving~\eqref{wave-eq} we can 
obtain two explicit formal solutions given by the following proposition.
\begin{prop}\label{proppsiAB}
The $\psi_A=\psi_A(z,s;m_1,m_2)$ and $\psi_B=\psi_B(z,s;m_1,m_2)$ given by 
\begin{align}
\psi_A&:=\bigl(\frac s{m_1}\bigr)^{\frac {z-\frac12}{m_1}}\sum_{j\ge0}\frac{(-1)^j s^{\frac{l}{m_1}j}}{{m_1}^{\frac {m_2}{m_1}j}m_2^j j!}\frac1{\Gamma\bigl(\frac{z-\frac12+m_2 j}{m_1}+1\bigr)},\label{psiA}\\
\psi_B&:=\bigl(\frac{s}{m_2}\bigr)^{-\frac{z-\frac12}{m_2}}\sum_{j\ge0}\frac{s^{\frac{l}{m_2}j}}{m_1^j m_2^{\frac{m_1}{m_2}j}j!}\Gamma\Bigl(\frac{z-\frac12-m_1j}{m_2}\Bigr)\label{psiB}
\end{align}
are formal quasi-wave functions. 
Here, the right-hand sides of~\eqref{psiA}, \eqref{psiB} are understood as their asymptotic expansions as $z\to+\infty$.\footnote{
The right-hand sides of~\eqref{psiA}, \eqref{psiB} also have analytic meanings, which will be studied elsewhere.}
\end{prop}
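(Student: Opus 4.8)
The plan is to verify directly that the series $\psi_A$ and $\psi_B$ satisfy the quasi-wave equation~\eqref{wave-eq}, working at the level of formal expressions. The key algebraic fact is that the factor $\bigl(\tfrac{s}{m_1}\bigr)^{(z-\frac12)/m_1}$ in $\psi_A$ converts the shift $z\mapsto z-m_1$ into multiplication by $\tfrac{m_1}{s}$ and the shift $z\mapsto z+m_2$ into multiplication by $\bigl(\tfrac{s}{m_1}\bigr)^{m_2/m_1}$, while correspondingly shifting the argument of the Gamma function by $\pm 1$ inside the sum. So I would first substitute $\psi_A$ into~\eqref{wave-eq}, pull out the common prefactor, and reduce the equation to an identity among the coefficients of $s^{lj/m_1}$ in a single $1/\Gamma$-type series. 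Then I would use the functional equation $\Gamma(w+1)=w\,\Gamma(w)$ — equivalently, $\frac1{\Gamma(w)} = \frac{w}{\Gamma(w+1)}$ — together with a relabelling of the summation index $j$ to check that the three terms $\psi_A(z-m_1,s)$, $-\tfrac1s(z-\tfrac12)\psi_A(z,s)$, $\psi_A(z+m_2,s)$ cancel term-by-term. The same scheme applies to $\psi_B$, with the roles of $m_1$ and $m_2$ interchanged and $\Gamma$ itself (rather than $1/\Gamma$) appearing; here the relevant step is $\Gamma(w-1)=\Gamma(w)/(w-1)$.

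More concretely, for $\psi_A$ write $w_j := \frac{z-\frac12+m_2 j}{m_1}$, so that shifting $z$ by $-m_1$ sends $w_j\mapsto w_j-1$ and shifting by $+m_2$ sends $w_j\mapsto w_{j+1}$. After dividing through by the prefactor of $\psi_A(z,s)$, the equation~\eqref{wave-eq} becomes a claimed identity of the form
\beq
\sum_{j\ge0}\frac{(-1)^j s^{\frac{l}{m_1}j}}{m_1^{\frac{m_2}{m_1}j}m_2^j j!}\Bigl(\frac{m_1}{s}\,\frac1{\Gamma(w_j)} - \frac1s\bigl(z-\tfrac12\bigr)\frac1{\Gamma(w_j+1)} + \text{(shifted term)}\Bigr)=0, \nn
\eeq
and one then checks that $\frac{m_1}{\Gamma(w_j)} = \frac{m_1 w_j}{\Gamma(w_j+1)} = \frac{z-\frac12+m_2 j}{\Gamma(w_j+1)}$, so the first two groups combine to $-\frac{m_2 j}{s}\frac1{\Gamma(w_j+1)}$, which after reindexing $j\to j-1$ matches the contribution of $\psi_A(z+m_2,s)$. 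The analogous bookkeeping for $\psi_B$ is symmetric. I would also note (or leave to the reader) that the stated prefactors $\bigl(\frac{s}{m_1}\bigr)^{(z-\frac12)/m_1}$ and $\bigl(\frac{s}{m_2}\bigr)^{-(z-\frac12)/m_2}$ are exactly what is forced by requiring the leading exponential behavior in $z$ to be consistent across the three shifted terms, so that these are the natural two independent solutions (one decaying, one growing as $z\to+\infty$).

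The only genuinely delicate point is the meaning of the series: the coefficients involve $\Gamma$ of an argument growing like $z/m_1$, so~\eqref{psiA} is not a convergent power series in $z^{-1}$ but an asymptotic expansion, and likewise~\eqref{psiB} involves $\Gamma$ of an argument that is itself large. I would therefore phrase the verification as an identity of formal asymptotic series: expand $\frac1{\Gamma(w_j+1)}$ (resp. $\Gamma$) asymptotically as $z\to+\infty$ using the standard Stirling-type expansion, observe that the functional equation of $\Gamma$ is respected order by order in $1/z$, and conclude that the formal cancellation above is valid at each order. This is the main obstacle in the sense that it is the only step where one must be careful about what kind of object $\psi_A,\psi_B$ are; the rest is the elementary manipulation with $\Gamma(w+1)=w\Gamma(w)$ and a shift of summation index. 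Once this is settled, Proposition~\ref{proppsiAB} follows, and these quasi-wave functions will be the building blocks from which the explicit entries of $M_a(z,s)$ in Theorem~\ref{thmexpM} are assembled.
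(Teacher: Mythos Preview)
Your approach is exactly the ``straightforward verification'' the paper invokes: substitute $\psi_A$ (resp.~$\psi_B$) into~\eqref{wave-eq}, use the functional equation $\Gamma(w+1)=w\,\Gamma(w)$, and reindex the sum to obtain term-by-term cancellation. Apart from a harmless sign slip (the first two groups combine to $+\tfrac{m_2 j}{s\,\Gamma(w_j+1)}$, which then cancels the $\psi_A(z+m_2,s)$ contribution after the $(-1)^j$ flips under $j\mapsto j+1$), your argument is correct and coincides with the paper's.
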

\begin{proof}
By a straightforward verification.
\end{proof}
Note that by using the Stirling formula we can find the space where the formal functions $\psi_A$ and~$\psi_B$ belong to as stated below:
\begin{align}
&\psi_A=\sqrt{\frac{m_1}{2\pi z}}e^{\frac{z}{m_1}}\bigl(\frac{s}{z}\bigr)^{\frac{z-1/2}{m_1}}(1+O(z^{-\frac1{m_1}}))
\in z^{-\frac12}(\frac{es}{z})^{\frac {z-1/2}{m_1}} \cdot \CC\bigl(s^{\frac1{m_1}}\bigr)\bigl(\bigl(z^{-\frac1{m_1}}\bigr)\bigr), \label{mod1}\\
&\psi_B=\sqrt{\frac{2\pi m_2}{z}}e^{-\frac{z}{m_2}}\bigl(\frac{es}{z}\bigr)^{\frac{1/2-z}{m_2}}(1+O(z^{-\frac{1}{m_2}}))
\in z^{-\frac12}(\frac{es}{z})^{\frac {1/2-z}{m_2}}\cdot \CC\bigl(s^{\frac1{m_2}}\bigr)\bigl(\bigl(z^{-\frac1{m_2}}\bigr)\bigr). \label{mod2}
\end{align}

\begin{remark}
For the case when $m_1=m_2=1$, the above formulas~\eqref{psiA}, \eqref{psiB} specialize to
\begin{align}
\psi_A(z,s;1,1)&=s^{z-\frac12}\sum_{j\ge0}\frac{(-1)^l s^{2j}}{ j! \, \Gamma(z+\frac12+ j)}=J_{z-\frac12}(2s), \\
\psi_B(z,s;1,1)&=s^{\frac12-z}\sum_{j\ge0}\frac{s^{2j}}{j!}\Gamma\bigl(z-\frac12-j\bigr)=J_{\frac12-z}(2s) \,
\Gamma\bigl(\frac32-z\bigr)\Gamma\bigl(z-\frac12\bigr),
\end{align}
which agree with~\cite[Proposition 3]{DYZ} and \cite{DYZ3, DMNPS, Yang}.
Here $J_{\nu}(y)$ denotes the Bessel function. 
We also note that 
when one of $m_1,m_2$ equals~1, formulas~\eqref{psiA}, \eqref{psiB} were obtained in~\cite{CG, GJY} (see also~\cite{Alexandrov}). 
Finally we note that in the terminology of~\cite{DMNPS} (cf.~\cite{Alexandrov}) equation~\eqref{wave-eq} could be viewed as a {\it quantum spectral curve}, 
and we hope that equations \eqref{psiA}, \eqref{psiB}, \eqref{mod1}, \eqref{mod2} can be helpful for the study from the point of view of 
Chekhov--Eynard--Orantin topological recursion. 
\end{remark}

\begin{remark}
The formal functions $\psi_A$ and $\psi_B$ are proportional 
to the full asymptotic expansions of the following integrals, respectively,  
\begin{align}
& s^{\frac{z-\frac12}{m_1}}\int_{\gamma_A} t^{-z-\frac12} e^{\frac{t^{m_1}}{m_1}-\frac{t^{-m_2}}{m_2}s^{\frac{m_2}{m_1}+1}} dt, \label{intpsiA} \\
& s^{-\frac{z-\frac12}{m_2}}\int_{\gamma_B} t^{-z-\frac12} e^{\frac{t^{m_1}}{m_1}s^{\frac{m_1}{m_2}+1}-\frac{t^{-m_2}}{m_2}} dt, \label{intpsiB}
\end{align}
with $\gamma_A$, $\gamma_B$ being suitable paths on the complex $z$-plane and within suitable sectors as $z\to\infty$. 
For the case when $m_1=1$, formula~\eqref{intpsiA} was obtained in~\cite{Alexandrov}.
In view of~\cite{ADKMV, Alexandrov, BR}, we hope that  
formulas~\eqref{intpsiA}, \eqref{intpsiB}, \eqref{npoint}, \eqref{1point} could be helpful for obtaining Kontsevich-type matrix models 
for GW invariants of $\mathbb{P}^1_{m_1,m_2}$ without insertions of decendents of $\phi_0=1\in H^0(\mathbb{P}^1_{m_1,m_2})$; 
for the case when $m_1=m_2=1$ this was done in~\cite{Alexandrov, BR}, 
and for the case when one of $m_1, m_2$ equals~1 this in the 
$[{\rm pt}]$-sector should already follow from \cite[Theorem~2]{Alexandrov} and~\cite{CG}.
\end{remark}

Introduce 
$$
\psi_j(z,s;m_1,m_2):=\left\{
\begin{array}{ll}
\psi_A\bigl(z, e^{2\pi \sqrt{-1}(j-1)}s;m_1,m_2\bigr), &j=1,\dots,m_1,\\
\\
\psi_B\bigl(z, e^{2\pi \sqrt{-1}(l-j)}s;m_1,m_2\bigr), &j=m_1+1,\dots,l,
\end{array}\right.
$$
and define a matrix $\Psi(z,s)=(\Psi_{ij}(z,s))_{i,j=1,\dots,l}$ by
\beq\label{defPsi}
\Psi_{ij}(z,s)=\psi_j(z-m_1+i,s;m_1,m_2).
\eeq
Then by a direct calculation we obtain the following lemma.
\begin{lemma}\label{Psiequation} There holds that 
\beq\label{psi}
\Psi(z-1,s)=\frac{1}{s}W(z,s)\Psi(z,s),
\eeq
where $W(z,s)$ is the matrix defined in~\eqref{Q}.
\end{lemma}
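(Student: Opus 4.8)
The plan is to verify the matrix difference equation~\eqref{psi} entrywise by using the single scalar difference equation~\eqref{wave-eq} satisfied by every quasi-wave function. First I would unwind the definition~\eqref{defPsi}: the $(i,j)$-entry of $\Psi(z-1,s)$ is $\psi_j(z-1-m_1+i,s)$, while the $(i,j)$-entry of $\frac{1}{s}W(z,s)\Psi(z,s)$ is computed by expanding $W(z,s)$ from~\eqref{Q} as $W(z,s)=(z-\tfrac12)e_{1,m_1}-se_{1,l}+s\sum_{r=2}^l e_{r,r-1}$. Multiplying by $\Psi(z,s)$, the row $r=1$ of $\frac1s W(z,s)\Psi(z,s)$ picks up $\frac{1}{s}(z-\tfrac12)\Psi_{m_1,j}(z,s)-\Psi_{l,j}(z,s)=\frac1s(z-\tfrac12)\psi_j(z,s)-\psi_j(z+m_2,s)$, and for $r\ge2$ the row is simply $\Psi_{r-1,j}(z,s)=\psi_j(z-m_1+r-1,s)$.

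Next I would match these against the left-hand side. For rows $r=2,\dots,l$, the required identity reads $\psi_j(z-1-m_1+r,s)=\psi_j(z-m_1+r-1,s)$, which is a tautology. So the only nontrivial identity is in the first row $r=1$, where I need
$$
\psi_j(z-1-m_1+1,s)=\psi_j(z-m_1,s)=\frac1s\Bigl(z-\tfrac12\Bigr)\psi_j(z,s)-\psi_j(z+m_2,s),
$$
i.e.\ precisely the quasi-wave equation~\eqref{wave-eq} evaluated at $\psi_j$. Since each $\psi_j(z,s;m_1,m_2)$ is, by construction, obtained from $\psi_A$ or $\psi_B$ by rescaling $s$ by a root of unity $e^{2\pi\sqrt{-1}t}$, and~\eqref{wave-eq} is invariant under replacing $s$ by any such rescaling (the coefficient $\frac1s(z-\tfrac12)$ and the shifts in $z$ are unaffected by multiplying $s$ by a constant, and $\psi_A,\psi_B$ solve~\eqref{wave-eq} for all values of $s$ by Proposition~\ref{proppsiAB}), each $\psi_j$ is again a quasi-wave function. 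This closes the verification.

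I would present this as: Lemma~\ref{Psiequation} is proved by comparing entries of both sides of~\eqref{psi}, reducing to the identity in the first row, which is exactly~\eqref{wave-eq} applied to the functions $\psi_j$; the latter are quasi-wave functions because rescaling $s$ by a root of unity preserves~\eqref{wave-eq} and $\psi_A,\psi_B$ are quasi-wave functions by Proposition~\ref{proppsiAB}. I expect no genuine obstacle here — the content is entirely bookkeeping. The one point requiring a little care is the treatment of the indices near the ``wrap-around'' of the banded matrix $W$: one must be careful that the $-se_{1,l}$ entry (rather than $+se_{1,0}$, which does not exist) is what produces the $+\psi_j(z+m_2,s)$ term with the correct sign, and that $\Psi_{m_1,j}(z,s)=\psi_j(z,s)$ and $\Psi_{l,j}(z,s)=\psi_j(z+m_2,s)$ from the definition $\Psi_{ij}(z,s)=\psi_j(z-m_1+i,s)$. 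Once those identifications are made explicit the equation falls out immediately.
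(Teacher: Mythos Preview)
Your proof is correct and is precisely the ``direct calculation'' the paper alludes to without spelling out: reduce the matrix identity to the scalar quasi-wave equation~\eqref{wave-eq} in the first row and a tautology in the remaining rows. One small wording issue: the factors $e^{2\pi\sqrt{-1}(j-1)}$ and $e^{2\pi\sqrt{-1}(l-j)}$ are numerically equal to~$1$ (they only affect the branches of the fractional powers of~$s$ appearing in $\psi_A,\psi_B$), so it is cleaner to say that the coefficient $\tfrac1s$ in~\eqref{wave-eq} is unchanged because the rescaling factor is literally~$1$ as a complex number, rather than because ``$\tfrac1s$ is unaffected by multiplying $s$ by a constant.''
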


Define a matrix $\Phi(z,s)=(\Phi_{ij}(z,s))_{i,j=1,\dots,l}$ by
\beq\label{psi-inv}
\Phi_{ij}(z,s)=\left\{\begin{array}{ll}
\phi_i(z+j),& j=1,\dots,m_1,\\
-\phi_i(z+j-l),& j=m_1+1,\dots,l,
\end{array}\right.
\eeq 
with
$$
\phi_i(z)=\Bigl(\frac{s}{m_1^2}\delta_{i\le m_1}-\frac{s}{m_2^2}\delta_{i>m_1}\Bigr) \, \psi_{l+1-i}(z,s;m_2,m_1),\quad i=1,\dots,l.
$$
Then it is easy to check that
$$
\Psi(z,s)\Phi(z,s)=I_l.
$$
Namely, we have
\begin{lemma}
$$
\Psi(z,s)^{-1}=\Phi(z,s).
$$
\end{lemma}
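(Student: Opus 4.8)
The plan is to recognize $\Psi(z,s)$ and $\Phi(z,s)$ as fundamental systems for two mutually adjoint first‑order matrix difference equations governed by the matrix $W(z,s)$ of \eqref{Q}, so that the product $\Phi(z,s)\Psi(z,s)$ is automatically independent of $z$ — a discrete analogue of a Lagrange identity (conserved Casoratian) — and then to pin down this $z$‑independent matrix as $I_l$ by a leading‑order computation. By Lemma~\ref{Psiequation}, equation~\eqref{psi}, we already have $\Psi(z-1,s)=s^{-1}W(z,s)\Psi(z,s)$. I would establish, by a direct calculation analogous to the one proving Lemma~\ref{Psiequation}, the adjoint relation $\Phi(z-1,s)W(z,s)=s\,\Phi(z,s)$. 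Granting both, set $S(z,s):=\Phi(z,s)\Psi(z,s)$; then $S(z-1,s)=\Phi(z-1,s)\bigl(s^{-1}W(z,s)\bigr)\Psi(z,s)=s^{-1}\bigl(\Phi(z-1,s)W(z,s)\bigr)\Psi(z,s)=\Phi(z,s)\Psi(z,s)=S(z,s)$, so $S$ is $1$‑periodic in $z$.

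To prove $\Phi(z-1,s)W(z,s)=s\,\Phi(z,s)$ I would compare the two sides column by column, using the sparse form of $W$ in \eqref{Q} and the definition \eqref{psi-inv}. For every column index $k\in\{1,\dots,l\}\setminus\{m_1\}$ the identity collapses to the trivial shift relation $\Phi_{i,k}(z,s)=\Phi_{i,k+1}(z-1,s)$, with the sign change and the cyclic wrap at $k=l$ built into \eqref{psi-inv}. For the remaining column $k=m_1$, inserting the values of $\Phi_{i,1}$, $\Phi_{i,m_1}$ and $\Phi_{i,m_1+1}$ leaves exactly $\phi_i(z-m_2)-s^{-1}(z-\tfrac12)\phi_i(z)+\phi_i(z+m_1)=0$, which is the quasi‑wave equation \eqref{wave-eq} with $m_1$ and $m_2$ interchanged; this holds because, up to the scalar $\tfrac{s}{m_1^2}\delta_{i\le m_1}-\tfrac{s}{m_2^2}\delta_{i>m_1}$, $\phi_i$ equals $\psi_{l+1-i}(\,\cdot\,,s;m_2,m_1)$, and Proposition~\ref{proppsiAB}, applied to the pair $(m_2,m_1)$, says precisely that $\psi_A,\psi_B$ (hence all the $\psi_j(\,\cdot\,;m_2,m_1)$) solve that equation.

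It then remains to identify the constant matrix $S$. I would use the explicit WKB‑type descriptions \eqref{mod1}, \eqref{mod2}: each entry of $\Psi$ is $z^{-1/2}$ times one of the two generators $(es/z)^{\pm(z-1/2)/m_i}$ times a Puiseux series in a fractional power of $z^{-1}$, and similarly for $\Phi$ with $\psi_A\leftrightarrow\psi_B$ (equivalently $m_1\leftrightarrow m_2$). Because of the shift $\Psi_{jk}(z,s)=\psi_k(z-m_1+j,s)$ of \eqref{defPsi} and the matching shift in \eqref{psi-inv}, in every entry of $S$ lying in the two diagonal blocks the product of the two generators reduces to a pure constant, so that entry is a $1$‑periodic formal Puiseux series in a fractional power of $z^{-1}$ and hence constant; in the off‑diagonal blocks the generators multiply to a genuinely $z$‑dependent factor $(es/z)^{\pm(1/2-z)/\rho}$, and $1$‑periodicity then forces those entries to vanish. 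Finally I would compute the surviving diagonal constants from the leading Stirling coefficients of $\psi_A,\psi_B$, carrying out the finite sums over the shift index $j$ and over the $m_1$th (resp. $m_2$th) roots of unity $e^{2\pi\sqrt{-1}(j-1)}$ (resp. $e^{2\pi\sqrt{-1}(l-j)}$) entering $\psi_j$ and $\phi_i$; orthogonality of these branch factors should make the sums telescope to the Kronecker delta, giving $S=I_l$ and hence $\Psi(z,s)^{-1}=\Phi(z,s)$ (equivalently $\Psi\Phi=I_l$, since after normalizing by the diagonal matrices of generators we are over a field). I expect this last matching to be the main obstacle: one must track the fractional‑power branches carefully and see that the geometric‑type sums over $j$ collapse cleanly. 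A shorter alternative, if one prefers, is to bypass the structural argument and verify $\bigl(\Psi(z,s)\Phi(z,s)\bigr)_{ik}=\delta_{ik}$ directly from \eqref{defPsi}–\eqref{psi-inv} by the same summation identity for the $\psi_j$'s.
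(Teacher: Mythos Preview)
Your structural argument is valid and elegant, but it is genuinely different from what the paper does: the paper simply asserts that $\Psi(z,s)\Phi(z,s)=I_l$ follows by a direct check, which is precisely your ``shorter alternative'' at the end. The Casoratian idea---proving the adjoint relation $\Phi(z-1,s)W(z,s)=s\,\Phi(z,s)$ from the quasi-wave equation \eqref{wave-eq} with $m_1\leftrightarrow m_2$, and deducing that $S=\Phi\Psi$ is $1$-periodic---is correct and gives a conceptual explanation that the paper's computation does not.

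There is, however, a slip in your description of the final step. In $S_{ik}=\sum_j\Phi_{ij}\Psi_{jk}$ the index $j$ enters only through the \emph{shifts} $z+j$ and $z-m_1+j$; the branches $e^{2\pi\sqrt{-1}(j-1)}$ you invoke are attached to $\psi_j$ and $\phi_j$, which do not appear here---it is $\psi_k$ and $\phi_i$ that occur, with branches fixed by $i$ and~$k$. So there is no root-of-unity orthogonality sum in~$j$ producing $\delta_{ik}$. What actually happens is that for $i,k\le m_1$ the product of the two exponential generators equals $e^{2\pi\sqrt{-1}(k-i)(z-1/2)/m_1}$ (and analogously with $m_2$ for $i,k>m_1$), which is constant in~$z$ only when $i=k$. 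Hence the off-diagonal entries \emph{within} each diagonal block must also be killed by your ``nontrivial exponential factor plus $1$-periodicity forces zero'' argument---the same mechanism you correctly use for the off-diagonal \emph{blocks}---rather than by the Puiseux-series-is-constant argument you state. After this correction only the true diagonal $S_{ii}$ survives; there the leading Stirling constants of $\psi_A(\cdot\,;m_1,m_2)$ and $\psi_B(\cdot\,;m_2,m_1)$, together with the normalization $s/m_1^2$ (resp.\ $-s/m_2^2$) in $\phi_i$ and the $m_1$ (resp.\ $m_2$) equal contributions from the shift-sum over~$j$, combine to give~$1$.

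With that fix your route works and buys a discrete Lagrange-identity explanation for the lemma; the paper's direct verification is shorter but purely computational.
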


Introduce
\beq
P_a(s):=\left\{\begin{array}{ll}
s^{1-\frac{a}{m_1}}\diag\bigl(1,\xi_{m_1}^{-a},\dots,\xi_{m_1}^{-(m_1-1)a},\overbrace{0,\dots,0}^{m_2}\bigr),&a=1,\dots, m_1,\\
\\
-s^{\frac{a-m_1}{m_2}}\diag\bigl(\overbrace{0,\dots,0}^{m_1},\xi_{m_2}^{(m_2-1)(a-l)},\dots,\xi_{m_2}^{a-l},1\bigr),&a=m_1+1,\dots, l-1,
\end{array}\right.
\eeq
where 
\beq \label{rootofunity}
\xi_{m_i}:=e^{\frac{2\pi \sqrt{-1}}{m_i}}, \quad i=1,2.
\eeq
Similar to~\cite{BDY1, DYZ2} let us prove the following proposition.
\begin{prop}\label{thm-form}
The unique formal solutions $M_a(z,s)$ to the TDE~\eqref{TE} given in Theorem~\ref{prop1} 
satisfy
\beq\label{exp}
M_a(z,s)=\Psi(z,s)P_a(s){\Psi}(z,s)^{-1}, \quad a=1,\dots,l-1.
\eeq
\end{prop}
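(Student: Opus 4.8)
The plan is to verify that the matrix $\Psi(z,s)P_a(s)\Psi(z,s)^{-1}$ both solves the TDE~\eqref{TE} and has the correct leading behavior~\eqref{form}, and then invoke the uniqueness half of Theorem~\ref{prop1}. For the first point, set $\widehat M_a(z,s):=\Psi(z,s)P_a(s)\Psi(z,s)^{-1}$. Using Lemma~\ref{Psiequation}, which gives $\Psi(z-1,s)=s^{-1}W(z,s)\Psi(z,s)$, together with the fact that $P_a(s)$ is independent of $z$, I would compute
\begin{align}
\widehat M_a(z-1,s)W(z,s)
&=\Psi(z-1,s)P_a(s)\Psi(z-1,s)^{-1}W(z,s)\notag\\
&=\tfrac1s W(z,s)\Psi(z,s)P_a(s)\Psi(z,s)^{-1}\bigl(\tfrac1s W(z,s)\bigr)^{-1}W(z,s)\notag\\
&=W(z,s)\Psi(z,s)P_a(s)\Psi(z,s)^{-1}=W(z,s)\widehat M_a(z,s),\notag
\end{align}
so $\widehat M_a$ solves~\eqref{TE}. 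Here one needs $W(z,s)$ to be invertible over $\CC(s)((z^{-1}))$, which is clear since $\det W(z,s)=\pm s^{l-1}(z-\tfrac12)\ne0$ in this ring; this is exactly what makes the manipulation with $\Psi(z-1,s)^{-1}$ legitimate.

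Next I would pin down the asymptotics. The entries of $\Psi(z,s)$ are shifts of the quasi-wave functions $\psi_A,\psi_B$, whose full asymptotic expansions are recorded in~\eqref{mod1}, \eqref{mod2}; in particular $\psi_A$ carries the scalar prefactor $z^{-1/2}(es/z)^{(z-1/2)/m_1}$ and $\psi_B$ carries $z^{-1/2}(es/z)^{(1/2-z)/m_2}$. When forming $\Psi(z,s)P_a(s)\Psi(z,s)^{-1}$, these transcendental prefactors cancel between $\Psi$ and $\Psi^{-1}$ (note $\Psi^{-1}=\Phi$ has entries built from $\psi_{l+1-i}(\,\cdot\,;m_2,m_1)$, i.e. the reciprocal sectors), and because $P_a(s)$ in the block $a\le m_1$ only picks out the first $m_1$ columns, scaled by $s^{1-a/m_1}$ and roots of unity $\xi_{m_1}^{-ka}$, the surviving power of $z$ is precisely $z^{1-q_a}$ with $q_a=a/m_1$ (resp. $z^{1-q_a}$ with $q_a=(l-a)/m_2$ in the other block). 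Extracting the leading coefficient then reduces to a finite computation with Vandermonde-type sums over the $m_1$th (or $m_2$th) roots of unity; I expect this to reproduce exactly the matrix $K_a$ from~\eqref{form}. This is the step most likely to involve real bookkeeping — keeping track of the index shifts in $\Psi_{ij}(z,s)=\psi_j(z-m_1+i,s)$ and in $\Phi_{ij}$, and checking that the root-of-unity sums collapse to the single nonzero anti-diagonal band defining $K_a$.

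Therefore the main obstacle is the leading-order identification $\Psi P_a\Psi^{-1}=z^{1-q_a}(K_a+O(z^{-1}))$, rather than the equation-solving part, which is essentially formal. Once both properties are in hand, Theorem~\ref{prop1} (more precisely its equivalent form Theorem~$1'$, whose uniqueness was established via Lemma~\ref{thm1''}) forces $\widehat M_a(z,s)=M_a(z,s)$, completing the proof. A clean way to organize the asymptotic check is to first verify it for the simplest diagonal entries and for one off-diagonal band, then observe that the block and degree structure (the $\deg_{ij}$ assignments introduced before Lemma~\ref{thm1''}) propagates the conclusion to all entries; alternatively one may check~\eqref{form} directly in the $m_1=m_2=1$ case as a consistency test against~\cite{DYZ} and then argue the general case by the same mechanism.
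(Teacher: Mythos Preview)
Your approach is correct and essentially the same as the paper's: verify the TDE via Lemma~\ref{Psiequation}, check the leading behavior $z^{1-q_a}(K_a+O(z^{-1}))$ from the asymptotics of the wave functions, and invoke the uniqueness in Theorem~\ref{prop1}. The only difference is organizational: the paper carries out the asymptotic step by first writing explicit entrywise formulas for $\Psi P_a\Psi^{-1}$ as finite sums of Gamma-function ratios (formulas that are then reused in the proof of Theorem~\ref{thmexpM}) and applying Stirling, whereas you describe the same computation at the level of prefactor cancellation and root-of-unity sums.
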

\begin{proof}
Denote $\widetilde M_a(z,s):=\Psi(z,s)P_a(s){\Psi}(z,s)^{-1}$. 
Using Lemma~\ref{Psiequation} it is easy to show that $\widetilde M_a(z,s)$ satisfies the TDE. 
Using the definitions~\eqref{defPsi} and \eqref{psi-inv} we find that
\beq\label{expression-hat}
\widetilde M_{a}(z,s)_{ij}={\left\{\begin{array}{ll}
\widetilde g_a(z,i,j),&j\le m_1,\\
-\widetilde g_a(z,i,j-m_1-m_2),&j> m_1,\\
\end{array}\right.}
\eeq 
where  
\beq\label{expression-gamma1}
\widetilde g_a(z,i,j)=\sum_{k_1\ge0}s^{1+\frac{lk_1+i-j-a}{m_1}}\frac{\sum_{k_3=0}^{m_1-1}\xi_{m_1}^{(m_2k_1+i-j-a)k_3}}{m_1^{\frac{i-j+m_2k_1}{m_1}+1}m_2^{k_1}}\sum_{k_2=0}^{k_1}\frac{(-1)^{k_2}}{k_2!(k_1-k_2)!}\frac{\Gamma(\frac{z-\frac12+j-m_2(k_1-k_2)}{m_1})}{\Gamma(\frac{z-\frac12+i+m_2k_2}{m_1})}
\eeq
for $a=1,\dots, m_1$, and 
\beq\label{expression-gamma2}
\widetilde g_a(z,i,j)=\sum_{k_1\ge0}s^{1+\frac{lk_1-i+j+a}{m_2}} \frac{\sum_{k_3=0}^{m_2-1}\xi_{m_2}^{(m_1k_1-i+j+a)k_3}}{m_1^{k_1}m_2^{2+\frac{m_1(k_1+1)-i+j}{m_2}}}\sum_{k_2=0}^{k_1}\frac{(-1)^{k_2}}{k_2!(k_1-k_2)!}\frac{\Gamma(\frac{z-\frac12+i-m_1(k_1-k_2+1)}{m_2})}{\Gamma(\frac{z-\frac12+j+m_1k_2}{m_2}+1)}
\eeq
for $a=m_1+1,\dots,l-1$. 
It follows from the Stirling formula 
that $\widetilde M_a(z,s)=z^{1-q_a}(H_a+O(z^{-1}))\in z^{1-q_a}\cdot\LLL$.
Thus by Theorem~\ref{prop1} we have $M_a(z,s)=\widetilde M_a(z,s)$.
\end{proof}

\begin{proof}[Proof of Theorem~\ref{thmexpM}]
The theorem follows from Proposition~\ref{thm-form} and the 
well-known formula:
\beq
\frac{\Gamma(z+a)}{\Gamma(z+b)}\sim z^{a-b}\sum_{\ell\ge0}\binom{a-b}{\ell}\frac{B_\ell(a-b+1,a)}{z^\ell} \qquad \mbox{as } z\to +\infty.
\eeq
\end{proof}

The following two corollaries are straightforward from Proposition~\ref{thm-form}.
\begin{cor}\label{property}
The unique formal solutions $M_a(z,s)$ given in Theorem~\ref{prop1} have the following properties: 
\beq
\Tr \, M_a(z,s) = m_1\delta_{a,{m_1}},\quad\det M_a(z,s)=0,\qquad a=1,\dots,l-1.
\eeq
Moreover, 
\beq
M_a(z,s)=\left\{
\begin{array}{ll}
s^{1-a}M_1(z,s)^{a}, & \mbox{ for } a=1,\dots,m_1,\label{Mpower}\\
(-s)^{1-l+a}M_{l-1}(z,s)^{l-a}, & \mbox{ for } a=m_1+1,\dots,l-1,
\end{array}
\right.
\eeq
and
\beq
M_a(z,s)M_b(z,s)=M_b(z,s)M_a(z,s)=0\quad \mbox{ for } a\le m_1<b.
\eeq
\end{cor}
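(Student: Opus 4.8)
The plan is to derive all four assertions directly from the conjugation formula~\eqref{exp} of Proposition~\ref{thm-form}, namely $M_a(z,s)=\Psi(z,s)P_a(s)\Psi(z,s)^{-1}$, which reduces every claim to an elementary statement about the explicit diagonal matrix $P_a(s)$. Since trace, determinant, powers, and products of matrices are all preserved or transported by conjugation by $\Psi(z,s)$, the corollary becomes a bookkeeping exercise with roots of unity and powers of $s$.

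First I would treat the trace and determinant. Conjugation invariance gives $\Tr M_a=\Tr P_a$ and $\det M_a=\det P_a$. For $a=1,\dots,m_1$ one has $\Tr P_a=s^{1-a/m_1}\sum_{k=0}^{m_1-1}\xi_{m_1}^{-ka}$; this sum over all $m_1$-th roots of unity vanishes unless $m_1\mid a$, i.e.\ unless $a=m_1$, in which case it equals $m_1$ while the prefactor is $s^{0}=1$. For $a=m_1+1,\dots,l-1$ one has $\Tr P_a=-s^{(a-m_1)/m_2}\sum_{k=0}^{m_2-1}\xi_{m_2}^{k(a-l)}$, and since $1-m_2\le a-l\le-1$ we have $m_2\nmid(a-l)$, so this sum vanishes. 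Hence $\Tr M_a=m_1\delta_{a,m_1}$. For the determinant, $P_a(s)$ is diagonal with a block of zeros on its diagonal (the last $m_2$ entries when $a\le m_1$, the first $m_1$ when $a>m_1$), so $\det P_a=0$ and therefore $\det M_a=0$.

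For the power formulas and the orthogonality it suffices to prove the corresponding identities for the $P_a(s)$ and then conjugate by $\Psi(z,s)$, using $(\Psi P_1\Psi^{-1})^a=\Psi P_1^a\Psi^{-1}$. Comparing diagonal entries, the nonzero diagonal entries of $P_1(s)$ are the $m_1$-th roots of unity times $s^{1-1/m_1}$, so raising to the $a$-th power multiplies the $s$-exponent and the root-of-unity exponents by $a$, giving $P_a(s)=s^{1-a}P_1(s)^a$ for $a=1,\dots,m_1$; similarly the nonzero diagonal entries of $P_{l-1}(s)$ are $-s^{(m_2-1)/m_2}$ times the $m_2$-th roots of unity, and one obtains $P_a(s)=(-s)^{1-l+a}P_{l-1}(s)^{l-a}$ for $a=m_1+1,\dots,l-1$. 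Conjugating yields $M_a=s^{1-a}M_1^a$ and $M_a=(-s)^{1-l+a}M_{l-1}^{l-a}$, respectively. Finally, for $a\le m_1<b$ the matrix $P_a(s)$ is supported on the first $m_1$ diagonal positions and $P_b(s)$ on the last $m_2$, so $P_a(s)P_b(s)=P_b(s)P_a(s)=0$, and conjugating gives $M_aM_b=M_bM_a=0$.

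The only step that is not completely immediate is the exponent bookkeeping in the second power formula: one must check that the sign $(-1)^{l-a}$ and the $s$-power $s^{(m_2-1)(l-a)/m_2}$ produced by $P_{l-1}(s)^{l-a}$ combine with the scalar $-s^{(a-m_1)/m_2}$ of $P_a$ to give exactly $(-s)^{1-l+a}$. Writing $a-m_1=m_2-(l-a)$ makes the $s$-exponent collapse to $1-l+a$, and $(-1)^{l-a+1}=(-1)^{1-l+a}$ since the two exponents differ by the even integer $2(l-a)$. I expect this (minor) verification to be the main obstacle; everything else follows at once from the conjugation formula.
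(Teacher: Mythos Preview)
Your proposal is correct and follows exactly the route the paper indicates: the corollary is stated as ``straightforward from Proposition~\ref{thm-form}'', and you carry this out by transporting trace, determinant, powers, and products through the conjugation $M_a=\Psi P_a\Psi^{-1}$ and checking the elementary identities for the diagonal matrices $P_a(s)$. The exponent and sign bookkeeping you flag for the second power formula is accurate, so there is nothing to add.
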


\begin{cor}
We have
$M_a(z, s)\in z^{1-q_a}{\rm Mat}(l\times l,\mathbb{Q}[s][[z^{-1}]])$, $a=1,\dots,l-1$.
\end{cor}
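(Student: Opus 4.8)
The plan is to show $M_a(z,s) \in z^{1-q_a}\,{\rm Mat}(l\times l, \mathbb{Q}[s][[z^{-1}]])$ for each $a = 1, \dots, l-1$. First I would invoke Theorem~\ref{thmexpM}, which already gives closed expressions for the entries $(M_a(z,s))_{ij}$ in terms of the quantities $g_a(z,i,j)$ in~\eqref{expression1}--\eqref{expression2}. The key observation is that, up to the common overall factor $z^{-a/m_1}$ (resp.\ $z^{-(l-a)/m_2}$), each $g_a(z,i,j)$ is a power series in $z^{-1/m_1}$ (resp.\ $z^{-1/m_2}$), but in fact, because of the divisibility constraints $\delta_{m_2 \mid (m_1\ell_2 + a+j-i)}$ (resp.\ $\delta_{m_1 \mid (m_2\ell_2 + i-j-a)}$), for a fixed pair $(i,j)$ only exponents of $z$ differing by integers actually occur in the sum over $\ell_1$. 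Hence $z^{a/m_1}g_a(z,i,j)$ (resp.\ $z^{(l-a)/m_2}g_a$) is genuinely a formal power series in $z^{-1}$, and $M_a(z,s) \in z^{1-q_a}\,{\rm Mat}(l\times l, \mathbb{C}(s)[[z^{-1}]])$ coefficientwise.

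Second I would check integrality of the coefficients in $s$: I claim each coefficient lies in $\mathbb{Q}[s]$. Looking at~\eqref{expression1}, the power of $s$ appearing is $s^{(l\ell_2 + a+j-i)/m_2 + 1}$; with $\ell_2 \ge -1$, and using the divisibility constraint together with the ranges $1 \le i \le m_1$, $1 \le j \le m_1$, $1 \le a \le m_1$, one verifies this exponent is always a nonnegative integer, so we get a polynomial in $s$ rather than a Laurent or fractional power series. The Bernoulli-polynomial values $B_{\ell_1-\ell_2}(\,\cdot\,,\,\cdot\,)$ and the binomial coefficients $\binom{-\ell_2 - a/m_1}{\ell_1-\ell_2}$, together with the factorials in the denominators, are all rational numbers (the generalized Bernoulli polynomials $B_m(\ell,x)$ have rational coefficients when $\ell$ and $x$ are rational, and here the arguments are rational with controlled denominators). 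Summing finitely many such terms for each fixed power of $z^{-1}$ keeps us in $\mathbb{Q}$. The same analysis applies verbatim to~\eqref{expression2} with the roles of $m_1,m_2$ and $i,j$ interchanged, and to the sign-flipped entries with $j > m_1$.

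Alternatively — and perhaps more cleanly — one can avoid Theorem~\ref{thmexpM} entirely and argue directly from the recursion in the proof of Lemma~\ref{thm1''}. The initial data $G_1, G_2, G_3, G_4$ defined after~\eqref{Q} have entries in $\mathbb{Z}[s][z]$, and each step of the induction only involves: taking commutators, multiplying by $G_i$ or by fixed powers $G_1^{-1-j}$ (which, in the relevant combinations appearing in~\eqref{rec2}--\eqref{rec4''}, reintroduce only positive powers of $s$ because $G_1^{m_1} = z^{m_1} - (-s)^{m_1}\,\mathrm{id}$-type identities hold up to the structure of $G_1$), projecting onto $\Ker\,\ad_{G_1}$ or $\Ker\,\ad_{G_4}$, and dividing by the scalar $i_0 + m_1 q_a - m_1$, which is a nonzero rational number. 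Since $q_a \in \mathbb{Q}$, every such division stays within $\mathbb{Q}$, and one tracks by induction that each homogeneous piece $A^{[-i]}, B^{[-i]}, C^{[-i]}, D^{[-i]}$, after removing the overall $z^{1-q_a}$ and the monomial power of $z$ dictated by its degree, has coefficients in $\mathbb{Q}[s]$.

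The main obstacle I anticipate is the bookkeeping on the powers of $s$: one must confirm that no negative powers of $s$ sneak in, either from the factors $s^{1-a}$ and $s^{m_1-a}$ in the normalization conditions~\eqref{form''1}--\eqref{form''2} combined with the products of $G_i$'s, or (in the explicit-formula route) from the exponents $(l\ell_2 + a+j-i)/m_2 + 1$ and $(l\ell_2 + i-j-a)/m_1 + 1$ when $\ell_2 = -1$. Pinning down that the divisibility constraints force these exponents to be $\ge 0$ in the allowed index ranges is the one genuinely nontrivial verification; everything else is routine tracking of rationality through finitely many arithmetic operations.
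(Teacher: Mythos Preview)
Your primary approach via Theorem~\ref{thmexpM} is exactly the paper's: the corollary is stated there as an immediate consequence of Proposition~\ref{thm-form} (equivalently, of the explicit formulas in Theorem~\ref{thmexpM}).

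One small correction to your reading of the formulas: the divisibility constraints $\delta_{m_2\mid(m_1\ell_2+a+j-i)}$ have nothing to do with the powers of~$z$. In~\eqref{expression1} the exponent of~$z$ is simply $-a/m_1-\ell_1$ with $\ell_1\in\mathbb{Z}_{\ge -1}$, so after extracting $z^{1-q_a}$ one already has a power series in~$z^{-1}$, with no appeal to any constraint. The divisibility conditions instead make the exponent of~$s$, namely $(l\ell_2+a+j-i)/m_2+1$, an integer. Its nonnegativity --- which you correctly flag as the only real check --- follows because the inner $\ell_3$-sum runs from $0$ to $(m_1\ell_2+a+j-i)/m_2$ and is therefore empty precisely when that exponent would be negative. (Also, in~\eqref{expression} the index $i$ runs up to~$l$, not~$m_1$, and the effective second argument of $g_a$ ranges over $1-m_2,\dots,m_1$; the argument is unaffected.) With these adjustments your first route is complete; the recursion route you sketch is unnecessary.
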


We note that 
when $m_1=m_2$ the expressions for $g_a(z,i,j)$ can be further simplified as follows: 

(i) $g_a(z,i,j)$ vanish unless $m_1|(i-j-a)$; 

(ii) when $m_1|(i-j-a)$, write $p=(i-j-a)/m_1$, then
\begin{align}
&g_a(z,i,j)=z^{-\frac a{m_1}}\frac{s^{p+1}}{m_1^p}
\sum_{k_1\ge-1}
\frac{m_1^{k_1}}{z^{k_1}}
\sum_{k_2=0}^{\lfloor\frac{k_1-p}{2}\rfloor}
\frac{s^{2k_2}}{m_1^{2k_2}}\nn\\
&\times\binom{\frac{j-i}{m_1}-2k_2}{k_1-p-2k_2}
\binom{\frac{i-j}{m_1}+2k_2-1}{k_2}
 B_{k_1-p-2k_2}\Bigl(\frac{j-i}{m_1}-2k_2+1,\frac{j-\frac12}{m_1}-k_2\Bigr)\label{1eq2-ma}
\end{align}
for $a=1,\dots,m_1$, and
\begin{align}
&g_a(z,i,j)=z^{\frac a{m_1}-2}\frac{m_1^{p}}{s^{p-1}}
\sum_{k_1\ge-1}
\frac{m_1^{k_1}}{z^{k_1}}
\sum_{k_2=0}^{\lfloor\frac{k_1+p}{2}\rfloor}
\frac{s^{2k_2}}{m_1^{2k_2}}\nn\\
&\times\binom{\frac{i-j}{m_1}-2k_2-2}{k_1+p-2k_2}
\binom{\frac{j-i}{m_1}+2k_2+1}{k_2}
B_{k_1+p-2k_2}\Bigl(\frac{i-j}{m_1}-2k_2-1,\frac{i-\frac12}{m_1}-k_2-1\Bigr)\label{1eq2-ma2}
\end{align}
for $a=m_1+1,\dots,2m_1-1$.

Using \eqref{expression}, \eqref{expression1}, \eqref{expression2}, \eqref{expression-gamma1}, \eqref{expression-gamma2} and \eqref{1point}, we obtain explicit 1-point functions given in the following two propositions.
\begin{prop}[*]\label{gf-1point}
For $a=1,\dots,m_1$, we have
\begin{align}
&F_{a}(\lambda;Q;\e)=-\e^{-1}\delta_{a,m_1}\bigl(\psi(\frac{\lambda}{\e}+\frac12)-\log(\frac\lambda\e)\bigr)\nn\\
&-\sum_{k_1\ge0}
\delta_{m_1|(m_2k_1-a)}
\frac{(-1)^{k_1}Q^{m_2k_1+(1-q_a)\rho}\e^{q_a-2-\frac{lk_1}{m_1}}}{m_1^{\frac{m_2k_1}{m_1}+1}m_2^{k_1}k_1!}
\sum_{k_2\ge0}
(-1)^{\lfloor\frac{k_2}{m_2}\rfloor}\binom{k_1-1}{\lfloor\frac{k_2}{m_2}\rfloor}
\frac{\Gamma(\frac{\frac\lambda\e-\frac12-k_2}{m_1})}{\Gamma(\frac{\frac\lambda\e-\frac12-k_2}{m_1}+\frac{m_2k_1}{m_1}+1)},\label{1point1}
\end{align}
 and for $a=m_1+1,\dots,l-1$,
\begin{align}
F_{a}(\lambda;Q;\e)=&-\sum_{k_1\ge0}
\delta_{m_2|(m_1k_1-l+a)}
\frac{(-1)^{k_1}Q^{m_1k_1+(1-q_a)\rho}\e^{q_a-2-\frac{lk_1}{m_2}}}{m_2^{\frac{m_1k_1}{m_2}+1}m_1^{k_1}k_1!}\nn\\
&\times\sum_{k_2\ge0}
(-1)^{\lfloor\frac{k_2}{m_1}\rfloor}\binom{k_1-1}{\lfloor\frac{k_2}{m_1}\rfloor}
\frac{\Gamma(\frac{\frac\lambda\e-\frac12-k_2}{m_2})}{\Gamma(\frac{\frac\lambda\e-\frac12-k_2}{m_2}+\frac{m_1k_1}{m_2}+1)}.\label{1point2}
\end{align}
\end{prop}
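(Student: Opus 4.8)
The plan is to obtain \eqref{1point1}--\eqref{1point2} from the one‑point formula \eqref{1point} by a direct computation: substitute the closed form of the matrix entry $\bigl(M_a(z,s)\bigr)_{m_1+1,1}$, apply $\frac1{1-e^{-\e\p_\lambda}}$ termwise, and integrate once in $\lambda$; like \eqref{1point}, the resulting statement is a consequence of Conjecture~\ref{cnj1}. First I would compute the entry: by \eqref{expression} one has $\bigl(M_a(z,s)\bigr)_{m_1+1,1}=g_a(z,m_1+1,1)$, and Proposition~\ref{thm-form} identifies this with $\widetilde g_a(z,m_1+1,1)$, for which \eqref{expression-gamma1} (resp.\ \eqref{expression-gamma2} when $a>m_1$) gives a sum of $\Gamma$‑ratios. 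Setting $i=m_1+1$, $j=1$ collapses $\sum_{k_3}\xi_{m_1}^{(m_2k_1+i-j-a)k_3}$ to $m_1\,\delta_{m_1|(m_2k_1-a)}$ (resp.\ $m_2\,\delta_{m_2|(m_1k_1-l+a)}$) — exactly the divisibility condition in \eqref{1point1}--\eqref{1point2} — and after the substitution $z=\frac\lambda\e-1$, $s=\frac{Q^\rho}\e$ the $k_1$‑th summand of $M_a(\frac\lambda\e-1,\frac{Q^\rho}\e)_{m_1+1,1}$ becomes a $\lambda$‑independent constant times $\sum_{k_2=0}^{k_1}\frac{(-1)^{k_2}}{k_2!(k_1-k_2)!}\frac{\Gamma(\frac{\lambda/\e-1/2-m_2k_1+m_2k_2}{m_1})}{\Gamma(\frac{\lambda/\e-1/2+m_2k_2}{m_1}+1)}$.

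Next I would integrate \eqref{1point}. Since $\frac{\e\p_\lambda}{1-e^{-\e\p_\lambda}}$ is a constant‑coefficient operator commuting with $\p_\lambda$, this gives $F_a=\frac{\delta_{a,m_1}}\e\log\frac\lambda\e-\frac1{Q^\rho}\frac1{1-e^{-\e\p_\lambda}}\bigl(M_a(\frac\lambda\e-1,\frac{Q^\rho}\e)_{m_1+1,1}\bigr)+C$ for some $\lambda$‑independent $C$. Now $\frac1{1-e^{-\e\p_\lambda}}=\sum_{n\ge0}e^{-n\e\p_\lambda}$ shifts $\lambda/\e\mapsto\lambda/\e-n$; combining the $n$‑sum with the $k_2$‑sum and re‑indexing by $t=n-m_2k_2$ makes both $\Gamma$‑arguments independent of $k_2$, so the $k_2$‑sum reduces to the partial alternating binomial sum $\sum_{k_2\ge L}(-1)^{k_2}\binom{k_1}{k_2}=(-1)^L\binom{k_1-1}{L-1}$ with $L=\lceil-t/m_2\rceil$; this vanishes whenever $t\ge0$, so the ostensibly infinite sum produced by $\frac1{1-e^{-\e\p_\lambda}}$ is in fact \emph{finite}, supported on $t\in\{-m_2k_1,\dots,-1\}$. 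A last re‑indexing $t\mapsto-(m_2k_1-k_2)$ and the identity $\lceil(n+1)/m\rceil=\lfloor n/m\rfloor+1$ turn this into $\sum_{k_2\ge0}(-1)^{k_1+\lfloor k_2/m_2\rfloor}\binom{k_1-1}{\lfloor k_2/m_2\rfloor}\frac{\Gamma(\frac{\lambda/\e-1/2-k_2}{m_1})}{\Gamma(\frac{\lambda/\e-1/2-k_2}{m_1}+\frac{m_2k_1}{m_1}+1)}$ (the binomial automatically truncating at $k_2<m_2k_1$), and collecting the powers of $s=Q^\rho/\e$ against the $\e^{q_a}$, $Q^{(1-q_a)\rho}$ normalizations in the definition of $F_a$ yields precisely the coefficient $Q^{m_2k_1+(1-q_a)\rho}\e^{q_a-2-lk_1/m_1}/\bigl(m_1^{m_2k_1/m_1+1}m_2^{k_1}k_1!\bigr)$ of \eqref{1point1}; running the same argument with $m_1\leftrightarrow m_2$ gives \eqref{1point2}.

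The reduction above requires $\frac{m_2k_1}{m_1}+1\ne1$, i.e.\ $k_1\ge1$, so the term $k_1=0$ — present only for $a=m_1$, where $q_{m_1}=1$ and $\delta_{m_1|0}=1$ — must be handled separately. There $\bigl(M_{m_1}(z,s)\bigr)_{m_1+1,1}$ has leading part $\frac s{z+1/2}=\frac{Q^\rho}{\lambda-\e/2}$, and $\frac1{1-e^{-\e\p_\lambda}}$ sends it to the divergent harmonic series $\frac{Q^\rho}\e\sum_{n\ge0}\frac1{\lambda/\e-1/2-n}$, whose standard regularization equals $\frac{Q^\rho}\e\,\psi\bigl(\frac\lambda\e+\frac12\bigr)$ up to non‑perturbative terms; together with the $\frac1\e\log\frac\lambda\e$ coming from $\int\frac{\delta_{a,m_1}}{\e\lambda}\,d\lambda$ this produces the stated $-\frac1\e\bigl(\psi(\frac\lambda\e+\frac12)-\log\frac\lambda\e\bigr)$. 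Finally $C=0$, because both sides of \eqref{1point1}--\eqref{1point2} are — by the very definition of $F_a$ — formal power series in $\lambda^{-1}$ with no constant term (equivalently, one checks the $Q^0$‑coefficient against the string‑equation data \eqref{BTH-initial1}--\eqref{BTH-initial2}).

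The hard part will be the combinatorial core of Step~2: the careful interchange and re‑indexing of the double sum that makes $\frac1{1-e^{-\e\p_\lambda}}$ of the alternating‑$k_2$ combination telescope to a finite sum carrying exactly the weights $(-1)^{\lfloor k_2/m_2\rfloor}\binom{k_1-1}{\lfloor k_2/m_2\rfloor}$, together with the regularization bookkeeping in the $k_1=0$, $a=m_1$ case. The remaining steps — the substitution, the power counting, and the $m_1\leftrightarrow m_2$ case — are routine.
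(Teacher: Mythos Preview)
Your approach is correct and is precisely what the paper does: the paper merely says to combine the explicit entry formulas \eqref{expression-gamma1}--\eqref{expression-gamma2} (equivalently \eqref{expression}--\eqref{expression2}) with the one-point identity~\eqref{1point}, without spelling out the telescoping double-sum reduction that you outline. One small sharpening: for the $k_1=0$, $a=m_1$ term you need not speak of ``regularization'' --- the digamma functional equation gives $(1-e^{-\e\p_\lambda})\,\psi\bigl(\tfrac{\lambda}{\e}+\tfrac12\bigr)=\tfrac{\e}{\lambda-\e/2}$ exactly, so $\tfrac{1}{1-e^{-\e\p_\lambda}}\tfrac{\e}{\lambda-\e/2}=\psi\bigl(\tfrac{\lambda}{\e}+\tfrac12\bigr)$ holds on the nose in the formal (asymptotic) setting, which together with the $\tfrac{1}{\e}\log\tfrac{\lambda}{\e}$ from integrating $\delta_{a,m_1}/\lambda$ yields the stated first term of~\eqref{1point1}.
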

\begin{prop}[*]
For $a=1,\dots,m_1$, we have
\begin{align}
F_{a}(\lambda;Q;\e)=&\delta_{a,m_1}\sum_{g\ge0}\frac{\e^{2g-1}}{\lambda^{2g}}\frac{1-2^{2g-1}}{2^{2g}g}B_{2g}
-\sum_{k_1\ge0}
\delta_{m_1|(m_2k_1-a)}
\frac{(-1)^{k_1}Q^{m_2k_1+(1-q_a)\rho}\e^{q_a-1-k_1}}{m_2^{k_1}k_1!\lambda^{\frac{m_2k_1}{m_1}+1}}\nn\\
&\times\sum_{k_2\ge0}\frac{m_1^{k_2}\e^{k_2}}{\lambda^{k_2}}\binom{-\frac{m_2k_1}{m_1}-1}{k_2}\sum_{k_3\ge0}
(-1)^{\lfloor\frac{k_3}{m_2}\rfloor}\binom{k_1-1}{\lfloor\frac{k_3}{m_2}\rfloor}
B_{k_2}\Bigl(-\frac{m_2k_1}{m_1},-\frac{k_3+\frac12}{m_1}\Bigr),\label{1point1-2}
\end{align}
and for $a=m_1+1,\dots,l-1$,
\begin{align}
F_{a}(\lambda;Q;\e)=&
-\sum_{k_1\ge0}
\delta_{m_2|(m_1k_1-l+a)}
\frac{(-1)^{k_1}Q^{m_1k_1+(1-q_a)\rho}\e^{q_a-1-k_1}}{m_1^{k_1}k_1!\lambda^{\frac{m_1k_1}{m_2}+1}}
\sum_{k_2\ge0}\frac{m_2^{k_2}\e^{k_2}}{\lambda^{k_2}} \nn\\
&\times\binom{-\frac{m_1k_1}{m_2}-1}{k_2}\sum_{k_3\ge0}
(-1)^{\lfloor\frac{k_3}{m_1}\rfloor}\binom{k_1-1}{\lfloor\frac{k_3}{m_1}\rfloor}
B_{k_2}\Bigl(-\frac{m_1k_1}{m_2},-\frac{k_3+\frac12}{m_2}\Bigr).\label{1point2-2}
\end{align}
\end{prop}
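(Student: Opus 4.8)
The plan is to obtain both formulas from the one-point identity~\eqref{1point} together with the closed-form matrix entries of Theorem~\ref{thmexpM}. By~\eqref{expression} the only quantity entering~\eqref{1point} is $M_a(\tfrac\lambda\e-1,\tfrac{Q^\rho}\e)_{m_1+1,1}=g_a(\tfrac\lambda\e-1,m_1+1,1)$, whose power series is read off from~\eqref{expression1} when $a\le m_1$ (resp.~from~\eqref{expression2} when $a>m_1$) by putting $i=m_1+1$, $j=1$, $z=\lambda/\e-1$, $s=Q^\rho/\e$; with these substitutions its $z$-dependence runs only through the powers $z^{-a/m_1-\ell_1}$, $\ell_1\ge-1$. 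So the first step is to feed this series into~\eqref{1point}, and the last step is to integrate the outcome in $\lambda$ term by term, which is legitimate because $q_a>0$ forces $F_a\to 0$ as $\lambda\to\infty$, so that no constant of integration appears and (as explained below) the potential $\log\lambda$ terms cancel.

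The operation requiring care is the action of $\tfrac{\e\p_\lambda}{1-e^{-\e\p_\lambda}}$. On a function of $z=\lambda/\e-1$ one has $\e\p_\lambda=\p_z$, while $e^{-\e\p_\lambda}$ is the unit backward shift $z\mapsto z-1$, so this operator is the formal differential operator $\sum_{m\ge0}\tfrac{B_m(1,1)}{m!}(\e\p_\lambda)^m$ (because $\tfrac{t}{1-e^{-t}}=\sum_m\tfrac{B_m(1,1)}{m!}t^m$), which I would apply term by term to the monomials $z^{-\beta}$ in $g_a$. Re-expanding $z=\lambda/\e-1$ in powers of $\lambda^{-1}$ and then collecting, for each fixed power of $Q$, the contributions to a given power of $\lambda^{-1}$ leaves a finite combination of $B_m(1,1)$-weighted falling factorials; by the generating-function definition $(\tfrac{t}{e^t-1})^\ell e^{xt}=\sum_m B_m(\ell,x)\tfrac{t^m}{m!}$ and the convolution identity $B_m(\ell+\ell',x+y)=\sum_j\binom mj B_j(\ell,x)B_{m-j}(\ell',y)$, this combination collapses to a single generalized Bernoulli polynomial, producing the factor $\binom{-m_2k_1/m_1-1}{k_2}B_{k_2}(-\tfrac{m_2k_1}{m_1},-\tfrac{k_3+1/2}{m_1})$ of~\eqref{1point1-2}. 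What remains is bookkeeping: the index controlling the power of $s$ is renamed $k_1$, the divisibility constraint $\delta_{m_2\mid(\cdots)}$ of~\eqref{expression1} turns into $\delta_{m_1\mid(m_2k_1-a)}$, the order index becomes $k_2$, and the inner $\ell_3$-sum becomes the $k_3$-sum carrying $\binom{k_1-1}{\lfloor k_3/m_2\rfloor}$. The range $a>m_1$, namely~\eqref{1point2-2}, is obtained identically after the swap $m_1\leftrightarrow m_2$, using~\eqref{expression2} directly or the symmetry of Corollary~\ref{cor-symm2}.

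The delicate point, which I expect to be the main obstacle, is the $a=m_1$ contribution. There $q_a=1$, the leading block $(K_{m_1})_{m_1+1,1}$ vanishes, and the $k_1=0$ piece of the sum above is only formally defined, because its inner $k_3$-sum no longer terminates (for $k_1=0$ the binomial $\binom{k_1-1}{\lfloor k_3/m_2\rfloor}$ equals $(-1)^{\lfloor k_3/m_2\rfloor}$). This piece must be combined with the explicit $\tfrac{\delta_{a,m_1}}{\lambda}$ term of~\eqref{1point}; after $\zeta$-regularizing the divergent $k_3$-sum — equivalently, after recognizing the combination as the asymptotic expansion, as $\lambda\to+\infty$, of $-\e^{-1}\bigl(\psi(\tfrac\lambda\e+\tfrac12)-\log\tfrac\lambda\e\bigr)$ via $\psi(x+\tfrac12)\sim\log x+\sum_{g\ge1}\tfrac{(1-2^{1-2g})B_{2g}}{2g\,x^{2g}}$ — one obtains the closed term $\delta_{a,m_1}\sum_g\tfrac{\e^{2g-1}}{\lambda^{2g}}\tfrac{1-2^{2g-1}}{2^{2g}g}B_{2g}$, and all logarithms cancel. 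Since this is precisely the digamma piece of Proposition~\ref{gf-1point}, an equivalent and possibly cleaner route is to take Proposition~\ref{gf-1point} as established and expand each $\Gamma$-quotient there through $\tfrac{\Gamma(w+\alpha)}{\Gamma(w+\beta)}\sim w^{\alpha-\beta}\sum_{\ell\ge0}\binom{\alpha-\beta}{\ell}\tfrac{B_\ell(\alpha-\beta+1,\alpha)}{w^\ell}$ with $w=\lambda/(m_1\e)$ (resp.~$w=\lambda/(m_2\e)$), together with the above expansion of the digamma term.
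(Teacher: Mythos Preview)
Your proposal is correct and follows exactly the route the paper indicates: the paper states both this proposition and Proposition~\ref{gf-1point} as direct consequences of~\eqref{1point} together with the explicit entries~\eqref{expression}--\eqref{expression2} and~\eqref{expression-gamma1}--\eqref{expression-gamma2}, giving no further details. Your cleaner alternative---taking Proposition~\ref{gf-1point} as established and expanding each $\Gamma$-quotient via $\frac{\Gamma(w+\alpha)}{\Gamma(w+\beta)}\sim w^{\alpha-\beta}\sum_{\ell\ge0}\binom{\alpha-\beta}{\ell}\frac{B_\ell(\alpha-\beta+1,\alpha)}{w^\ell}$, together with the digamma asymptotics for the $a=m_1$ piece---is the most transparent way to get from one proposition to the other, and is precisely parallel to how Theorem~\ref{thmexpM} is deduced from Proposition~\ref{thm-form} in the paper.
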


The following corollary is straightforward.
\begin{cor}[*]
The 1-point degree~0 numbers have the expressions:
\beq
\langle \tau_{i}(\phi_{a}) \rangle_{g, \, 0}=
\delta_{a,m_1}\delta_{i,2g-2}\frac{1-2^{2g-1}}{2^{2g-1}(2g)!}B_{2g}.
\eeq
\end{cor}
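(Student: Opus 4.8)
The plan is to read off the degree-$0$ one-point invariants directly from the explicit one-point function of Proposition~\ref{gf-1point} (equivalently from~\eqref{1point1-2}), by extracting the $Q^0$-coefficient and matching Laurent coefficients in~$\lambda$. Throughout I work in the range $1\le a\le l-1$ covered by~\eqref{1point1}--\eqref{1point2}.

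First I would settle the vanishing encoded in the two Kronecker deltas, which needs no explicit formula at all. Setting $k=1$ and $d=0$ in the degree--dimension matching gives $2g-1=i+q_a$, so a nonzero invariant $\langle\tau_i(\phi_a)\rangle_{g,0}$ forces $q_a\in\ZZ$ together with $i=2g-1-q_a\ge0$. Since $q_a=a/m_1\in(0,1]$ for $1\le a\le m_1$ and $q_a=(l-a)/m_2\in(0,1)$ for $m_1<a\le l-1$, the only integral value is $q_{m_1}=1$, attained solely at $a=m_1$; this yields $i=2g-2$, and nonnegativity forces $g\ge1$. This already reproduces the factor $\delta_{a,m_1}\delta_{i,2g-2}$ and reduces the problem to the single evaluation of $\langle\tau_{2g-2}(\phi_{m_1})\rangle_{g,0}$.

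Next I would isolate the degree-$0$ part of $F_{m_1}$ on both sides. In the definition of the one-point function one has $q_{m_1}=1$ and $q_{m_1,i}=(1)_{i+1}=(i+1)!$, so the coefficient of $Q^0$ (equivalently the $d=0$ contribution, which is the lowest $Q$-power since the exponent $(1-q_a)\rho+d$ is strictly increasing in $d$) equals $\sum_{i\ge0}\frac{\e\,(i+1)!}{\lambda^{i+2}}\sum_{g\ge0}\e^{2g-2}\langle\tau_i(\phi_{m_1})\rangle_{g,0}$; imposing the constraint $i=2g-2$ collapses this to $\sum_{g\ge1}\frac{(2g-1)!\,\e^{2g-1}}{\lambda^{2g}}\langle\tau_{2g-2}(\phi_{m_1})\rangle_{g,0}$. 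On the other side, among the terms of~\eqref{1point1-2} only the explicit leading term carries no power of $Q$, so the $Q^0$-part of $F_{m_1}$ is $\sum_{g\ge1}\frac{\e^{2g-1}}{\lambda^{2g}}\frac{1-2^{2g-1}}{2^{2g}g}B_{2g}$ (the $g=0$ summand being absent, consistent with $i=2g-2\ge0$; equivalently this is the large-$\lambda/\e$ asymptotic expansion of the digamma term $-\e^{-1}\bigl(\psi(\frac\lambda\e+\frac12)-\log\frac\lambda\e\bigr)$ in~\eqref{1point1}).

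Finally I would equate the coefficients of $\e^{2g-1}\lambda^{-2g}$ in the two expressions, obtaining $(2g-1)!\,\langle\tau_{2g-2}(\phi_{m_1})\rangle_{g,0}=\frac{1-2^{2g-1}}{2^{2g}g}B_{2g}$, and simplify via $g\,(2g-1)!=\frac{(2g)!}{2}$ to reach the stated value $\frac{1-2^{2g-1}}{2^{2g-1}(2g)!}B_{2g}$. The only genuine subtlety, and thus the \emph{hard part}, is the identification of the $Q^0$-coefficient of $F_{m_1}$ with this clean Bernoulli series: one must confirm that the positive-$Q$-degree terms of~\eqref{1point1-2} contribute nothing at $Q^0$ and that the apparent $k_1=0$ summand is precisely the explicit leading term already displayed (or, working instead from~\eqref{1point1}, that the digamma term's expansion really generates $\frac{1-2^{2g-1}}{2^{2g}g}B_{2g}$). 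Everything else is the bookkeeping of the $Q$-grading through the degree--dimension matching and the elementary factorial identity, which is exactly why the corollary is immediate from the preceding results.
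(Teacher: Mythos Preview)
Your argument is correct and follows exactly the route the paper intends: the corollary is stated as an immediate consequence of Proposition~\ref{gf-1point}, and you carry this out by extracting the $Q^0$-coefficient from~\eqref{1point1-2} (equivalently from the digamma term in~\eqref{1point1}) and matching Laurent coefficients, after first using the degree--dimension constraint to pin down $a=m_1$ and $i=2g-2$. Your flagged ``hard part'' is accurately identified: the $k_1=0$ summand in the second sum of~\eqref{1point1}/\eqref{1point1-2} is formally the same $Q^0$-contribution already encoded by the digamma/Bernoulli leading term (it is the divergent piece whose regularized asymptotic expansion \emph{is} that Bernoulli series), so there is no additional $Q^0$-content beyond $\sum_{g\ge1}\frac{\e^{2g-1}}{\lambda^{2g}}\frac{1-2^{2g-1}}{2^{2g}g}B_{2g}$.
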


With the help of the following identity 
\beq
\sum_{j=0}^k(-1)^j\binom{k}{j}B_m(\ell,x-j)=\frac{m!}{(m-k)!}B_{m-k}(\ell-k,x-k),
\eeq
from Proposition~\ref{gf-1point} we can also obtain the following corollary.
\begin{cor}[*]\label{gf-1point-1eq2}
When $m_1=m_2$,  the 1-point numbers $\langle\tau_i(\phi_{a})\rangle_{g, \,d}$ vanish for $a=1,\dots,m_1-1$ and for $a=m_1+1,\dots,l-1$. Moreover,
\begin{align}
&F_{m_1}(\lambda;Q;\e)=\sum_{g\ge0}\frac{\e^{2g-1}}{\lambda^{2g}}\frac{1-2^{2g-1}}{2^{2g}g}B_{2g}
-\sum_{k_1\ge0}
\frac{(-1)^{k_1}Q^{m_1k_1}\e^{-k_1}}{m_1^{k_1}{k_1!}^2\lambda^{k_1+1}}\nn\\
&\times\sum_{k_2\ge0}\frac{(-1)^{k_2}m_1^{k_2}\e^{k_2}}{\lambda^{k_2}}\frac{(k_1+k_2)!}{(k_2-k_1+1)!}
\sum_{k_3=0}^{m_1-1}
B_{k_2-k_1+1}\Bigl(1-2k_1,1-k_1-\frac{k_3+\frac12}{m_1}\Bigr).
\end{align}
\end{cor}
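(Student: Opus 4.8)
The plan is to deduce Corollary~\ref{gf-1point-1eq2} from the explicit one-point formulas already on hand: the Bernoulli-polynomial expansions \eqref{1point1-2}, \eqref{1point2-2} (equivalently Proposition~\ref{gf-1point} together with the Stirling-type asymptotics used to derive them), specialized to $m_1=m_2$ and then simplified by means of the displayed difference identity $\sum_{j=0}^k(-1)^j\binom{k}{j}B_m(\ell,x-j)=\frac{m!}{(m-k)!}B_{m-k}(\ell-k,x-k)$ for the generalized Bernoulli polynomials.

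First I would settle the vanishing statement. Setting $m_2=m_1$ in the divisibility factor $\delta_{m_1\mid(m_2k_1-a)}$ of \eqref{1point1-2} turns it into $\delta_{m_1\mid a}$, which is $0$ for $a\in\{1,\dots,m_1-1\}$; similarly the factor $\delta_{m_2\mid(m_1k_1-l+a)}$ in \eqref{1point2-2} becomes $\delta_{m_1\mid a}$ (using $l=2m_1$), which is $0$ for $a\in\{m_1+1,\dots,l-1\}$. Since moreover $\delta_{a,m_1}=0$ in that range, we get $F_a(\lambda;Q;\e)\equiv0$ for all such $a$. Because $q_a=a/m_1>0$ resp. $q_a=(l-a)/m_1>0$, the prefactor $q_{a,i}=(q_a)_{i+1}\cdot(\text{nonzero integer})$ is nonzero, and in the defining series of $F_a$ the monomial in $(\lambda,Q,\e)$ attached to $\langle\tau_i(\phi_a)\rangle_{g,d}$ determines $(i,g,d)$ uniquely; hence $F_a\equiv0$ forces $\langle\tau_i(\phi_a)\rangle_{g,d}=0$ for every $i,g,d$.

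For $a=m_1$ I would simplify \eqref{1point1-2}: here $q_{m_1}=1$, $(1-q_{m_1})\rho=0$, $\delta_{m_1\mid(m_1k_1-m_1)}=1$ for all $k_1$, and $m_2=m_1$, so the $Q^{m_2k_1}$, $\e^{q_a-1-k_1}$ and $m_2^{k_1}$ factors collapse to $Q^{m_1k_1}$, $\e^{k_2-k_1}$ and $m_1^{k_1}$. The key step is to regroup the $k_3$-summation into residue classes modulo $m_1$: writing $k_3=m_1q+r$ with $0\le r\le m_1-1$ and $q\ge0$, one has $\lfloor k_3/m_1\rfloor=q$ and $-\tfrac{k_3+1/2}{m_1}=-q-\tfrac{r+1/2}{m_1}$, so the inner sum over $q$ (which has finite support since $\binom{k_1-1}{q}=0$ for $q>k_1-1$) is exactly the left-hand side of the displayed identity with $k=k_1-1$, $m=k_2$, $\ell=-k_1$, $x=-\tfrac{r+1/2}{m_1}$. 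Applying it produces $\tfrac{k_2!}{(k_2-k_1+1)!}\,B_{k_2-k_1+1}\bigl(1-2k_1,\,1-k_1-\tfrac{r+1/2}{m_1}\bigr)$, summed over $r=0,\dots,m_1-1$. Using then $\binom{-k_1-1}{k_2}k_2!=(-1)^{k_2}(k_1+k_2)!/k_1!$ and combining with the overall factor $1/(m_1^{k_1}k_1!\lambda^{k_1+1})$ lands one precisely on the asserted expression, with the dummy $r$ renamed $k_3$; the first ($d=0$) term of \eqref{1point1-2} is already in the Bernoulli form of the corollary and is carried over verbatim.

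The computation contains no substantive obstacle; the only points requiring care are (i) the regrouping of the $k_3$-sum into arithmetic progressions so that the difference identity applies term by term, and (ii) tracking the conventions $1/n!:=0$ for $n<0$ and $\binom{-k_1-1}{k_2}=(-1)^{k_2}\binom{k_1+k_2}{k_2}$, under which the two sides agree (and both vanish) when $k_2<k_1-1$, as they must, being the $(k_1-1)$-st finite difference of a polynomial of degree $k_2$ in $x$.
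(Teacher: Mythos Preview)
Your proposal is correct and follows precisely the route indicated by the paper: specialize the Bernoulli-polynomial one-point formulas \eqref{1point1-2}--\eqref{1point2-2} to $m_1=m_2$ (the divisibility factors immediately giving the vanishing for $a\neq m_1$), then for $a=m_1$ regroup the $k_3$-sum modulo $m_1$ and apply the finite-difference identity for $B_m(\ell,x)$ with $k=k_1-1$, exactly as the paper says. Your bookkeeping of the binomial and factorial factors is accurate; the only point you leave implicit is the $k_1=0$ term, where $k=k_1-1=-1$ and the identity is not literally applicable, but this case is easily handled separately (it corresponds to the degree-zero contribution already isolated in the preceding corollary) and the paper does not spell it out either.
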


For the case when $m_1=m_2=1$, one can check that Corollary~\ref{gf-1point-1eq2} agrees with~\cite[(36)]{DYZ}.

\section{Computation of $\langle\tau_{i}(\phi_a)^k\rangle_{g,d}$}\label{polygon}
In this section we do concrete computations for some of the Gromov--Witten invariants of $\mathbb{P}^1_{m_1,m_2}$ 
with $(m_1,m_2)$ being $(2,1)$, $(3,1)$ and $(2,2)$, based on the explicit (conjectural) formulas~\eqref{npoint}, \eqref{1point}.

It will be convenient to use an algorithm described in \cite{DY1}, \cite{DY2}.
Fix $\mathbf{b}=((a_1,i_1),(a_2,i_2),\dots)$ an arbitrary sequence of pairs of non-negative integers with $a_j\in\{1,\dots,l-1\}$, $i_j\in\mathbb{Z}_{\ge0}$. Following~\cite{DY1}, \cite{DY2}, define recursively a family 
of Laurent series $R^\mathbf{b}_{a,K}\in{\rm Mat}(l\times l,\mathbb{Q}[\e]((\lambda^{-1})))$ with $K=\{k_1,\dots,k_m\}$ by
\begin{align}
R^\mathbf{b}_{a,\{\}}(\lambda;\e)&:=\e^{1-q_{a}}\lambda^{q_{a}}M_{a}\Bigl(\frac{\lambda}{\e}, \frac1\e\Bigr),\\
R^\mathbf{b}_{a,K}(\lambda;\e)&:=\sum_{I\sqcup J=K\setminus\{k_1\}}\Bigl[\bigl(\lambda^{i_{k_1}}R^\mathbf{b}_{a_{k_1},I}\bigr)_+,R^\mathbf{b}_{a,J}\Bigr].
\end{align}
Here $k_1,\dots,k_m$ are distinct positive integers, and $M_a(z,s)$ are the unique solutions to the TDE~\eqref{TE} satisfying~\eqref{form}.
For the case when $(a_1,i_1)=(a_2,i_2)=\dots=(a,i)$, like in~\cite{DY1, DY2}, we have
\beq
R^\mathbf{b}_{b,K}(\lambda)=R^\mathbf{b}_{b,K'}(\lambda)=:R^{(a,i)}_{b,m}(\lambda),\quad\text{as long as}\ |K|=|K'|,
\eeq
and
\beq
R^{(a,i)}_{b, m}=\sum_{\ell=0}^{m-1}\binom{m-1}\ell\Bigl[\bigl(\lambda^i R^{(a,i)}_{a,\ell}\bigr)_+,R^{(a,i)}_{b, m-1-\ell}\Bigr], \quad m\ge1.
\eeq

The following proposition follows using the arguments given in \cite{DY1, DY2}.

\begin{prop}[$^*$]
Let $\mathbf{b}=((a_1,i_1),(a_2,i_2),\dots)$ and $K=\{k_1,\dots, k_m\}$. The following formula holds true:
\begin{align}
&\sum_{j_1,j_2\ge0}\frac{q_{b,j_1}q_{c,j_2}\prod_{\ell=1}^{m} {q_{a,i_\ell}}}{\lambda^{j_1+1}\mu^{j_2+1}}\sum_{g\ge0}\sum_{d\ge0}\e^{2g+m}
\langle\tau_{j_1}(\phi_{b})\tau_{j_2}(\phi_c)\prod_{\ell=1}^{m}\tau_{i_{k_\ell}}(\phi_{a_{k_\ell}})\rangle_{g,d}\nn\\
&=\sum_{I\sqcup J=K}\frac{\Tr \, R^{\mathbf{b}}_{b,I}(\lambda;\e)R^{\mathbf{b}}_{c,J}(\mu;\e)}{(\lambda-\mu)^2}
\notag\\
&-\delta_{m,0}
\frac{\delta_{b+c, m_1}
(b\lambda+c\mu)+\delta_{b,m_1} \delta_{c,m_1}m_1\lambda\mu 
+\delta_{b+c, 2m_1+m_2}((l-b)\lambda+(l-c)\mu)}
{ (\lambda-\mu)^2}.\label{100}
\end{align}
Here $m=|K|$. In particular case that $(a_1,i_1)=(a_2,i_2)=\dots=(a,i)$, we have for $m\ge0$
\begin{align}
&\sum_{j_1,j_2\ge0}\frac{q_{a,i}^{m} q_{b,j_1}q_{c,j_2}}{\lambda^{j_1+1}\mu^{j_2+1}}\sum_{g\ge0}\sum_{d\ge0}\e^{2g+m}\langle\tau_{i}(\phi_{a})^m\tau_{j_1}(\phi_{b})\tau_{j_2}(\phi_c)\rangle_{g,d}
=\sum_{\ell=0}^m \binom{m}{\ell}\frac{\Tr \, R^{(a,i)}_{b,\ell}(\lambda)R^{(a,i)}_{c, m-\ell}(\mu)}{(\lambda-\mu)^2}
\notag\\
&-\delta_{m,0}
\frac{\delta_{b+c, m_1}
(b\lambda+c\mu)+\delta_{b,m_1} \delta_{c,m_1}m_1\lambda\mu 
+\delta_{b+c, 2m_1+m_2}((l-b)\lambda+(l-c)\mu)}
{(\lambda-\mu)^2 }. \label{algpolygon}
\end{align}
\end{prop}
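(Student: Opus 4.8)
The plan is to deduce the whole statement from Conjecture~\ref{cnj1}, by transplanting to the present setting the matrix-resolvent bookkeeping of~\cite{DY1, DY2}; the only structural differences from the $\mathbb{P}^1$ case are that the single $2\times 2$ matrix is replaced by the family of $l\times l$ matrices $M_a(z,s)$ of Theorem~\ref{prop1}, and that the orbifold degrees $q_a$ are fractional---both features being absorbed into the normalizations. Throughout one sets $Q=1$, which loses nothing (the degree--dimension matching determines $d$ once $g$ and all descendant orders are fixed) and identifies the spectral variable $s$ of the algorithm with $1/\e$. The first thing to record is that the normalization is consistent: by~\eqref{form}, $R^{\mathbf{b}}_{a,\{\}}(\lambda;\e)=\e^{1-q_a}\lambda^{q_a}M_a(\lambda/\e,1/\e)=\lambda K_a+(\text{lower order in }\lambda)$, so the exponents of $\e$ and $\lambda$ have become integral and the truncation $(\,\cdot\,)_+$ onto non-negative powers of $\lambda$ used in the recursion for $R^{\mathbf{b}}_{a,K}$ is well defined.

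It is convenient to prove the following statement for all $n\ge 2$, by induction on the number $p$ of prescribed (``fixed'') insertions: the suitably normalized $n$-variable generating series of correlators carrying, besides the $n$ free insertions, $p$ further insertions prescribed by $\mathbf{b}$, equals the analogue of the right-hand side of~\eqref{npoint} in which every matrix $M$ is replaced by the corresponding resolvent $R^{\mathbf{b}}_{\cdot,\cdot}$ and one sums in addition over all ways $I_1\sqcup\dots\sqcup I_n=K'$ of distributing the $p$ fixed labels $K'$ among the $n$ resolvent factors, minus $\delta_{n,2}$ times the correction term of~\eqref{100}. Here the normalization is the one converting $M_a(\lambda/\e,1/\e)$ into $R^{\mathbf{b}}_{a,\{\}}(\lambda;\e)$ as above; one checks, exactly as in the special case $n=2$, $p=0$---which is the $m=0$ instance of~\eqref{100}, where the first summand of~\eqref{npoint} becomes $\Tr R^{\mathbf{b}}_{b,\{\}}(\lambda)R^{\mathbf{b}}_{c,\{\}}(\mu)/(\lambda-\mu)^2$---that it carries the $\delta_{k,2}$ correction of~\eqref{npoint} to the $\delta_{n,2}$ correction here, using $2m_1+m_2=m_1+l$, that $q_b+q_c=1$ on the branches $b+c=m_1$ and $b+c=m_1+l$, and that $\lambda^{q_b}\mu^{q_c}=\lambda\mu$ when $b=c=m_1$. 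The base case $p=0$ of the statement is then precisely Conjecture~\ref{cnj1}, and~\eqref{100} is its case $(n,p)=(2,m)$.

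For the inductive step $p\to p+1$, promote one prescribed insertion $\tau_{i_{k_1}}(\phi_{a_{k_1}})$ to an additional free variable $\nu$, so that $\nu$ also ranges over all descendant levels of $\phi_{a_{k_1}}$; apply the inductive hypothesis at $(n+1,p)$; and recover the $\nu$-free series by the appropriate coefficient extraction in $\nu$---summing the descendant weights $q_{a_{k_1},j}$ against the corresponding powers of $\nu$---followed by a residue at $\nu=\infty$. In that residue the cyclic position of $\nu$ matters: writing the two neighbouring denominators as $\tfrac1{(\lambda_x-\nu)(\nu-\lambda_y)}$ and using $\tfrac1{(\lambda_x-\nu)(\nu-\lambda_y)}=\tfrac1{\lambda_x-\lambda_y}\bigl(\tfrac1{\lambda_x-\nu}+\tfrac1{\nu-\lambda_y}\bigr)$ together with cyclicity of the trace reorganizes $\sum_{\sigma\in S_{n+1}/C_{n+1}}$ into $\sum_{\sigma\in S_n/C_n}$, while the descendant sum collapses the residue to the insertion, by a commutator, of the polynomial-in-$\lambda$ truncation $(\lambda^{i_{k_1}}R^{\mathbf{b}}_{a_{k_1},J})_+$ into a neighbouring resolvent; collecting over all positions of $\nu$ and over the splittings $I\sqcup J$ of the remaining labels reproduces exactly the recursion defining $R^{\mathbf{b}}_{b,K}$. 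Since the hypothesis is invoked at $n+1\ge 3$ free variables, where it carries no correction, no correction is generated for $p\ge 1$---in agreement with the $\delta_{m,0}$ in~\eqref{100}.

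Finally, \eqref{algpolygon} is the case $(a_1,i_1)=(a_2,i_2)=\dots=(a,i)$ of~\eqref{100}: once $R^{\mathbf{b}}_{b,I}$ is known to depend on $I$ only through $|I|$, the sum $\sum_{I\sqcup J=K}$ collapses to $\sum_{\ell=0}^{m}\binom{m}{\ell}$, the resolvent $R^{\mathbf{b}}_{b,I}$ becomes $R^{(a,i)}_{b,|I|}$, and one reads off the $\binom{m-1}{\ell}$-recursion stated just before the proposition. Two points require the same care as in~\cite{DY1, DY2}: (i) performing the residue/partial-fraction reorganization in the inductive step so that it matches the recursion for $R^{\mathbf{b}}_{b,K}$ term by term, including the combinatorics of the splittings $I\sqcup J$; and (ii) proving that $R^{\mathbf{b}}_{b,K}$ is well defined---independent of the order in which the labels of $K$ are peeled off, and in particular depending only on $|K|$ when all pairs $(a_{k_\ell},i_{k_\ell})$ coincide---which is a Jacobi-identity argument. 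I expect (i) to be the main obstacle; it runs entirely parallel to~\cite{DY1, DY2}, the passage from the $2\times 2$ matrix to the matrices $M_a$ affecting only the bookkeeping, not the mechanism.
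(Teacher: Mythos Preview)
Your proposal is correct and is precisely the approach the paper indicates: the proposition is stated with a~``$\,^*\,$'' (i.e., assuming Conjecture~\ref{cnj1}) and the paper's only justification is the sentence ``The following proposition follows using the arguments given in~\cite{DY1, DY2}.'' Your write-up is an explicit unpacking of that reference---the induction on the number of fixed insertions, the promotion of a fixed insertion to a free variable followed by the residue/partial-fraction reorganization that produces the commutator recursion for $R^{\mathbf{b}}_{b,K}$, and the check that the $\delta_{k,2}$-correction of~\eqref{npoint} becomes the $\delta_{m,0}$-correction of~\eqref{100} under the normalization $R^{\mathbf{b}}_{a,\{\}}=\e^{1-q_a}\lambda^{q_a}M_a(\lambda/\e,1/\e)$---all of which match the mechanism in~\cite{DY1, DY2}.
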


Using~\eqref{algpolygon} we now do concrete computations for GW invariants of $\mathbb{P}^1_{m_1,m_2}$ 
of the form 
\beq\label{polygonnumbers}
\langle\tau_{i}(\phi_a)^k\rangle_{g,d}, \quad k\ge2.
\eeq
Here, $i\ge0$ and $a=1,\dots,l-1$. 
The degree-dimension counting now reads
\beq\label{poly-degree-dimension}
2g-2+\frac{d}{\rho}=(i+q_a-1)k.
\eeq

When $m_1=m_2=1$, concrete computations for~\eqref{polygonnumbers} were carried out 
in~\cite{DY2}.

Consider the case $m_1=2, m_2=1$. For $i=0$, $\langle\tau_{0}(\phi_a)^k\rangle_{g,d}$ are primary GW invariants of $\mathbb{P}^1_{2,1}$. We obtain from~\eqref{1point}, \eqref{algpolygon} that
$$
\langle\tau_{0}(\phi_a)^k\rangle_{g,d}=\delta_{a,1}\delta_{k,1}\delta_{g,0}\delta_{d,2}-\frac14\delta_{a,1}\delta_{k,4}\delta_{g,0}\delta_{d,0}-\frac1{24}\delta_{a,2}\delta_{k,1}\delta_{g,1}\delta_{d,0}.
$$
We list in Tables \ref{P21tau11}--\ref{P21tau22} a few more invariants.

Consider the case $m_1=3, m_2=1$.
For $i=0$, we obtain from~\eqref{1point}, \eqref{algpolygon} the following 
$$
\langle\tau_{0}(\phi_a)^k\rangle_{g,d}=
\left\{
{\renewcommand{\arraystretch}{1.5}\begin{array}{ll}
1,&(a,k,g,d)=(1,1,0,1),\\
\frac13,&(a,k,g,d)=(1,3,0,0), (2,2,0,1),\\
-\frac1{27},&(a,k,g,d)=(2,6,0,0),\\
-\frac1{24},&(a,k,g,d)=(3,1,1,0),\\
0,&\text{otherwise}.
\end{array}}
\right.
$$
We list in Tables \ref{P31tau11}--\ref{P31tau32} the first few GW invariants of $\mathbb{P}^1_{3,1}$.

Consider the case $m_1=m_2=2$.
For $i=0$, we obtain from~\eqref{1point}, \eqref{algpolygon} together with a guess work that
$$
\langle\tau_{0}(\phi_a)^k\rangle_{g,d}=
\left\{
{\renewcommand{\arraystretch}{1.5}\begin{array}{ll}
-\frac14,&(a,k,g,d)=(1,3,0,0), (2,2,0,1),\\
-\frac1{24},&(a,k,g,d)=(2,1,1,0),\\
2^{k-1},&(a,k,g,d)=(2,k,0,2),\\
0,&\text{otherwise}.
\end{array}}
\right.
$$
We list in Tables \ref{P22tau11}--\ref{P22tau22} the first few GW invariants of $\mathbb{P}^1_{2,2}$.

In the above examples, we also note that 
 explicit expressions of the potentials of the corresponding Frobenius manifolds 
 can be found in~\cite{DZ} (cf.~\cite{Du2, MT}). 
 One can verify that the computations from~\eqref{string}, \eqref{100} or \eqref{algpolygon} agree with these expressions.

\begin{table}[h!]
\renewcommand{\arraystretch}{1.2}
$$
\begin{array}{|c|c|c|c|c|c|}
\hline
k&g=0&g=1&g=2&g=3&g=4\\
\Xhline{1pt}
1&0&0&0&0&0\\
\hline
2&\frac12&0&0&0&0\\
\hline
3&0&-\frac18&0&0&0\\
\hline
4&0&0&-\frac1{16}&0&0\\
\hline
5&10&0&0&0&0\\
\hline
6&0&0&0&0&0\\
\hline
7&0&0&\frac{735}{64}&0&0\\
\hline
8&1260&0&0&\frac{8625}{128}&0\\
\hline
9&0&0&0&0&0\\
\hline
10&0&0&-\frac{66465}{16}&0&0\\
\hline
11&540540&0&0&-\frac{999075}{8}&0\\
\hline
12&0&259875&0&0&-\frac{4054513925}{2048}\\
\hline
\end{array}
$$
\caption{\label{P21tau11}$\langle\tau_1(\phi_1)^k\rangle_{g,d=(4-4g+k)/3}$ for $\mathbb{P}^1_{2,1}$.}
\end{table}

\begin{table}[h!]
\renewcommand{\arraystretch}{1.2}
$$
\begin{array}{|c|c|c|c|c|c|c|}
\hline
k&g=0&g=1&g=2&g=3&g=4&g=5\\
\Xhline{1pt}
1&\frac12&0&0&0&0&0\\
\hline
2&0&0&0&0&0&0\\
\hline
3&0&\frac12&0&0&0&0\\
\hline
4&12&0&0&0&0&0\\
\hline
5&0&0&\frac12&0&0&0\\
\hline
6&0&480&0&0&0&0\\
\hline
7&6720&0&0&\frac12&0&0\\
\hline
8&0&0&17472&0&0&0\\
\hline
9&0&2016000&0&0&\frac12&0\\
\hline
10&19353600&0&0&629760&0&0\\
\hline
11&0&0&486541440&0&0&\frac12\\
\hline
12&0&23417856000&0&0&22674432&0\\
\hline
\end{array}
$$
\caption{\label{P21tau21}$\langle\tau_1(\phi_2)^k\rangle_{g,d=2(2-2g+k)/3}$ for $\mathbb{P}^1_{2,1}$.}
\end{table}

\begin{table}[h!]
\renewcommand{\arraystretch}{1.2}
$$
\begin{array}{|c|c|c|c|c|}
\hline
k&g=1&g=4&g=7&g=10\\
\Xhline{1pt}
1&\frac1{12}&0&0&0\\
\hline
2&\frac{7}{4}&0&0&0\\
\hline
3&\frac{7}{4}&0&0&0\\
\hline
4&\frac{181}{12}&-\frac{21293}{414720}&0&0\\
\hline
5&\frac{2041}{12}&-\frac{47933}{82944}&0&0\\
\hline
6&2373&-\frac{81187}{13824}&0&0\\
\hline
7&\frac{473797}{12}&-\frac{177821}{9216}&0&0\\
\hline
8&\frac{2289842}{3}&\frac{26295563}{1296}&\frac{115829496601}{7962624}&0\\
\hline
9&\frac{67260123}{4}&\frac{14166735121}{4608}&\frac{5186028997597}{7962624}&0\\
\hline
10&\frac{1247580880}{3}&\frac{5488889021}{16}&\frac{5093893075885}{248832}&0\\
\hline
11&\frac{136912202101}{12}&\frac{453026908622057}{13824}&\frac{169533298949245}{294912}&0\\
\hline
12&343895883552&\frac{103233320612411}{36}&\frac{15632457282359225}{995328}&-\frac{11131036261937986011499}{12230590464}\\
\hline
\end{array}
$$
\caption{\label{P21tau12}$\langle\tau_2(\phi_1)^k\rangle_{g,d=(4-4g+3k)/3}$ for $\mathbb{P}^1_{2,1}$. 
By~\eqref{poly-degree-dimension} these GW invariants with $g\not\equiv1 \,({\rm mod}\, 3)$ vanish.
}
\end{table}

\begin{table}[h!]
\renewcommand{\arraystretch}{1.2}
$$
\begin{array}{|c|c|c|c|c|c|c|}
\hline
k&g=0&g=1&g=2&g=3&g=4&g=5\\
\Xhline{1pt}
1& 0 & 0 & \frac7{5760} & 0 & 0 & 0\\
\hline
2& \frac{1}{4} & 0 & 0 & 0 & 0 & 0\\
\hline
3& 0 & \frac{23}{8} & 0 & 0 & 0 & 0 \\
\hline
4& 0 & 0 & \frac{195}{8} & 0 & 0 & 0 \\
\hline
5& 45 & 0 & 0 & \frac{80795}{432} & 0 & 0 \\
\hline
6& 0 & 6690 & 0 & 0 & \frac{2384437}{1728} & 0\\
\hline
7& 0 & 0 & 670425 & 0 & 0 & \frac{34611451}{3456} \\
\hline
8& 124320 & 0 & 0 & 57254960 & 0 & 0 \\
\hline
9& 0 & 80826480 & 0 & 0 & \frac{13532788570}{3} & 0  \\
\hline
10& 0 & 0 & 34059521160 & 0 & 0 & \frac{1019579947540}{3} \\
\hline
11& 1530144000 & 0 & 0 & 11864055062860 & 0 & 0  \\
\hline
\end{array}
$$
\caption{\label{P21tau22}$\langle\tau_2(\phi_2)^k\rangle_{g, \, d=4(1-g+k)/3}$ for $\mathbb{P}^1_{2,1}$.
}
\end{table}

\begin{table}[h!]
\renewcommand{\arraystretch}{1.2}
$$
\begin{array}{|c|c|c|c|c|c|}
\hline
k&g=0&g=1&g=2&g=3&g=4\\
\Xhline{1pt}
 2 & \frac{1}{2} & 0 & 0 & 0 & 0 \\
\hline
 4 & 0 & -1 & 0 & 0 & 0 \\
\hline
 6 & 40 & 0 & -\frac{67}{6} & 0 & 0 \\
\hline
 8 & 0 & \frac{2240}{3} & 0 & 0 & 0 \\
\hline
 10 & 16800 & 0 & \frac{490070}{3} & 0 & 0 \\
\hline
 12 & 0 & -6899200 & 0 & 38449565 & 0 \\
\hline
 14 & 134534400 & 0 & -8016848840 & 0 & 0 \\
\hline
 16 & 0 & 264135872000 & 0 & -\frac{272918591545600}{27} & 0 \\
\hline
 18 & -1646701056000 & 0 & 1038311341267200 & 0 & -\frac{270122401234335650}{27} \\
\hline
\end{array}
$$
\caption{\label{P31tau11}$\langle\tau_1(\phi_1)^k\rangle_{g, \, d=\frac{6-6g+k}4}$ for $\mathbb{P}^1_{3,1}$. By~\eqref{poly-degree-dimension} these GW invariants with odd $k$ vanish.
}
\end{table}

\begin{table}[h!]
\renewcommand{\arraystretch}{1.2}
$$
\begin{array}{|c|c|c|c|c|c|c|}
\hline
k&g=0&g=1&g=2&g=3&g=4&g=5\\
\Xhline{1pt}
 1 & \frac14 & 0 & 0 & 0 & 0 & 0 \\
\hline
 2 & 0 & -\frac{1}{36} & 0 & 0 & 0 & 0 \\
\hline
 3 & \frac{1}{3} & 0 & -\frac{1}{80} & 0 & 0 & 0 \\
\hline
 4 & 0 & \frac{1}{18} & 0 & 0 & 0 & 0 \\
\hline
 5 & \frac{5}{3} & 0 & \frac{251}{1296} & 0 & 0 & 0 \\
\hline
 6 & 0 & \frac{5}{9} & 0 & \frac{34573}{46656} & 0 & 0 \\
\hline
 7 & \frac{182}{9} & 0 & -\frac{3871}{1296} & 0 & 0 & 0 \\
\hline
 8 & 0 & \frac{1610}{81} & 0 & -\frac{43246}{729} & 0 & 0 \\
\hline
 9 & \frac{1400}{3} & 0 & 70 & 0 & -\frac{356307091}{559872} & 0 \\
\hline
 10 & 0 & \frac{23800}{27} & 0 & \frac{22823255}{5832} & 0 & 0 \\
\hline
 11 & \frac{160160}{9} & 0 & -\frac{1744435}{972} & 0 & \frac{125545646303}{839808} & 0 \\
\hline
 12 & 0 & \frac{1641640}{27} & 0 & -\frac{398212045}{1458} & 0 & \frac{1756207031495}{559872} \\
\hline
\end{array}
$$
\caption{\label{P31tau21}$\langle\tau_1(\phi_2)^k\rangle_{g, \,d= \frac{3-3g+k}2}$ for $\mathbb{P}^1_{3,1}$.
}
\end{table}

\begin{table}[h!]
\renewcommand{\arraystretch}{1.2}
$$
\begin{array}{|c|c|c|c|c|c|}
\hline
k&g=0&g=1&g=2&g=3&g=4\\
\Xhline{1pt}
 2 & 1 & 0 & 0 & 0 & 0 \\
\hline
 4 & 0 & 9 & 0 & 0 & 0 \\
\hline
 6 & 1215 & 0 & 81 & 0 & 0 \\
\hline
 8 & 0 & 357210 & 0 & 729 & 0 \\
\hline
 10 & 55112400 & 0 & 86113125 & 0 & 6561 \\
\hline
 12 & 0 & 114578679600 & 0 & 19797948720 & 0 \\
\hline
 14 & 17874274017600 & 0 & 176955312774240 & 0 & 4487187539835 \\
\hline
\end{array}
$$
\caption{\label{P31tau31}$\langle\tau_1(\phi_3)^k\rangle_{g, \,d=\frac{3(2-2g+k)}4}$ for $\mathbb{P}^1_{3,1}$. By~\eqref{poly-degree-dimension} these GW invariants with odd $k$ vanish.
}
\end{table}

\begin{table}[h!]
\renewcommand{\arraystretch}{1.2}
$$
\begin{array}{|c|c|c|c|c|c|}
\hline
k&g=1&g=3&g=5&g=7\\
\Xhline{1pt}
 1  & \frac{1}{8}  & 0 & 0 & 0   \\
\hline
 2  & \frac{5}{12}  & 0 & 0 & 0   \\
\hline
 3  & \frac{59}{24}  & -\frac{4003}{32256} & 0 & 0   \\
\hline
 4  & 21  & -\frac{15899}{10368} & 0 & 0   \\
\hline
 5  & \frac{5651}{24}  & -\frac{192995}{10368} & 0 & 0   \\
\hline
 6  & 3272  & -\frac{707885}{3456}  & \frac{524958355}{497664} & 0   \\
\hline
 7  & \frac{434225}{8}  & \frac{2163665}{2592}  & \frac{198414344905}{3981312} & 0   \\
\hline
 8  & \frac{3140504}{3}  & \frac{1352411795}{5184}  & \frac{324035145455}{186624} & 0\\
\hline
 9  & \frac{184143297}{8}  & \frac{37338329}{2}  & \frac{2046920979565}{36864}  & -\frac{10906153043084315}{26873856}   \\
\hline
 10  & 568369280  & \frac{1938389986145}{1728}  & \frac{36317187827375}{20736}  & -\frac{36157990087745346245}{859963392}   \\
\hline
 11  & \frac{373745013803}{24}  & \frac{6950713354145}{108}  & \frac{112462281806699825}{1990656}  & -\frac{614392003666296451475}{214990848}   \\
\hline
 12  & 468847405440  & \frac{469915449644355}{128}  & \frac{1127205334606505}{576}  & -\frac{23503283746359067242185}{143327232}  \\
\hline
\end{array}
$$
\caption{\label{P31tau12}$\langle\tau_2(\phi_1)^k\rangle_{g, \,d=\frac{3-3g}2+k}$ for $\mathbb{P}^1_{3,1}$.
By~\eqref{poly-degree-dimension} these GW invariants with even $g$ vanish.}
\end{table}

\begin{landscape}
\begin{table}[h!]
\renewcommand{\arraystretch}{1.2}
$$
\begin{array}{|c|c|c|c|c|c|c|c|}
\hline
k&g=0&g=1&g=2&g=3&g=4&g=5&g=6\\
\Xhline{1pt}
 2 & \frac{1}{16} & 0 & \frac{23}{2880} & 0 & 0 & 0 & 0 \\
\hline
 4 & 0 & \frac{13}{8} & 0 & \frac{211}{1944} & 0 & 0 & 0 \\
\hline
 6 & \frac{55}{8} & 0 & \frac{11635}{72} & 0 & \frac{227260583}{14929920} & 0 & -\frac{1328862557329}{14332723200} \\
\hline
 8 & 0 & \frac{67900}{9} & 0 & \frac{1536532499}{31104} & 0 & \frac{4957207726373}{537477120} & 0 \\
\hline
 10 & 36225 & 0 & \frac{14328728155}{1152} & 0 & \frac{152372243833523}{3981312} & 0 & \frac{209758142480576117}{12899450880} \\
\hline
 12 & 0 & \frac{4868326925}{16} & 0 & \frac{2867708306023715}{82944} & 0 & \frac{1547748390057351251}{23887872} & 0 \\
\hline
 14 & \frac{46495123675}{32} & 0 & \frac{51851459478333515}{18432} & 0 & \frac{1916980112463068253601}{11943936} & 0 & \frac{6628677510549036153630419}{30958682112} \\
\hline
\end{array}
$$
\caption{\label{P31tau22}$\langle\tau_2(\phi_2)^k\rangle_{g,d=(6-6g+5k)/4}$ for $\mathbb{P}^1_{3,1}$. 
By~\eqref{poly-degree-dimension} these GW invariants with odd $k$ vanish.}
\end{table}

\begin{table}[h!]
$$
\renewcommand{\arraystretch}{1.2}
\begin{array}{|c|c|c|c|c|c|c|}
\hline
k&g=0&g=1&g=2&g=3&g=4&g=5\\
\Xhline{1pt}
 1 & \frac16 & 0 & \frac7{5760} & 0 & 0 & 0 \\
\hline
 2 & 0 & \frac{5}{8} & 0 & 0 & 0 & 0 \\
\hline
 3 & \frac{9}{8} & 0 & \frac{237}{128} & 0 & 0 & 0 \\
\hline
 4 & 0 & 54 & 0 & \frac{1305}{256} & 0 & 0 \\
\hline
 5 & \frac{135}{2} & 0 & \frac{121797}{64} & 0 & \frac{111807}{8192} & 0 \\
\hline
 6 & 0 & \frac{112995}{8} & 0 & \frac{957987}{16} & 0 & \frac{1183815}{32768} \\
\hline
 7 & \frac{25515}{2} & 0 & \frac{265054923}{128} & 0 & \frac{3649840803}{2048} & 0 \\
\hline
 8 & 0 & \frac{15079365}{2} & 0 & \frac{33720220863}{128} & 0 & \frac{823455801}{16} \\
\hline
 9 & \frac{10333575}{2} & 0 & \frac{48543139701}{16} & 0 & \frac{127256523683625}{4096} & 0 \\
\hline
 10 & 0 & \frac{54493074405}{8} & 0 & \frac{16618432581135}{16} & 0 & \frac{57304576050330735}{16384} \\
\hline
 11 & 3682886130 & 0 & \frac{767757835806885}{128} & 0 & \frac{20792898489236643}{64} & 0 \\
\hline
 12 & 0 & 9336116339550 & 0 & \frac{1134205470126545355}{256} & 0 & \frac{12276036590917496859}{128} \\
\hline
\end{array}
$$
\caption{\label{P31tau32}$\langle\tau_2(\phi_3)^k\rangle_{g,d=3(1-g+k)/2}$ for $\mathbb{P}^1_{3,1}$.
}
\end{table}
\end{landscape}


\begin{landscape}
\begin{table}[h!]
\renewcommand{\arraystretch}{1.2}
$$
\begin{array}{|c|c|c|c|c|}
\hline
k&g=0&g=1&g=2&g=3\\
\Xhline{1pt}
2&0&0&0&0\\
\hline
 4 & 12 & 0 & -\frac{1}{16} & 0 \\
\hline
6&0&0&0&0\\
\hline
 8 & 26880 & 2520 & \frac{385}{4} & \frac{8625}{128}  \\
\hline
10&0&0&0&0\\
\hline
 12 & 558835200 & 116424000 & 821205 & -\frac{24963015}{16}  \\
\hline
14&0&0&0&0\\
\hline
 16 & 50912122060800 & 18852305472000 & 573469696800 & 27060558525 \\
\hline
\end{array}
$$
\caption{\label{P22tau11}$\langle\tau_1(\phi_1)^k\rangle_{g,d=2-2g+k/2}=\langle\tau_1(\phi_3)^k\rangle_{g,d=2-2g+k/2}$ for $\mathbb{P}^1_{2,2}$. 
By~\eqref{poly-degree-dimension} these GW invariants with odd $k$ vanish.
}
\end{table}

\begin{table}[h!]
\renewcommand{\arraystretch}{1.2}
$$
\begin{array}{|c|c|c|c|c|c|}
\hline
k&g=0&g=1&g=2&g=3&g=4\\
\Xhline{1pt}
1&0&0&0&0&0\\
\hline
 2 & 1 & \frac{1}{2} & 0 & 0 &0\\
\hline
3&0&0&0&0&0\\
\hline
 4 & 32 & 40 & \frac{1}{2} & 0 &0\\
\hline
5&0&0&0&0&0\\
\hline
 6 & 3840 & 9440 & 1456 & \frac{1}{2} &0\\
\hline
7&0&0&0&0&0\\
\hline
 8 & 1075200 & 4515840 & 2217152 & 52480 &\frac12\\
\hline
9&0&0&0&0&0\\
\hline
 10 & 557383680 & 3645573120 & 3912007680 & 501385280&1889536 \\
\hline
\end{array}
$$
\caption{\label{P22tau21}$\langle\tau_1(\phi_2)^k\rangle_{g,d=2-2g+k}$ for $\mathbb{P}^1_{2,2}$.}
\end{table}

\end{landscape}

\begin{landscape}
\begin{table}[ht]
\renewcommand{\arraystretch}{1.2}
$$
\begin{array}{|c|c|c|c|c|}
\hline
k&g=0&g=1&g=2&g=3\\
\Xhline{1pt}
2&0&0&0&0\\
\hline
 4 & 28 & \frac{340}{3} & \frac{1031}{24} & \frac{169}{864} \\
\hline
6&0&0&0&0\\
\hline
 8 & 1992640 & 31538360 & \frac{370089391}{3} & \frac{24266946095}{216} \\
\hline
10&0&0&0&0\\
\hline
 12 & 2047764586560 & 78504956006400 & 1001296376677905 & \frac{14545099547098120}{3} \\
\hline
\end{array}
$$
\caption{\label{P22tau12}$\langle\tau_2(\phi_1)^k\rangle_{g,d=2-2g+3k/2}=\langle\tau_2(\phi_3)^k\rangle_{g,d=2-2g+3k/2}$ for $\mathbb{P}^1_{2,2}$.  By~\eqref{poly-degree-dimension} these GW invariants with odd $k$ vanish.}
\end{table}

\begin{table}[ht]
\renewcommand{\arraystretch}{1.2}
$$
\begin{array}{|c|c|c|c|c|c|}
\hline
k&g=0&g=1&g=2&g=3&g=4\\
\Xhline{1pt}
 1 & \frac{1}{4} & \frac{7}{24} & \frac{7}{5760} & 0 & 0 \\
\hline
 2 & \frac{2}{3} & \frac{11}{6} & \frac{49}{288} & 0 & 0 \\
\hline
 3 & 4 & \frac{127}{6} & \frac{595}{48} & \frac{343}{3456} & 0 \\
\hline
 4 & 40 & \frac{1072}{3} & \frac{6839}{12} & \frac{4615}{54} & \frac{2401}{41472} \\
\hline
 5 & 576 & \frac{23854}{3} & \frac{230545}{9} & \frac{729965}{48} & \frac{3113561}{5184} \\
\hline
 6 & 10976 & 219776 & \frac{7322833}{6} & \frac{47704670}{27} & \frac{157790071}{384} \\
\hline
 7 & 262144 & \frac{21783992}{3} & 62940696 & \frac{38000806025}{216} & \frac{9851843729}{81} \\
\hline
 8 & 7558272 & \frac{837523456}{3} & \frac{31750669160}{9} & 16602314176 & \frac{32332834724575}{1296} \\
\hline
 9 & 256000000 & 12244336032 & 214586106112 & \frac{13986025758950}{9} & 4253712401232 \\
\hline
 10 & 9977431552 & \frac{1810030428160}{3} & 14115273880680 & \frac{1328570203735040}{9} & \frac{71028706834703261}{108} \\
\hline
 11 & 440301256704 & \frac{98998713882496}{3} & \frac{9000091860981760}{9} & 14406010071551010 & \frac{2609747845143143936}{27} \\
\hline
 12 & 21718014715904 & 1983802353647616 & \frac{227974570522490176}{3} & \frac{39255961847179264000}{27} & \frac{27763964039632169943}{2} \\
\hline
\end{array}
$$
\caption{\label{P22tau22}$\langle\tau_2(\phi_2)^k\rangle_{g,d=2-2g+2k}$ for $\mathbb{P}^1_{2,2}$.}
\end{table}
\end{landscape}

\section{Proof of Theorem~\ref{thm-GW}}\label{conclusion}
In this section we prove Theorem~\ref{thm-GW} along the line given in~\cite{DYZ} (see also~\cite{BDY3, DYZ2})
using the matrix-resolvent method~\cite{BDY1, BDY3, DY1, FYZ}.
Most time of this section we restrict to the $m_2=1$ case. 

We first review the work from~\cite{FYZ}. 
Denote by $\mathcal{A}$ the ring of polynomials of $u_{\alpha, \, ix}$, $-1\leq \alpha\leq m_1-1$, $i\ge0$. 
Recall from~\cite{Carlet} that the bigraded Toda hierarchy with $m_2=1$ is defined by
\beq
\e\frac{\p L}{\p t^a_k} = \Bigl[\bigl(L^{\frac{a}{m_1}+k}\bigr)_+, L\Bigr], \quad a=1,\dots,m_1, \, k\ge0.
\eeq
Here $L$ is the Lax operator (cf.~\eqref{definitionL}). See~\cite{Carlet} for details about the definition.  
As in~\cite{DY1, FYZ} denote by~$\mathcal{L}$ the matrix Lax operator associated to~$L$, which is given by
\beq
\mathcal{L}:=\TT+\Lambda(\lambda)+V,
\eeq
where $\TT=e^{\e\p_x}$, $\Lambda(\lambda)=-\lambda e_{1,m_1}-\sum_{i=1}^{m_1}e_{i+1,i}$, $V=\sum_{j=1}^{m_1+1}u_{m_1-j}e_{1,j}$.
The \textit{basic matrix resolvents of~$\mathcal{L}$}, denoted $R_a(\lambda)$, $a=1,\dots, m_1$, 
are defined~\cite{FYZ} as the unique elements in $\mathcal{A}[[\e]]\otimes{\rm Mat}(l\times l,\CC((\lambda^{-1})))$ satisfying:
\begin{align}
&
\TT(R_a(\lambda))(\Lambda(\lambda)+V)-(\Lambda(\lambda)+V)R_a(\lambda)=0, \label{def-MR} \\
&{\rm Tr} \, R_a(\lambda)R_b(\lambda)=m\lambda\delta_{a+b,m_1}+m_1\lambda^2\delta_{a+b,2m_1}, 
\quad {\rm Tr} \, R_a(\lambda)=m_1\delta_{a,m_1}\lambda,\label{trace-MR}  \\
&R_a(\lambda)=\Lambda_a(\lambda)+\mbox{lower order terms with respect to }  \overline{\rm deg},\label{degree-MR}\\
&R_a(\lambda) \mbox{ is homogenous of degree } a \mbox{ with respect to }\overline{\rm deg}^e,\label{hom-MR}
\end{align}
where $a, b=1,\dots,m_1$, and 
$\Lambda_a(\lambda):=(-\Lambda(\lambda))^a$. Here the gradation $\overline{\rm deg}$ on ${\rm Mat}(l\times l,\mathbb{C}((\lambda^{-1})))$ is defined by 
assigning the degrees
\beq
\overline{\rm deg} \, \lambda=m_1,\quad \overline{\rm deg} \, e_{i,j}=i-j.
\eeq
and its extention $\overline{\rm deg}^e$ on $\mathcal{A}[[\e]]\otimes{\rm Mat}(l\times l,\mathbb{C}((\lambda^{-1})))$ is defined by further assigning
\beq
\overline{\rm deg}^e \e=0, 
\quad \overline{\rm deg}^e \p_x=0, 
\quad \overline{\rm deg}^e u_{\alpha} = m_1-\alpha ,\quad -1\leq \alpha\leq m_1-1.
\eeq
\begin{lemma}
The basic matrix resolvents satisfy 
\beq\label{power-MR}
R_a(\lambda)=R_1(\lambda)^a, \quad a=1,\dots,m_1.
\eeq
\end{lemma}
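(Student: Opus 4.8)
\emph{Strategy.} The plan is to show that, for each $a\in\{1,\dots,m_1\}$, the matrix $R_1(\lambda)^a$ satisfies all four defining conditions \eqref{def-MR}--\eqref{hom-MR} of the basic matrix resolvent $R_a(\lambda)$, and then to conclude by the uniqueness built into that definition. Write $\mathcal M:=\Lambda(\lambda)+V$, so that \eqref{def-MR} for $R_1(\lambda)$ reads $\TT(R_1(\lambda))\,\mathcal M=\mathcal M\,R_1(\lambda)$. Since $\TT$ is a ring automorphism we have $\TT(R_1(\lambda)^a)=\TT(R_1(\lambda))^a$, and commuting $\mathcal M$ to the right past each of the $a$ factors by repeated use of the $a=1$ relation turns $\TT(R_1(\lambda))^a\,\mathcal M$ into $\mathcal M\,R_1(\lambda)^a$; hence \eqref{def-MR} holds for $R_1(\lambda)^a$.

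Conditions \eqref{degree-MR} and \eqref{hom-MR} are then immediate. Indeed $R_1(\lambda)$ is $\overline{\rm deg}^e$-homogeneous of degree $1$, so $R_1(\lambda)^a$ is homogeneous of degree $a$; and since every $u_\alpha$ (with $-1\le\alpha\le m_1-1$) has $\overline{\rm deg}^e=m_1-\alpha>0$, homogeneity together with \eqref{degree-MR} forces $R_1(\lambda)-\Lambda_1(\lambda)$ to consist of monomials each containing at least one $u$-variable. The same then holds for $R_1(\lambda)^a-\Lambda_1(\lambda)^a$, so $R_1(\lambda)^a=\Lambda_a(\lambda)+(\text{terms of lower }\overline{\rm deg})$.

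The substantive point is the trace normalization \eqref{trace-MR}. It suffices to prove that $\Tr\bigl(R_1(\lambda)^k\bigr)=m_1\lambda^{k/m_1}$ when $m_1\mid k$ and $\Tr\bigl(R_1(\lambda)^k\bigr)=0$ otherwise: granting this, $R_1(\lambda)^aR_1(\lambda)^b=R_1(\lambda)^{a+b}$ with $2\le a+b\le 2m_1$ yields exactly $\Tr\bigl(R_aR_b\bigr)=m_1\lambda\,\delta_{a+b,m_1}+m_1\lambda^2\delta_{a+b,2m_1}$ and $\Tr R_a=m_1\lambda\,\delta_{a,m_1}$. I would prove the claim in two steps. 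First, $\Tr\bigl(R_1(\lambda)^k\bigr)$ is a constant, i.e.\ lies in $\CC[[\e]]\otimes\CC((\lambda^{-1}))$: a direct computation gives $\det\mathcal M=u_{-m_2}\ne0$, so $\mathcal M$ is invertible over $\mathrm{Frac}(\mathcal A)$, whence from $\TT(R_1(\lambda)^k)\,\mathcal M=\mathcal M\,R_1(\lambda)^k$ one gets $\TT\bigl(\Tr R_1(\lambda)^k\bigr)=\Tr\bigl(\mathcal M\,R_1(\lambda)^k\,\mathcal M^{-1}\bigr)=\Tr R_1(\lambda)^k$; since $\Tr R_1(\lambda)^k\in\mathcal A[[\e]]\otimes\CC((\lambda^{-1}))$ and $\Ker(\p_x|_{\mathcal A})=\CC$, this forces $\Tr R_1(\lambda)^k$ to be independent of $x$ and of all $u_{\alpha,ix}$. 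Second, by \eqref{degree-MR} the difference $\Tr\bigl(R_1(\lambda)^k\bigr)-\Tr\bigl(\Lambda_1(\lambda)^k\bigr)$ is built from monomials each containing a $u$-variable, but the left-hand term contains none, so $\Tr R_1(\lambda)^k=\Tr\Lambda_1(\lambda)^k$. With $\Lambda_1(\lambda)=-\Lambda(\lambda)=\lambda e_{1,m_1}+\sum_{i=1}^{m_1}e_{i+1,i}$ a direct matrix computation gives $\Lambda_1(\lambda)^{jm_1}=\lambda^{j}\sum_{i=1}^{m_1}e_{i,i}+\lambda^{j-1}e_{l,1}$ for $j\ge1$, hence $\Tr\Lambda_1(\lambda)^{jm_1}=m_1\lambda^{j}$, while $\Lambda_1(\lambda)^k$ has zero diagonal when $m_1\nmid k$. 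This establishes the claim, hence \eqref{trace-MR} for $R_1(\lambda)^a$, and \eqref{power-MR} follows from uniqueness.

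The step I expect to be the main obstacle is the first half of the trace argument: recognizing that the traces $\Tr R_1(\lambda)^k$ are genuine scalars (elements of $\CC[[\e]]\otimes\CC((\lambda^{-1}))$) rather than nontrivial differential polynomials. Once this is secured, the two gradations $\overline{\rm deg}$ and $\overline{\rm deg}^e$ determine the scalars with no further effort, and the remaining verifications are purely formal.
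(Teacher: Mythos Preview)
Your proof is correct and follows the same route as the paper's one-line proof, which simply invokes the uniqueness in the definition of the basic matrix resolvents; you have supplied the omitted verifications that $R_1(\lambda)^a$ satisfies \eqref{def-MR}--\eqref{hom-MR}. The main nontrivial point---that $\Tr R_1(\lambda)^k$ is a constant and hence equals $\Tr\Lambda_1(\lambda)^k$---is handled correctly via the invertibility of $\mathcal M$ over $\mathrm{Frac}(\mathcal A)$ together with $\Ker(\partial_x|_{\mathcal A})=\CC$.
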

\begin{proof}
Follows from the above-mentioned uniqueness for the basic matrix resolvents.
\end{proof}

\begin{lemma}\label{lemma-power}
The basic resolvents $R_a(\lambda)$ as elements in $\mathcal{A}[[\e]]\otimes{\rm Mat}(l\times l,\CC((\lambda^{-1})))$ can 
also be uniquely determined by~\eqref{def-MR}--\eqref{degree-MR} and~\eqref{power-MR}.
\end{lemma}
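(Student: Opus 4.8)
The plan is to derive the new characterization from the one already recorded in~\eqref{def-MR}--\eqref{hom-MR}, which defines the $R_a$ and in particular gives their uniqueness. By the preceding lemma the $R_a(\lambda)$ satisfy~\eqref{power-MR}, so only the uniqueness assertion needs proof: suppose $\widetilde R_a(\lambda)\in\mathcal{A}[[\e]]\otimes{\rm Mat}(l\times l,\CC((\lambda^{-1})))$, $a=1,\dots,m_1$, satisfy~\eqref{def-MR}, \eqref{trace-MR}, \eqref{degree-MR} together with $\widetilde R_a=\widetilde R_1^{\,a}$; I must show $\widetilde R_a=R_a$. Since $\widetilde R_a=\widetilde R_1^{\,a}$ and $R_a=R_1^{\,a}$, this reduces to proving $\widetilde R_1=R_1$. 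First I would note that, via~\eqref{power-MR}, conditions~\eqref{trace-MR} for the $\widetilde R_a$ collapse to constraints on $\widetilde R_1$ alone, fixing $\Tr\widetilde R_1^{\,k}=m_1\lambda\,\delta_{k,m_1}+m_1\lambda^2\,\delta_{k,2m_1}$ for $k=1,\dots,2m_1$; since $2m_1\ge l$ these include the first $l$ power sums, so by Newton's identities the characteristic polynomial of $\widetilde R_1$ equals that of $R_1$, which a direct computation identifies as $t\bigl(t^{m_1}-\lambda\bigr)$.

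Next, set $D:=\widetilde R_1-R_1$. Equation~\eqref{def-MR} is $\CC((\lambda^{-1}))[[\e]]$-linear in its unknown, so $D$ is again a solution of~\eqref{def-MR}; and by~\eqref{degree-MR} both $\widetilde R_1$ and $R_1$ differ from $\Lambda_1(\lambda)$ by terms of $\overline{\rm deg}$-order $<1$, so $\overline{\rm deg}\,D<1$. Here I would invoke the description, as in~\cite{FYZ}, of the solution space of~\eqref{def-MR} inside $\mathcal{A}[[\e]]\otimes{\rm Mat}(l\times l,\CC((\lambda^{-1})))$ as the free $\CC((\lambda^{-1}))[[\e]]$-module of rank $l$ spanned by $I,R_1,R_1^2,\dots,R_1^{l-1}$. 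Writing $D=\sum_{j=0}^{l-1}c_j(\lambda,\e)R_1^{\,j}$ with $c_j\in\CC((\lambda^{-1}))[[\e]]$ and using $\overline{\rm deg}\,R_1^{\,j}=j$, a short analysis of $\overline{\rm deg}$-weights gives $\overline{\rm deg}\,(c_jR_1^{\,j})<1$ for every $j$; the only possible cancellation in the top component is between the $I$-term and the $R_1^{m_1}$-term, and it is excluded by inspecting the $(l,l)$-entry, where $\Lambda_1^{m_1}$ vanishes while $I$ does not. Consequently $\widetilde R_1=P(R_1)$ for the polynomial $P(t)=t+\sum_{j=0}^{l-1}c_jt^{\,j}$ of $t$-degree $\le l-1$, with $P(t)-t$ of $\overline{\rm deg}^e$-weight $<1$ after substituting any eigenvalue of $R_1$.

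To finish, I would use that $t(t^{m_1}-\lambda)$ is separable over $\CC((\lambda^{-1}))$, with the $l$ distinct roots $0$ and $\xi_{m_1}^{\,j}\lambda^{1/m_1}$, $j=0,\dots,m_1-1$, all $\overline{\rm deg}^e$-homogeneous of degree $1$ and with pairwise nonzero differences again of $\overline{\rm deg}^e$-degree $1$. Since $\widetilde R_1=P(R_1)$ has characteristic polynomial $\prod_\mu(t-P(\mu))$, which must coincide with $\prod_\mu(t-\mu)$, and since $P(\mu)-\mu$ has $\overline{\rm deg}^e$-weight $<1$ while no nonzero difference $\mu'-\mu$ does, we conclude $P(\mu)=\mu$ for each root $\mu$; as $\deg(P(t)-t)\le l-1$ and there are $l$ distinct roots, $P(t)=t$, i.e.\ $D=0$, so $\widetilde R_1=R_1$ and $\widetilde R_a=\widetilde R_1^{\,a}=R_1^{\,a}=R_a$. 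The step I expect to be the main obstacle is the appeal to the module structure of solutions of~\eqref{def-MR} together with the $\overline{\rm deg}$-bookkeeping that locates $D$ among the powers of $R_1$; a self-contained alternative is to rerun for $\widetilde R_1$ the order-by-order (in the $\overline{\rm deg}$-filtration) solution of~\eqref{def-MR} under~\eqref{degree-MR} and~\eqref{trace-MR} that produces $R_1$, verifying that at each layer no freedom survives beyond what the normalizations fix — which in addition recovers the homogeneity~\eqref{hom-MR} of $R_1$ as a byproduct.
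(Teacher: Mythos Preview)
Your argument is correct and follows the paper's route closely: both reduce to $\widetilde R_1=R_1$, invoke the resolvent-manifold description from~\cite{FYZ} to write $\widetilde R_1$ as a polynomial $P$ in $R_1$ (the paper after conjugating by $e^U$ to work with $\Lambda_1$, you directly via the module of solutions of~\eqref{def-MR}), and then force $P(t)=t$ by an eigenvalue-permutation argument combined with the degree constraint~\eqref{degree-MR}. Your version is somewhat more explicit---making Newton's identities and the $\overline{\deg}$-weight bound on $P(\mu)-\mu$ visible, and treating $m_1=1$ uniformly rather than citing~\cite{DY1}---but the underlying strategy is the same.
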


\begin{proof}
When $m_1=1$, the statement was known in~\cite{DY1}.
We can assume that $m_1\ge2$.  
Suppose that $\widetilde R_a(\lambda)$, $a=1,\dots,m_1$, satisfy~\eqref{def-MR}--\eqref{degree-MR} and~\eqref{power-MR}. 
It suffices to prove that $\widetilde R_1(\lambda)=R_1(\lambda)$.
Recall that the notion of resolvent manifold~\cite{BDY3,DS, FYZ} 
ensures the existence of a matrix~$U$ such that $R_a=e^{U}\Lambda_a(\lambda)e^{-U}$ and that 
\beq
e^{-U}\widetilde R_1(\lambda)e^U=:\widetilde \Lambda_1(\lambda)\in
{\rm Span}_{\CC((\lambda^{-1}))}\{I_{m_1+1},
\Lambda_1(\lambda),\dots,\Lambda_{m_1}(\lambda)\}
\eeq
(see~\cite[Lemmas~3.3--3.4]{FYZ} for the detail).
Since $\Lambda_a(\lambda)=(-\Lambda(\lambda))^a, a=1,\dots,m_1$,
we can write
\beq\label{polytilde}
\widetilde\Lambda_1(\lambda)=
P(\Lambda_1(\lambda)),\quad 
P(y):=\sum_{a=0}^{m_1}p_a(\lambda)y^a,
\eeq
where $p_a(\lambda)\in\CC((\lambda^{-1})), a=0,\dots,m_1$.
By a direct computation we know that $\Lambda_1(\lambda)$ has 
the distinct eigenvalues $0$, $\xi_{m_1}^{j-1}\lambda^{\frac1{m_1}}$, $j=1,\dots,m_1$.
Here $\xi_{m_1}$ denotes the $m_1$th root of unity as in~\eqref{rootofunity}.
Using~\eqref{power-MR}, we find that $\widetilde \Lambda_1(\lambda)$ has the same set of eigenvalues, 
namely, the polynomial $P(y)$ introduces an $l$-permutation on $\{0, \xi_{m_1}^{0}\lambda^{\frac1{m_1}},\dots,\xi_{m_1}^{m_1-1}\lambda^{\frac1{m_1}}\}$.
Since $m_1\ge2$ and since~\eqref{degree-MR}, we find that $P(y)\equiv y$. The lemma is proved. 
\end{proof}

Let $(u_{-1}(x,\mathbf{t};\e),\dots,u_{m_1-1}(x,\mathbf{t};\e))$
 be an arbitrary solution to the bigraded Toda hierarchy, and 
  $R_a(\lambda;x,\mathbf{t};\e)$ the basic matrix resolvents $R_a(\lambda)$ evaluated at this solution. 
 Here ${\bf t}=(t^a_k)_{a=1,\dots,m_1,\,k\ge0}$.
It was shown in~\cite[Lemma 1.7]{FYZ} that there exists a function $\tau(x,\mathbf{t};\e)$, called 
the {\it tau-function of the solution $(u_{-1}(x,\mathbf{t};\e),\dots,u_{m_1-1}(x,\mathbf{t};\e))$}, satisfying 
\begin{align}
&\sum_{i,j\ge0}\frac{\e^2\frac{\partial^2\log\tau(x,\mathbf{t};\e)}{\partial t^a_i\partial t^b_j}}{\lambda^{i+1}\mu^{j+1}}=\frac{\Tr \, R_a(\lambda;x,\mathbf{t};\e)R_b(\mu;x,\mathbf{t};\e)}{(\lambda-\mu)^2}-\frac{(a\lambda+b\mu)\delta_{a+b,m_1}+m_1\lambda\mu\delta_{a+b,2m_1}}{(\lambda-\mu)^2},\label{69}\\
&\delta_{a,m_1}+\sum_{i\ge0}\frac{\e}{\lambda^{i+1}} (\TT-1) 
\biggl(\frac{\partial \log\tau(x,\mathbf{t};\e)}{\partial t^a_i}\biggr)=(R_a(\lambda;x+\e,\mathbf{t};\e))_{m_1+1,1},\label{70}\\
&\frac{\tau(x+\e,\mathbf{t};\e)\tau(x-\e,\mathbf{t};\e)}{\tau(x,\mathbf{t};\e)^2}=u_{-1}(x,\mathbf{t};\e).\label{71}
\end{align}
Here $a,b=1,\dots,m_1$.
The function 
$\tau(x,\mathbf{t};\e)$ is uniquely determined by the solution up to multiplying by the exponential of a linear function of~$x, {\bf t}$.

Before giving the proof of Theorem~\ref{thm-GW}, we do a further preparation in the next lemma. 

We denote the basic matrix resolvents of $\mathcal{L}$ evaluated at the solution corresponding to GW invariants of $\mathbb{P}^1_{m_1,m_2}$
by $\mathcal{R}^{\rm top}_a(\lambda; x,\bt;\e)$. Recall that this solution can be determined by the 
 initial data \eqref{BTH-initial1}, \eqref{BTH-initial2} at $\mathbf{t}=\mathbf{0}$. 
 Here we note that the indeterminates $T^a_i$ and $t^a_i$ are related by $T^a_i=q_{a,i} t^a_i$, $a\ge1$, $i\ge0$.
 
 Similar to~\cite{BDY3, DYZ, DYZ2} let us prove the following lemma. 
\begin{lemma} \label{bispRM}
For $a=1,\dots,m_1$, we have
\beq\label{MR-TE}
\mathcal{R}^{\rm top}_a(\lambda; x,\bt={\bf 0};\e) = \e^{1-\frac a{m_1}} \lambda^{\frac{a}{m_1}}M_a\Bigl(\frac{\lambda-x}{\e},\frac{1}\e\Bigr), 
\eeq
where 
$M_a(z,s)$ are the unique formal solutions to the TDE obtained in Theorem~\ref{prop1} with $m_2=1$. 
\end{lemma}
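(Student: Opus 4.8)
The plan is to verify that the right-hand side of \eqref{MR-TE} obeys the four conditions \eqref{def-MR}--\eqref{degree-MR} and \eqref{power-MR} that, by Lemma~\ref{lemma-power}, characterize the basic matrix resolvent $\mathcal{R}^{\rm top}_a$, and then conclude by uniqueness. First I would compute $\Lambda(\lambda)+V$ at $\bt=\mathbf 0$ in the $m_2=1$ case. By \eqref{BTH-initial1}--\eqref{BTH-initial2} we have $u_0=x+\tfrac\e2$, $u_{-1}=1$, and $u_1=\dots=u_{m_1-1}=0$, so that $V=(x+\tfrac\e2)e_{1,m_1}+e_{1,l}$; since $\sum_{i=1}^{m_1}e_{i+1,i}=\sum_{i=2}^{l}e_{i,i-1}$ (recall $l=m_1+1$), this yields the key identity $\Lambda(\lambda)+V=-\e\,W\!\bigl(\tfrac{\lambda-x}{\e},\tfrac1\e\bigr)$, with $W$ as in \eqref{Q}. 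Next, since $\TT=e^{\e\p_x}$ acts on any expression of the form $c(\lambda,\e)\,N\!\bigl(\tfrac{\lambda-x}{\e},\tfrac1\e\bigr)$ (with $c$ independent of $x$) exactly as the shift $z\mapsto z-1$ in $z=\tfrac{\lambda-x}{\e}$, equation \eqref{def-MR} for such an expression is \emph{equivalent} to the TDE \eqref{TE} for $N(z,s)$. Hence $\widetilde{\mathcal R}_a:=\e^{1-a/m_1}\lambda^{a/m_1}M_a\!\bigl(\tfrac{\lambda-x}{\e},\tfrac1\e\bigr)$ satisfies \eqref{def-MR} by Theorem~\ref{prop1}.

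It remains to match the normalizations. For the traces \eqref{trace-MR} I would use Proposition~\ref{thm-form} together with Corollary~\ref{property}: writing $M_a=\Psi P_a\Psi^{-1}$ one gets $\Tr M_a=m_1\delta_{a,m_1}$ and $\Tr M_aM_b=\Tr P_aP_b=m_1 s\,\delta_{a+b,m_1}+m_1\,\delta_{a+b,2m_1}$; specializing $s=\tfrac1\e$ and restoring the scalar prefactor $\e^{1-a/m_1}\lambda^{a/m_1}$ reproduces \eqref{trace-MR} verbatim. The power relation \eqref{power-MR} is immediate from \eqref{Mpower}, namely $M_a=s^{1-a}M_1^a$ for $a\le m_1$: with $s=\tfrac1\e$ this gives $\widetilde{\mathcal R}_a=\widetilde{\mathcal R}_1^{\,a}$. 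For the degree condition \eqref{degree-MR} I would use the normalization \eqref{form}, $M_a(z,s)=z^{1-a/m_1}(K_a+O(z^{-1}))$, which yields $\widetilde{\mathcal R}_a=\lambda^{a/m_1}(\lambda-x)^{1-a/m_1}K_a+(\text{lower order in }\lambda)=\lambda K_a+(\text{lower order})$; on the other hand, writing $-\Lambda(\lambda)=\lambda e_{1,m_1}+N$ with $N=\sum_{i=1}^{m_1}e_{i+1,i}$, the coefficient of $\lambda$ in $\Lambda_a(\lambda)=(-\Lambda(\lambda))^a$ is $\sum_{p+q=a-1}N^pe_{1,m_1}N^q=\sum_{j=1}^{a}e_{j,m_1-a+j}=K_a$, so $\widetilde{\mathcal R}_a=\Lambda_a(\lambda)+(\text{terms of lower }\overline{\rm deg})$. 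Then Lemma~\ref{lemma-power} gives $\mathcal{R}^{\rm top}_a(\lambda;x,\bt=\mathbf 0;\e)=\widetilde{\mathcal R}_a$, which is \eqref{MR-TE}.

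The step I expect to be the main obstacle is the verification of \eqref{degree-MR}: one has to control the entire $\overline{\rm deg}$-top-degree part of $\widetilde{\mathcal R}_a$ after the substitution, not merely its leading power of $\lambda$, keeping careful track of the roles of $x$ (which effectively carries $\overline{\rm deg}=m_1$, like $\lambda$) and of $\e$ (which carries $\overline{\rm deg}=0$), and of the fact that for $a\le m_1$ the matrix $\Lambda_a(\lambda)$ is affine in $\lambda$, so corrections of the form $cK_a$ are automatically of strictly lower $\overline{\rm deg}$. An alternative and perhaps cleaner route, which I would fall back on if the bookkeeping gets heavy, is to verify the homogeneity \eqref{hom-MR} of $\widetilde{\mathcal R}_a$ directly, extract only the single leading coefficient of $\e^{a/m_1-1}\lambda^{-a/m_1}\mathcal{R}^{\rm top}_a$ viewed as a formal series in $z^{-1}$, and then apply Theorem~\ref{prop1} (in place of Lemma~\ref{lemma-power}) to this TDE solution; this mirrors the analogous lemmas in \cite{BDY3, DYZ, DYZ2}. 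Once \eqref{MR-TE} is in hand, the $m_2=1$ case of Conjecture~\ref{cnj1}, hence of Theorem~\ref{thm-GW}, will follow by comparing \eqref{npoint} and \eqref{1point} with \eqref{69} and \eqref{70} under $T^a_i=q_{a,i}t^a_i$, and the $m_1=1$ case will then follow from $\mathbb{P}^1_{m_1,m_2}\cong\mathbb{P}^1_{m_2,m_1}$ via Corollary~\ref{cor-symm2}.
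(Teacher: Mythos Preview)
Your proposal is correct and follows essentially the same approach as the paper: the paper's proof is the terse sentence ``By using the TDE~\eqref{TE}, Corollary~\ref{property}, Proposition~\ref{thm-form}, we see that the right-hand side of~\eqref{MR-TE} satisfies~\eqref{def-MR}, \eqref{trace-MR}, \eqref{degree-MR} and~\eqref{power-MR}. The statement then follows from Lemma~\ref{lemma-power},'' and your proposal is precisely a fleshing-out of these verifications (including the identification $\Lambda(\lambda)+V=-\e\,W\bigl(\tfrac{\lambda-x}{\e},\tfrac1\e\bigr)$, the trace computation via $\Tr P_aP_b$, and the power relation via~\eqref{Mpower}). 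Your flagging of~\eqref{degree-MR} as the step requiring the most bookkeeping is apt; the paper does not spell this out either.
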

\begin{proof}
By using the TDE~\eqref{TE}, Corollary~\ref{property}, Proposition~\ref{thm-form}, we 
see that the right-hand side of~\eqref{MR-TE} satisfies~\eqref{def-MR}, \eqref{trace-MR}, \eqref{degree-MR} and~\eqref{power-MR}.
The statement then follows from Lemma~\ref{lemma-power}.
\end{proof}
\begin{remark}
Lemma~\ref{bispRM} tells that the basic matrix resolvents $\mathcal{R}^{\rm top}_a(\lambda; x,\bt={\bf 0};\e)$ have the $M$-bispectrality, which confirms a conjecture in \cite[Section 6.1]{DYZ} for the model under consideration.
\end{remark}

\begin{proof}[Proof of Theorem~\ref{thm-GW}]
For the case when $m_2=1$, 
from~\eqref{71} we see that 
the tau-structure defined in~\cite{FYZ} and that defined in~\cite{Carlet}, excluding higher logarithmic flows, can only possibly differ by 
a sequence of additive constants. So the tau-function defined by~\eqref{69}--\eqref{71} and 
the one in~\cite{Carlet} can only possibly differ by multiplying by the 
exponential of a quadratic function in $x,\bt$.
The validity of Conjecture~\ref{cnj1} with $m_2=1$ and $k\ge3$ then 
 follows from the result of~\cite{CL}, Lemma~\ref{bispRM} and~\cite[Proposition 1.6]{FYZ}. 
 It follows from~\eqref{hom-MR} that the tau-structure defined in~\cite{FYZ}
 is homogeneous with respect to $\overline{\rm deg}^e$, and it is easy to verify that the tau-structure 
 (excluding higher logarithmic flows) from the definition of~\cite{Carlet}
 is homogeneous of the same degree (which is nonzero except for the already-matched $\log u_{-1}$) with respect to $\overline{\rm deg}^e$. 
 This, together with the fact that the two tau-structures 
 are both homogeneous of degrees~0 with respect to $\deg$, leads to the validity of Conjecture~\ref{cnj1} with $m_2=1$
 (the variable $Q$ is recovered by the degree-dimension matching).
 Here
 $\deg \e:=1$ and $\deg \p_x:=-1$.
By Corollary~\ref{cor-symm2} this validity gives the validity of Conjecture~\ref{cnj1} with $m_1=1$.
\end{proof}

\end{document}